\newtheorem{theorem}{Theorem}[section]
\newtheorem{corollary}[theorem]{Corollary}
\newtheorem{proposition}[theorem]{Proposition}
\theoremstyle{definition}
\newtheorem{definition}[theorem]{Definition}
\theoremstyle{remark}
\newtheorem{remark}[theorem]{Remark}
\newtheorem{observation}[theorem]{Observation}
\newcommand{\I}{{\mathds {1}}}
\newcommand{\cA}{{\mathcal A}}
\newcommand{\cD}{{\mathcal D}}
\newcommand{\cH}{{\mathcal H}}
\newcommand{\cK}{{\mathcal K}}
\newcommand{\cM}{{\mathcal M}}
\newcommand{\qM}{{\mathfrak M}}
\newcommand{\qN}{{\mathfrak N}}
\newcommand{\cN}{{\mathcal N}}
\newcommand{\cP}{{\mathcal P}}
\newcommand{\cO}{{\mathcal O}}
\newcommand{\cR}{{\mathcal R}}
\newcommand{\bC}{{\mathbb{C}}}
\newcommand{\Rn}{{\rm I\!R}} 
\newcommand{\Nn}{{\rm I\!N}} 
\newcommand{\Cn}{{\setbox0=\hbox{
$\displaystyle\rm C$}\hbox{\hbox
to0pt{\kern0.6\wd0\vrule height0.9\ht0\hss}\box0}}} 
\numberwithin{equation}{section}
\newcommand{\hdelta}{\tilde{\Delta}_{\omega}}
\newcommand{\jed}{{\mathbb{I}}}
\begin{document}

\title{Integral and Differential structures for quantum field theory}

\author{L. E. Labuschagne}
\address{DSI-NRF CoE in Math. and Stat. Sci,\\ Unit for BMI,\\ Internal Box 209, School of Math \& Stat. Sci.\\
NWU, PVT. BAG X6001, 2520 Potchefstroom\\ South Africa}
\email{Louis.Labuschagne@nwu.ac.za}

\author{W. A. Majewski}
\address{DSI-NRF CoE in Math. and Stat. Sci,\\ Unit for BMI,\\ Internal Box 209, School of Math. \& Stat. Sci.\\
NWU, PVT. BAG X6001, 2520 Potchefstroom\\ South Africa\newline 
and Institute of Theoretical Physics and Astrophysics, Gda{\'n}sk University, 
Wita Stwo\-sza~57, 80-952 Gda{\'n}sk, Poland} \email{fizwam@univ.gda.pl}

\date{\today}
\subjclass[2010]{81T05, 46L51, 47L90 (primary); 46E30, 58A05,46L52, (secondary)}

\keywords{local quantum field theory, non-commutative measure and integration, Orlicz spaces, derivations, non-commutative geometry}

\thanks{The contribution of L. E. Labuschagne is based on research partially supported by the National Research Foundation (IPRR Grant 96128). Any opinion, findings and conclusions or recommendations expressed in this material, are those of the author, and therefore the NRF do not accept any liability in regard thereto.}

\begin{abstract}
The aim of this work is to firstly demonstrate the efficacy of the recently proposed Orlicz space formalism for Quantum theory \cite{ML}, and secondly to show how noncommutative differential structures may naturally be incorporated into this framework. To start off with we specifically propose regularity conditions which in the context of local algebras corresponding to Minkowski space, ensure good behaviour of field operators as observables, and then show that fields obtained by the Osterwalder-Schrader reconstruction theorem are regular in this sense. This complements earlier work by Buchholz, Driessler, Summers and Wichman, etc, on generalized $H$-bounds. The pair of Orlicz spaces we explicitly use for this purpose, are respectively built on the exponential function (for the description of regular field operators) and on an entropic type function (for the description of the corresponding states). This formalism has been shown to be well suited to a description of quantum statistical mechanics, and in the present work we show that it is also a very useful and elegant tool for Quantum Field Theory. We then introduce the class of tangentially conditioned algebras, which is a large class of local algebras corresponding to globally hyperbolic Lorentzian manifolds that locally ``look like'' the local algebras of Minkowski space. On the one hand this ensures that at a local level, the Orlicz space formalism discussed above is also relevant for a much more general class of local algebras. On the other hand, the structure of this class of algebras, allows for the development of a non-commutative differential geometric structure along the lines of the du Bois-Violette approach to such a theory. In this way we obtain a complete depiction: integrable structures based on local algebras provide a static setting for an analysis of Quantum Field Theory and an effective tool for describing regular behaviour of field operators, whereas differentiable structures posit indispensable tools for a description of equations of motion.
\end{abstract}

\maketitle

\noindent
\section{Preliminaries; some basic ideas derived from QFT}
This paper may roughly be divided into two major parts. In sections 1 and 2 we explore how noncommutative integration theory may be used to harmonise competing formalisms for quantum field theory. Section 3 represent an interlude where we investigate tangential phenomena for local algebras on Lorentzian manifolds. This section also serves as a ``bridge'' joining the two major parts of the paper, the second of which is devoted to the differential structure of local algebras, and which is contained in section 4. 

These two parts are strongly related. We recall (see the discussions on the ``germs'' of a theory in Haag's book \cite[Section VIII.1, page 326]{haag}) that in local quantum field theory the structure describing the relation between the family of open subsets of space-time $\mathbb{M}$ and observables of the theories is commonly referred to as a presheaf. In particular, the notion of presheaf is naturally related to the net structure of algebras. This serves as the first pillar of the first part. The second pillar of that part is provided by noncommutative integration theory. Turning to the second part of the paper we follow the idea that Quantum Field Theory is strictly local in the sense that basic information is associated with small neighbourhoods of spacetime points. In particular the notion of germs referred to above stems from the strict locality of quantum fields, and as such is a concept that is at the heart of the notion of tangent spaces. So it seems natural to complement the development of integral structures by the development of differential structures related to more general manifolds which only locally ``look like'' $\mathbb{M}$''. Such differential structures of course form indispensable tools for the description of time evolution of quantum systems in these very general contexts.

With each of these objectives a measure of revision is required in order to set the results achieved in their proper context. Generally, it would seem that in Quantum Mechanics there are two schemes for a description of physical systems, cf. \cite{Bor1}. The first method uses bounded operators. The idea of introducing the norm topology on the set of observables was strongly advocated by I. Segal \cite{Segal}. To argue in favor of this idea one can say that in a laboratory a physicist deals with bounded functions of observables only! However, as it was already remarked by Borchers \cite{Bor1}, in this method \textit{``some detailed information about a physical system is usually lost''}. Furthermore, this scheme admits ``non-physical states'' having badly defined entropy, see \cite{Maj1} and the references given there.

The second method uses unbounded operators. The motivation for this method can be taken from representations of canonical commutation relations, Wightman's formulation of quantum field theory and the theory of Lie algebras. Although mathematical aspects of algebras of unbounded operators have been analyzed in much detail, see \cite{AIT}, \cite{Schmud}, \cite{Bag}, it is well known that formal calculations can be misleading; see Section VIII.5 in \cite{RSI}. 

In section 2, we will argue that through the addition of fairly mild regularity restrictions, the field operators may in a natural way be realised as part of the (non-commutative) integration theory of the local algebras. Besides other technical conditions, this scheme relies on the selection of ``more'' regular unbounded operators, where ``more'' regular means conditions which ensure the so-called $\tau$-measurability of the field operators (see the following pages for definitions and details). Our objective in this paper, is therefore not to analyse some specific model, but rather to on the one hand introduce this scheme, and on the other to show how various aspects of Quantum Field Theory may be harmonised within this framework. 

The algebraic approach to relativistic quantum field theory was formulated in the sixties by R. Haag, D. Kastler, H. Araki, H. J. Borchers and others, see \cite{haag}, \cite{araki}, \cite{BH}.

The basic object of this approach is a net of von Neumann algebras, $\cO \mapsto \qM(\cO)$, on a Hilbert space $\cH$, labeled by subsets $\cO$ of (Minkowski) space-time $\Rn^4$. It satisfies, see \cite{araki}:
\begin{enumerate}
\item[{(L1)}]  Isotony: $\cO_1 \subset \cO_2$ implies $\qM(\cO_1) \subset \qM(\cO_2).$
\item[{(L2)}]  Covariance: for $g= (a,\Lambda) \in P^{\uparrow}_+$, there is a representation $\alpha_g$ in $Aut(\qM)$ such that
$$\alpha_g(\qM(\cO)) = \qM(g\cO), \quad g\cO = \{ \Lambda x +a; x \in \cO\}, $$
 $P^{\uparrow}_+$ stands for the Poincar\'e group, where the Lorentz group is restricted, homogeneous, see Section 3.3 in \cite{araki}. 
This action is realised by a strong operator continuous unitary group $\{U(a, \Lambda)\}\subset \qM=\overline{\bigcup_{\cO\subset \Rn^4}\qM(\cO)}^{w^*}$. (See \cite{Bor3} for this last restriction.)
\item[{(L3)}]  Locality: if $\cO_1$ and $\cO_2$ are spacelike separated then $\qM(\cO_1)$ and $\qM(\cO_2)$ commute.
\item[{(L4)}]  Weak Additivity: 
$$\qM = \left( \bigcup_{x \in \Rn^4} \qM(\cO + x) \right)^{''}$$
for all open $\cO$.
\item[{(L5)}]  Vacuum vector: there is a normalized vector $\Omega \in \cH$, unique up to a phase factor that satisfies
$(\Omega, \alpha_g(f) \Omega) = (\Omega, f \Omega)$ for each $f$ in the global algebra $\qM=\overline{\bigcup_{\cO\subset \Rn^4}\qM(\cO)}^{w^*}$ and each $g= (a,\Lambda) \in P^{\uparrow}_+$.
\item[{(L6)}]  Positivity: The generator of translation has spectrum lying in the forward light cone.
\end{enumerate}

\begin{remark} 
We pause to comment on axiom (L5). In general the Reeh-Schlieder theorem provides criteria ensuring that the vacuum vector is both cyclic and seperating for each local algebra $\qM(\cO)$ corresponding to some bounded open region in $\mathbb{M}$. However this fact does not ensure that it is also separating for the full algebra $\qM$. The most we can say is that it is (trivially) also cyclic for this algebra. 
\end{remark}

Later on in the paper we will pass to the standard form of the algebra $\qM$ as realised on the noncommutative space $L^1(\qM)$. That framework is technically more complex. The above remark therefore serves to clarify the proof strategy of Theorem \ref{Lcosh-criteria} where we shall be working in this technically more complex context.

On the other hand, in the fifties, Wightman and G\"arding, see \cite{GW1}, \cite{GW2}, \cite{W}, 
formulated postulates for \textit{general quantum field theory} in terms of (unbounded) operators on a Hilbert space. 
Depending on the context, there are various subtle variations of these postulates. However we are not interested in a detailed application of the resultant theory to a specific context, but rather in the overarching mathematical framework and how this framework may be harmonised. The basic mathematical ingredients of these postulates that are relevant to our study, may be expressed as below (see \cite{araki}). The reader interested in finer detail, may refer to \cite{GW1}, \cite{GW2} and \cite{W}.
\begin{enumerate}
\item[{(F1)}]\label{F1}  Quantum fields: The operators $\phi_1(f), ...,\phi_n(f)$ are given for each $C^{\infty}$-function with compact support in the Minkowski space $\Rn^4$. Each $\phi_i(f)$ and its hermitian conjugate operator $\phi_j^*(f)$ are defined on at least a common dense linear subset $\cD$ of the Hilbert space $\cH$ and $\cD$ satisfies
$$\phi_j(f)\cD\subset \cD, \quad \phi_j^*(f) \cD \subset \cD,$$
for any $f$, $j=1,...,n$. For any $v,w \in \cD$
$$f \mapsto (v, \phi_j(f)w)$$
is a complex valued distribution.
\item[{(F2)}] Relativistic symmetry: There is a well defined strongly continuous unitary representation $U(a, A)$ of $P^{\uparrow}_+$ ($a \in \Rn^4, \ A \in SL(2, \Cn)$) such that
$$U(a, \Lambda) \cD = \cD$$
and
$$U(a, A)\phi_j(f) U(a, A)^* = \sum S(A^{-1})_{jk} \phi_k(f_{(a,A)}),$$
where the matrix $(S(A)_{j,k})$ is $n$-dimensional representation of $A \in SL(2,\Cn)$, and $f_{(a,A)}(z) = f(\Lambda(A)^{-1}(z - a)).$
\item[{(F3)}]  Local commutativity: if the supports of $f$ and $g$ are space-like separated, then for any vector $v \in \cD$
$$[\phi_j(f)^{\diamond}, \phi_k(g)^{\diamond}]_{\mp} (v) = 0,$$
where $\diamond$ denotes the following possibilities: no $*$, one $*$, and both operators $\phi$ have a $*$.
\item[{(F4)}]  Vacuum; there exists a well defined vacuum state, i.e. a vector $\Omega \in \cH$, invariant with respect to the Poincar\'e group such that the following spectrum condition is satisfied :
the spectrum of the translation group $U(a, \jed)$ on $\Omega^{\bot}$ is contained in $V_m = \{ p ; (p,p) \geq m^2, p^0 > 0 \}$, $m>0$.
\end{enumerate}

\begin{remark}\label{poincare}
In an analysis of Poincar\'e-Lorentz transformations it is convenient to distinguish those related with \textbf{one frame}, $F$, and those which involve another frame, $F^{\prime}$, which moves with velocity $v$ relative to $F$.

In other words there are two basic types of Poincar\'e-Lorentz transformations:
\begin{enumerate}
\item transformations defined in terms of inertial frames with no relative motion, i.e. the frames are simply tilted.
In particular, there are rotations (but without continuous rotation) and translations.
\item transformations describing relative motion with constant (uniform) velocity and without rotations of the space coordinates. Such transformations \textbf{are called boosts.}
\end{enumerate}
It is worth pointing out that that the spectral conditions, mentioned in the point (L6) above, are relevant for the first type of 
Poincar\'e-Lorentz transformations. In particular, the spectral conditions mentioned there, are not applicable to generators of boosts.

The vital consequences of the observation just presented, will be described in the discussion on regularity conditions of fields operators in the next section. We will in particular show how natural regularity restrictions placed on the field operators in terms of the Hamiltonians of each of these types of transformations, each play a role in incorporating the field operators into the integration theory of local algebras.
\end{remark}

\section{Properties of field operators versus noncommutative integration.}\label{locint}

\subsection{The affiliation of field operators to local algebras} 

There are essentially two steps involved in showing that under mild restrictions, the field operators form a natural part of the integration theory of local algebras. The first step - which we review here - is to find conditions which ensure the affiliation of the field operators to local algebras. This involves a regularity restriction in terms of the Hamiltonian of the first set of transformations described in Remark \ref{poincare}. The second is to find conditions which ensure that the field operators are not only affiliated, but naturally embed into appropriate noncommutative function spaces associated with local algebras. We will show that this part of the scheme relies on regular behaviour with respect to the Hamiltonian of the second set of transformations described in Remark \ref{poincare}. The objective of showing how a field operator can be associated to a net of von Neumann algebras, was realised by by deep contributions from for example \cite{DSW}, \cite{buch1}, \cite{araki}. We pause to summarise those contributions. 

Let $\cP$ be a family of operators with a common dense domain of definition $\cD$ in a Hilbert space $\cH$ (cf. Wightman's rule presented above) such that if $\phi \in \cP$ then also $\phi^*|_{\cD} \equiv \phi^{\dagger} \in \cP$. The weak commutant $\cP^w$, of $\cP$ is defined as the set of all bounded operators $C$ on $\cH$ such that $(v,C\phi w) = (\phi^{\dagger}v, C w)$, for all $v,w \in \cD$.

For simplicity of our arguments we will restrict ourselves to one type of real scalar field $\phi$; i.e. $\phi(f)^*$ coincides with $\phi(\overline{f})$ on $\cD$. Furthermore, apart from the Wightman postulates we assume:

\begin{enumerate}
\item[{(A1)}]\label{I} $\cP(\cO^p_q)^w$ is an algebra for any double cone $\cO^p_q \equiv \{x; p-x \in V_+, x - q \in V_+ \}$, where $V_+ = \{ \rm{ positive \ timelike \ vectors} \ \}$.

\item[{(A2)}]\label{II} The vacuum vector $\Omega$ is cyclic for the union of $\cP(D^{\prime})^w$ over all double cones $D$, where $D^{\prime}$ is the causal complement of $D$.
\end{enumerate}

The following theorem is taken from \cite{araki} , but stems from results given in \cite{DSW}, \cite{buch1}.

\begin{theorem}[\cite{araki}, cf. \cite{buch1}\&\cite{DSW}]
\label{2.3}
Assume that both conditions (A1) and (A2) hold. For each double cone $D$, define
\begin{equation}
\qM(D) = \left( \cP(D)^w \right)^{\prime}.
\end{equation} 
Then $\qM(D)$ is a von Neumann algebra and the net $D \mapsto \qM(D)$ satisfies conditions (L1)-(L3) for local algebras (cf the first section). On defining $\qM$ to be the von Neumann algebra generated by $\cup_D \qM(D)$, the state on $\qM$ determined by $\Omega$ is then a pure vacuum state for which
\begin{itemize}
\item $\Omega$ is cyclic for each $\qM(D)$,
\item and each operator $\phi \in \cP(D)$ has a closed extension $\phi_e \subset \phi^{\dagger, *}$
which is affiliated with $\qM(D)$. (Here, $\phi_e \subset A$ means that the domain of $\phi_e$ is contained in the domain of $A$ and that $\phi_e = A$ on the domain of $\phi_e$.)
\end{itemize}
\end{theorem}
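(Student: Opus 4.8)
The plan is to extract every assertion from the single algebraic identity $\qM(D) = (\cP(D)^w)'$ combined with the Wightman and localization postulates, disposing of the purely structural claims first and then isolating the two genuinely analytic points. I would begin by recording that the weak commutant is self-adjoint: given $C \in \cP(D)^w$ and a field $\phi \in \cP(D)$, substituting $\phi^\dagger$ for $\phi$ in the defining relation $(v, C\phi w) = (\phi^\dagger v, Cw)$, using $(\phi^\dagger)^\dagger = \phi$ on $\cD$, and conjugating the inner products yields $(v, C^*\phi w) = (\phi^\dagger v, C^* w)$, i.e. $C^* \in \cP(D)^w$. Hence $\cP(D)^w$ is a self-adjoint family of bounded operators and its commutant $\qM(D) = (\cP(D)^w)'$ is automatically a von Neumann algebra. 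Isotony is then formal: $D_1 \subseteq D_2$ gives $\cP(D_1) \subseteq \cP(D_2)$, whence $\cP(D_2)^w \subseteq \cP(D_1)^w$ and, applying the commutant once more, $\qM(D_1) \subseteq \qM(D_2)$. Covariance follows from (F2): conjugation by $U(a,A)$ carries $\cP(D)$ onto $\cP(gD)$, and since $X \mapsto U(a,A)\,X\,U(a,A)^*$ is a $*$-automorphism it commutes with the formation of both the weak commutant and the commutant, giving $U(a,A)\qM(D)U(a,A)^* = \qM(gD)$.

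The first analytic point is locality. Using (A1), $\cP(D)^w$ is not merely self-adjoint but a von Neumann algebra, so $\qM(D)' = \cP(D)^w$; thus to prove that spacelike separated $\qM(D_1)$ and $\qM(D_2)$ commute it suffices to show $\qM(D_2) \subseteq \cP(D_1)^w$, i.e. that every bounded element of the algebra generated by the $D_2$-fields weakly commutes with every $D_1$-field. Local commutativity (F3) supplies this at the level of the unbounded fields on $\cD$, but the passage from weak commutativity of unbounded operators to commutativity of the generated von Neumann algebras is precisely the delicate step treated by Borchers and by Driessler--Summers--Wichmann; I would invoke their commutation theorem, legitimised here by closability of the fields, the invariance $\phi\cD \subseteq \cD$, and the spectrum condition, which together furnish the analyticity needed. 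This is where I expect the real work to lie, since weak commutativity alone is known to be insufficient in general.

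The remaining global claims use the vacuum axioms. Cyclicity of $\Omega$ for each $\qM(D)$ is the Reeh--Schlieder theorem: the spectrum condition (F4)/(L6) together with the cyclicity of $\Omega$ built into (A2) via the causal complements forces $\Omega$ to be cyclic for every local $\qM(D)$ with nonempty interior, the argument proceeding by analytic continuation in the energy and an edge-of-the-wedge step. Purity of the vacuum state on $\qM$ then follows from uniqueness of the translation-invariant vector $\Omega$ (L5)/(F4), which yields irreducibility $\qM' = \Cn\jed$ and hence purity of $\omega = (\Omega, \cdot\,\Omega)$.

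Finally, affiliation. Since $\phi^\dagger = \phi^*|_\cD$ is densely defined, $\phi$ is closable and $\phi \subseteq (\phi^\dagger)^{*} =: \phi^{\dagger,*}$, so $\phi^{\dagger,*}$ (or the smaller closed extension $\overline{\phi}$) is a closed extension as required. For bounded $C \in \cP(D)^w = \qM(D)'$, applying the defining relation with $\psi = \phi$ shows that for $b \in \cD$ one has $Cb \in \mathrm{dom}(\phi^{\dagger,*})$ with $\phi^{\dagger,*}Cb = C\phi b$; that is, $C$ intertwines on the core $\cD$. The main obstacle is to propagate this intertwining from $\cD$ to the full domain of the chosen closed extension $\phi_e$: taking $\phi_e = \overline{\phi}$, so that $\cD$ is a core, I would approximate $\xi \in \mathrm{dom}(\phi_e)$ by $\xi_n \in \cD$ with $\phi_e\xi_n \to \phi_e\xi$, whence $C\xi_n \to C\xi$ and $\phi_e C\xi_n = C\phi_e\xi_n \to C\phi_e\xi$, and closedness of $\phi_e$ gives $C\phi_e \subseteq \phi_e C$; running this over the unitaries of $\qM(D)'$ yields affiliation of $\phi_e$ with $\qM(D)$. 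The genuine subtlety to watch is the discrepancy between $\mathrm{dom}(\overline{\phi})$ and $\mathrm{dom}(\phi^{\dagger,*})$, which is why the core property, rather than the weak-commutant identity by itself, is what closes the argument.
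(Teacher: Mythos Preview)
The paper does not give its own proof of this theorem: it is quoted from \cite{araki} (with attribution to \cite{DSW} and \cite{buch1}) and used as an input to the subsequent development. There is therefore nothing in the paper to compare your proposal against beyond the bare statement and the references.

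That said, your sketch tracks the standard argument from those sources quite faithfully. The structural claims (self-adjointness of $\cP(D)^w$, isotony, covariance) are handled exactly as one would expect; you correctly isolate locality as the genuinely nontrivial step and defer to the Driessler--Summers--Wichmann commutation theorem, which is indeed where the analytic content lies and where assumptions (A1)--(A2) do their work; cyclicity via Reeh--Schlieder and purity via irreducibility are the right mechanisms; and the affiliation argument via the core property of $\cD$ for $\overline{\phi}$ is the standard one. One small point: for the locality step you write that (A1) makes $\cP(D)^w$ a von Neumann algebra, but (A1) only asserts it is an \emph{algebra}; since you already showed it is self-adjoint and it is trivially weakly closed (the defining condition is weakly closed in $C$), this is fine, but worth stating explicitly. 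Overall the proposal is a sound reconstruction of the cited proof rather than a departure from it.
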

Theorem \ref{2.3} yields
\begin{corollary}
Field operators lead to operators affiliated to the von Neumann algebra $\qM(D)$. We remind that this property is the starting point for the definition of $\tau$-measurable operators.
\end{corollary}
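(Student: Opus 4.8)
The plan is to read the conclusion off directly from Theorem \ref{2.3}, so that the bulk of the argument is a matter of unwinding definitions rather than proving anything new. First I would recall the relevant notion of affiliation: a closed, densely defined operator $A$ on $\cH$ is \emph{affiliated} with a von Neumann algebra $\qN$ precisely when $u' A {u'}^{*} = A$ for every unitary $u'$ in the commutant $\qN'$, where this equality is understood to carry the invariance $u'\,\mathrm{dom}(A) = \mathrm{dom}(A)$ of the domain with it. This is exactly the framework within which $\tau$-measurability is subsequently defined, so verifying affiliation for the operators arising from field operators is the only substantive thing that needs to be done.

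Next I would fix a double cone $D$ and a field operator $\phi \in \cP(D)$. Since a field operator in the Wightman sense is only guaranteed to be densely defined on the common domain $\cD$ and need not be closed, the point is to extract from $\phi$ a genuine closed operator to which the notion of affiliation can be applied. This is precisely what the second bullet of Theorem \ref{2.3} supplies: $\phi$ admits a closed extension $\phi_e \subset \phi^{\dagger,*}$ that is affiliated with $\qM(D)$. I would therefore take $\phi_e$ to be the operator that $\phi$ ``leads to'', whereupon the affiliation asserted in the corollary is immediate.

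Since the substantive content has already been established in Theorem \ref{2.3} (which itself rests on \cite{DSW} and \cite{buch1}), there is no genuine obstacle at the level of the corollary; the only point requiring a word of justification is that passing to the closed extension is both \emph{necessary} --- so that closedness, a prerequisite for affiliation and hence for $\tau$-measurability, is available --- and \emph{harmless}, because $\phi_e$ agrees with $\phi$ on $\cD$. The closing sentence, that affiliation is the starting point for the definition of measurable operators, is not a mathematical claim to be proved but a signpost toward the construction of the $\tau$-measurable operators carried out in the sequel.
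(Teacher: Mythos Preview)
Your proposal is correct and matches the paper's approach exactly: the corollary is stated immediately after Theorem \ref{2.3} with no separate proof, being a direct reading of the second bullet of that theorem, and your unwinding of the definitions makes this explicit. The only thing the paper adds beyond what you wrote is nothing at all---it simply says ``Theorem \ref{2.3} yields'' the corollary.
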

Moreover, one has
\begin{remark}
There are sufficient conditions, motivated by physical requirements, for conditions (A1) and (A2) (given prior to Theorem \ref{2.3}) to hold, see \cite{araki}, \cite{DSW}, \cite{buch1}, \cite{BY}.
\end{remark}

As can be seen from Lemma 1.5 of \cite{BY} imposing a so-called \textit{generalized H-bound} condition (see below) ensures that (A1) holds. If one adds to this requirement the notion of central positivity, that would then ensure that the validity of Theorem \ref{2.3} (see \cite[Theorem 3.1]{BY}).

\begin{definition}
\label{regular}
 Let $\phi$ be a Wightman field and let $H$ denote its Hamiltonian. The field satisfies a \textit{generalized H-bound} if there exists a nonnegative number $\alpha <1$, such that $\phi(f)^{**} e^{- H^{\alpha}}$ is a bounded operator for all $f$.
\end{definition}

It is worth reiterating the fact pointed out in \cite{DSW}, that the physical significance of such conditions, is that they select models with slightly more regular high energy behaviour.

\subsection{Local algebras and the crossed product construction}

When provided with a von Neumann algebra with a faithful semifinite normal trace, integration theory is much simplified in that one is able to pass to the so-called algebra of $\tau$-measurable operators within which one may construct all the relevant noncommutative function spaces. However (see the comprehensive review \cite{Y}) \textit{the local algebras $\qM(D)$ are, under physically plausible assumptions, the same for all relativistic quantum field theories, namely they are isomorphic to the unique hyperfinite type $III_1$ factor}. But type $III$ factors are known not to have any nontrivial traces (see vol I and II of Takesaki \cite{Tak}). Haagerup's approach to noncommutative integration theory solves this problem by enlarging the ambient algebra by taking the crossed product $\mathfrak{M} \rtimes_{\sigma} \Rn$ (which will be defined below) of that algebra with the modular automorphism group. This enlarged algebra does indeed turn out to always admit a faithful semifinite normal trace, cf \cite{Tak}. Such a trace $\tau$, is a necessary tool for the definition of $\tau$-measurable operators, see \cite{Tak}, \cite{terp}, or \cite{nelson}, with the $L^p$-spaces of the original algebra realised as concrete spaces of operators within this algebra of $\tau$-measurable operators associated with the crossed product. Our first task here, will be to argue that the rudiments of this crossed product construction already appear in a natural way in the theory of local algebras. 

The Reeh-Schlieder property of vacuum states, see \cite{ReSch} (cf also Theorem 4.14 in \cite{araki}), shows that we can consider such vacuum states as those states whose GNS-vector $\Omega$ is cyclic and separating for the von Neumann algebra $\qM(D)$, where $D$ is a bounded region in the Minkowski space. Consequently, we get a very well-behaved Tomita-Takesaki theory in terms of these states. In particular, the modular action for the triple $(\qM(D), \Omega, \cH)$ exists and it will be denoted by $\sigma_t, \ t \in \Rn.$ This
modular action is therefore the natural object to use to construct such a crossed product.

Before proceeding with an analysis of the $\tau$-measurability criteria of field operators, we pause to describe $\cM \equiv \mathfrak{M} \rtimes_{\sigma} \Rn$ and its physical significance in some detail. To this end we will follow Haagerup's modification of Takesaki's construction as presented in Lemma 5.2 of \cite{uffe3}. This was developed in a series of three papers \cite{uffe1, uffe2, uffe3}. The bulk of the discussion below is taken from these papers, with occasional references from other sources where appropriate. For the sake of clarity we will extract only the very basic points of the exposition, without going to the point of sacrificing its essential content. Some modifications, which we made, are necessary in order to be able to follow the scheme of Quantum Field Theory as closely as possible. Recall that the basic ingredient of the operator algebraic approach to Quantum Field Theory is a net of local algebras having the properties described in the first section. A large part of our task is then to indicate how some of the subtleties of such algebras fit into the crossed product construction, rather than just presenting an abstract mathematical formalism.

Let $\Rn^4 \ni \cO \mapsto \qM(\cO)$ be a net of local observables. In general, $\qM(\cO)$ is type III. We fix a region $\cO_0$, so we can restrict ourselves to one von Neumann algebra $\qM(\cO_0) \equiv \qM$. Further there is a well defined vacuum state $\omega$ on 
$\qM$, having the properties described in the first section and then again at the beginning of the second section. Moreover on passing to the GNS representation if necessary, there is no loss of generality in assuming that $\qM$ is a von Neumann algebra acting on a Hilbert space $\cH$ with cyclic and separating vector $\Omega$. 
As $(\qM, \Omega, \cH)$ is then in the so-called standard form, there is a modular operator $\Delta$ inducing a modular automorphism group $\sigma_t(\cdot) = \Delta^{it} \cdot \Delta^{-it}$, $t \in \Rn$, of $\qM$.

Denote the Hilbert space of all square integrable $\cH$-valued functions on $\Rn$ by $L^2(\Rn, \cH)$. Define representations $\pi_{\sigma}$ of $\qM$, and $\lambda$ of $\Rn$, as follows:
\begin{equation}\label{cr1}
(\pi_{\sigma}(a)\xi)(t) = \sigma_{-t}(a)\xi(t), \quad a \in \qM, \ t \in \Rn, \xi \in L^2(\Rn, \cH) 
\end{equation}
and
\begin{equation}\label{cr2}
(\lambda(t)\xi)(s) = \xi(s - t), \quad s,t \in \Rn, \ \xi \in L^2(\Rn, \cH).
\end{equation}

\begin{definition}
The von Neumann algebra generated by $\pi_{\sigma}(\qM)$ and $\lambda(\Rn)$ on $L^2(\Rn, \cH)$, is called the crossed product of $\qM$ by $\sigma$, and is denoted by $\mathfrak{M} \rtimes_{\sigma} \Rn$.
\end{definition}

In the sequel we will write $\cM \equiv  \mathfrak{M} \rtimes_{\sigma} \Rn$ and also identify $\qM$ with $\pi_\sigma(\qM)$ to simplify the notation.

\begin{remark}
We know that whenever a generalised $H$-bound holds, any given field operator $\phi_e$ is affiliated with $\qM$ (that is taking the polar decomposition of 
$\phi_e$, $\phi_e = v |\phi_e|$, one has that $v$ and the spectral projections of $|\phi_e|$  are in $\qM$). But then $\phi_e$ will trivially also be a densely defined closed operator on $L^2(\Rn, \cH)$ for which $v$ and the spectral projections of $|\phi_e|$ are in $\cM$. Hence it must also be affiliated with $\cM$.
\end{remark}

The state $\omega$ on $\qM$ admits a so-called dual weight $\widetilde{\omega}$ on $\mathfrak{M} \rtimes_{\sigma} \Rn$, with the modular automorphism group on $\cM$ of this dual weight, being implemented by the maps $\lambda(t)$ ($t\in\mathbb{R})$) in the sense that $\tilde{\sigma}_{t}(a)=\lambda(t)a\lambda(t)^*$. (See the proof of \cite[Lemma 5.2]{uffe3}.) With $\sigma_t$ denoting the modular automorphism group on $\qM$ produced by $\omega$, it is not difficult to see that on embedding 
$\qM$ into $\cM$, we have by construction that $\sigma_s(a) = \lambda(s)a\lambda(s)^*$ for all $s\in\mathbb{R}$. So one of the benefits of passing to the crossed product and equipping it with the dual weight, is that we now have a modular group with an inner action. 

By the Stone-von Neumann theorem there exists a densely defined positive operator $h$ on $L^2(\Rn, \cH)$ such that $h^{it} =\lambda(t)$. But then by \cite[Theorem X.3.14]{Tak}, $\cM$ is not only semifinite, but $h$ must also be affiliated to $\cM$. Since $\hat{\omega}$ is in fact faithful, $h$ must be non-singular (dense-range and injective). We in fact have that $h^{it} = \hdelta^{it}$ where $\hdelta^{it}$ is the modular operator associated with the triple $(\cM, \hat{\omega},L^2(\Rn, \cH))$. (In the context of the left Hilbert algebra approach to crossed products, $\hdelta$ can be defined in a standard way as described in \cite[Lemma X.1.15]{Tak}.) The fact that $h^{it} = \hdelta^{it}$ can be verified by arguing as in the uniqueness part of the proof of \cite[Theorem VIII.1.2]{Tak} (see page 93).

To emphasise the connection of the above crossed product construction with physical criteria, we will follow some arguments given in Chapter V.4 in \cite{haag}. Let $W$ denote the wedge in Minkowski space $\mathbb{M}$.
\begin{equation}
W = \{r \in \mathbb{M}: r^1 > |r^0|, \ r^2,r^3 \quad \rm{arbitrary} \}
\end{equation}
and $U(\Lambda(s))$ ($U(r)$) the unitary operators implementing the boosts $\Lambda(s)$ (the spacetime translations respectively). The corresponding generators will be denoted by $K$ and $P_{\mu}$, i.e.
\begin{equation}
U(\Lambda(s)) = e^{iKs}, \quad U(r) = e^{iP_{\mu}}r^{\mu}.
\end{equation}

Finally, let $\Theta$ stand for CPT-operator (cf Chapter II in \cite{haag}) and $R_1(\pi)$ denote a special rotation through the angle of $\pi$ around the $1$-axis. Bisognano and Wichmann proved, see \cite{BW1}, \cite{BW2}

\begin{theorem}[\cite{BW1}, \cite{BW2}]\label{BW}
Let $J_W$, $\Delta_W$ denote the modular conjugation and the modular operator for the pair $(\mathfrak{M}(W), \Omega)$.
Then
\begin{equation}
J_W = \Theta U(R_1(\pi)), \quad \Delta_W = e^{-2\pi K}.
\end{equation}
Moreover, \textit{the modular automorphisms $\sigma_t$, $t \in \Rn$, act geometrically as the boosts!}
\end{theorem}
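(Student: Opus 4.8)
The plan is to pin down the modular objects through the defining relation of Tomita--Takesaki theory: the closure $S$ of the conjugate-linear operator $A\Omega\mapsto A^*\Omega$, defined initially on the dense set $\mathfrak{M}(W)\Omega$, admits a unique polar decomposition $S=J_W\Delta_W^{1/2}$ with $J_W$ antiunitary and $\Delta_W$ positive. The candidate operators are forced by geometry. Because the boost $\Lambda(s)$ leaves the wedge $W$ invariant and fixes $\Omega$, the unitaries $U(\Lambda(s))=e^{iKs}$ implement a one-parameter automorphism group of $\mathfrak{M}(W)$ preserving the vacuum state; one therefore proposes $\Delta_W^{it}=e^{-2\pi itK}=U(\Lambda(-2\pi t))$ together with $J_W=\Theta\,U(R_1(\pi))$, and the whole task reduces to verifying that this pair really produces $S$.

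The analytic heart of the argument, and the step I expect to be the main obstacle, is to show that for a real test function $f$ with $\mathrm{supp}(f)\subset W$ the $\cH$-valued map $s\mapsto U(\Lambda(s))\phi(f)\Omega$ extends to a function holomorphic on the strip $\{\,0<\mathrm{Im}(s)<\pi\,\}$ with continuous boundary values. It is precisely here that the genuine field-theoretic input must be marshalled: the spectrum condition (F4), asserting that the translation generators have joint spectrum in the forward cone $V_+$, forces the Wightman $n$-point functions to be boundary values of functions holomorphic in the forward tube, and Lorentz covariance together with the edge-of-the-wedge theorem propagates this analyticity into the extended and permuted-extended tubes whose real points (Jost points) encode the wedge geometry. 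Extracting from this machinery the one-parameter strip analyticity of the boost orbit of $\phi(f)\Omega$ is the technically demanding core of the proof.

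Granting the analyticity, I would evaluate the boundary value at $\mathrm{Im}(s)=\pi$. Since $\Delta_W^{1/2}=e^{-\pi K}$ and the latter is obtained by setting $s=i\pi$ in $e^{iKs}$, this boundary value is exactly $\Delta_W^{1/2}\phi(f)\Omega$. On the geometric side the continuation to $s=i\pi$ corresponds, in the \emph{complexified} Lorentz group, to the transformation $\Lambda(i\pi)$, which acts on Minkowski space as the wedge reflection $j_W\colon(r^0,r^1,r^2,r^3)\mapsto(-r^0,-r^1,r^2,r^3)$. Composing the field transformation law with the PCT theorem---$\Theta$ implements the total reflection $r\mapsto -r$, while $U(R_1(\pi))$ restores the sign of $(r^2,r^3)$, so that $\Theta\,U(R_1(\pi))$ implements exactly $j_W$---one obtains
\begin{equation}
J_W\Delta_W^{1/2}\,\phi(f)\Omega=\phi(\overline{f})\Omega=\phi(f)^*\Omega,
\end{equation}
which is $S\phi(f)\Omega$. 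As the vectors $\phi(f)\Omega$ with $\mathrm{supp}(f)\subset W$ form a core for $S$, this identifies $S=J_W\Delta_W^{1/2}$, and uniqueness of the polar decomposition delivers the asserted formulas $\Delta_W=e^{-2\pi K}$ and $J_W=\Theta\,U(R_1(\pi))$.

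Finally, the geometric action of the modular group is an immediate corollary of the first formula: since $\Delta_W^{it}=e^{-2\pi itK}=U(\Lambda(-2\pi t))$, the modular automorphism $\sigma_t=\mathrm{Ad}(\Delta_W^{it})$ coincides with conjugation by the boost unitary $U(\Lambda(-2\pi t))$, so that $\sigma_t$ simply transports each observable along the boost orbit. This is the precise sense in which the modular automorphisms act geometrically as the boosts.
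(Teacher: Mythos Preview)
The paper does not prove this theorem at all: it is stated as a quoted result of Bisognano and Wichmann, with the attribution ``Bisognano and Wichmann proved, see \cite{BW1}, \cite{BW2}'' immediately preceding the statement and no argument following it. There is therefore nothing in the paper against which to compare your attempt.

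That said, what you have written is a faithful outline of the actual Bisognano--Wichmann strategy: propose the geometric candidates for $J_W$ and $\Delta_W$, invoke the tube analyticity of the Wightman functions (coming from the spectrum condition) to continue the boost orbit $s\mapsto U(\Lambda(s))\phi(f)\Omega$ into the strip, read off the boundary value at $s=i\pi$ as the wedge reflection, match this against the PCT operator composed with the spatial rotation, and conclude by uniqueness of the polar decomposition of the Tomita operator $S$. Two points you gloss over that the original papers handle with care: first, one must work with the full polynomial algebra in the smeared fields (not just single $\phi(f)$'s) to obtain a core for $S$, and the strip analyticity must be established for arbitrary products $\phi(f_1)\cdots\phi(f_n)\Omega$ with all $\mathrm{supp}(f_j)\subset W$; second, the verification that $\Theta\,U(R_1(\pi))$ really is a modular conjugation requires checking that it carries $\mathfrak{M}(W)$ onto $\mathfrak{M}(W)'$, which is the statement of wedge duality and is itself part of the Bisognano--Wichmann package rather than an a priori input. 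Your sketch is correct in spirit but would need these ingredients to become a proof.
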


We pause to note that although the Bisognano-Wichmann result is a theorem about $\qM(W)$, it is also a theorem about the geometry of the canonical modular group. Hence with $(\sigma_t)$ denoting the modular group, the fact that inside the crossed product $\sigma_t$ may be realised by $\sigma_t(a)=\lambda(t)a\lambda(t)^*$, provides an elegant way of connecting their result with the crossed product approach.

\begin{remark}
As far as  $\cM \equiv  \mathfrak{M} \rtimes_{\sigma} \Rn$ is concerned, one could therefore write 
\begin{equation}
h = \hdelta = e^{-H_{equilibrium}} \equiv e^{-K},
\end{equation}
where, using ``KMS'' ideology, $K$ can be interpreted as the Hamiltonian describing the equilibrium dynamics for some fixed (non-moving) frame. For this (modular) dynamics, $\omega$ is moreover an equilibrium (KMS)-state. It is important to remember that the spectrum $\sigma(K)$ of $K$, in general, satisfies
$$\sigma(K) = \Rn.$$
Note that this property is a characteristic feature of modular groups of type $III_1$ factors! To fully appreciate this observation we remind the reader that $\widetilde{\sigma}_t(A)=\sigma_t(a)$ for all $a\in\qM$, where $\widetilde{\sigma}_t$ (respectively $\sigma_t$) is the modular group produced by the dual weight $\widehat{\omega}$ (respectively the weight $\omega$). (Compare the earlier discussion following the introduction of the operator-valued weight.)

On the other hand, it is worth pointing out that, in general, $K$ is not equal to $H$. Thus, the spectral conditions assumed for $H$, are not applicable to $K$!
\end{remark}

\subsection{Field operators and $\tau$-measurability}

For the sake of completeness we remind the reader of the concept of $\tau$-measurability.

\begin{definition}
\label{measure}
Let $\cM$ be a semifinite von Neumann algebra acting on a Hilbert space $\cH$, and equipped with a faithful normal semifinite trace $\tau$. A closed and densely defined linear operator $a$ on $\cH$ is called $\tau -measurable$ if $a$ is affiliated with $\cM$ (that is, $au=ua$ for all unitary operators in the commutant $\cM^{\prime}$ of $\cM$) and there exists a $\lambda_0 \in (0,\infty)$ such that $\tau(E^{|a|}(\lambda_0, \infty)) < \infty$ (where $E^{|a|}$ denotes the spectral measure of the selfadjoint operator $|a|$). The space of all $\tau$-measurable operators is denoted by $\widetilde{\cM}$ and can be equipped with a topology of convergence in measure, with respect to which it proves to be a complete metrisable *-algebra. Details may be found in \cite[\S IX.2]{Tak}. This *-algebra is the noncommutative analogue of the completion of $L^\infty(X,\Sigma,\nu)$ with respect to the topology of convergence in measure.
\end{definition}  

We pause to describe how quantum $L^p$-spaces spaces are realised within the algebra $\widetilde{\cM}$.  One may define a dual action of $\mathbb{R}$ on $\cM$ in the form of a one-parameter group of automorphisms $(\theta_s)$ by means of the prescription 
\begin{equation}\label{dual} \theta_s(a)=a, \quad \theta_s(\lambda(t))=e^{-ist}\lambda(t) \mbox{ for all } a\in \qM\mbox{ and } s,t\in\mathbb{R}.\end{equation} 
(We remind the reader that we have chosen to identify the copy $\pi_\sigma(\mathfrak{M})$ of $\mathfrak{M}$ inside $\cM$, with $\mathfrak{M}$ itself in order to simplify notation.) This dual action in fact extends to an action on $\widetilde{\cM}$. Using this dual action, Haagerup showed that  $\qM=\{a\in\widetilde{\cM}:\theta_s(a)=a\mbox{ for all } s\in\mathbb{R}\}$ and that $\qM_*\equiv\{a\in\widetilde{\cM}:\theta_s(a)=e^{-s}a\mbox{ for all }s\in\mathbb{R}\}$. So if $\qM$ and $\qM_*$ respectively represent the quantum spaces $L^\infty(\qM)$ and $L^1(\qM)$, it makes sense to suggest that the $L^p(\qM)$ spaces $1\leq p\leq\infty$ may well be represented by the scale of spaces  $L^p(\qM)=\{a\in\widetilde{\cM}:\theta_s(a)=e^{-s/p}a\mbox{ for all }s\in\mathbb{R}\}$. Haagerup's triumph was in showing that these are all Banach spaces with respect to the subspace topology inherited from the topology of convergence in measure on $\widetilde{\cM}$, and that there exists a tracial functional $tr$ on $L^1(\qM)$ (different from $\tau$) in terms of which one may define a dual action of $L^{p'}$ on $L^p$, which can then be used to develop a theory which reproduces much of the classical theory of $L^p$-spaces with a remarkable degree of faithfulness. A consequence of this theory is the fact that in the case where $\omega$ is a state (as in the present case), we have that $h\in L^1(\qM)$ with $\omega(a)=tr(ha)=tr(ah)=tr(h^{1/2}ah^{1/2})$ for all $a\in \qM$. 

We now explore natural ways in which to assign $\tau$-measurability criteria to the field operators.  Let us now apply the above ideas to the action of the field operators on $\qM(D)$ where $D$ is some double cone. At this point it is very seductive to suggest that one could simply require the field operators themselves to be measurable. But there is a problem with that. It can be shown that the action of each $\theta_s$ extends to affiliated operators, and hence also to the field 
operators. Each positive operator affiliated to $\mathfrak{M}(D)$ can of course be written as a pointwise increasing limit of positive elements of $\mathfrak{M}(D)$. But we know from \cite[Lemma 3.6]{uffe-DW1}, that the elements of $\qM^+(D)$ are fixed points of this dual action. In other words we must then have that $\theta_s(|\phi(f)|)=|\phi(f)|$ for all $s$. So if $\phi(f)$ was in fact $\tau$-measurable, it would follow from Proposition II.10 of 
\cite{terp} that in the context of $\qM(D)$, the action of these operators would be bounded! But this cannot be. It is therefore clear 
that when imposing a regularity assumption related to $\tau$-measurability, we need to take care that we impose a regularity restriction 
which allows for unboundedness of the field operators. In search of such a criterion we turn to \cite{ML} for clues. In that paper a strong case was made that regular observables find their home in the Orlicz space $L^{\cosh-1}(\mathfrak{M}(D))$ (defined below). The significance of this space is that for all non-discrete von Neumann algebras (as is the case with $\mathfrak{M}(D)$), it ``contains'' 
$\mathfrak{M}$ properly, and hence allows for unboundedness of observables. (What we at this point mean by containment should be explained. When in the tracial case we speak of containment, we mean standard set theoretic containment. However in the crossed product paradigm this translates to the statement that there exists a canonical adjoint preserving embedding of $\mathfrak{M}$ into $L^{\cosh-1}(\mathfrak{M}(D))$.) So a natural way in which we can assign $\tau$-measurability criteria to the field operators, is to ask that they canonically embed into a physically appropriate quantum function space associated with $\mathfrak{M}(D)$. We will shortly comment on the propriety of $L^{\cosh-1}(\mathfrak{M}(D))$ as a home for the field operators. We first pause to define Orlicz spaces, and describe how their quantum analogues are constructed. In order to introduce Orlicz spaces, we first need to acquaint readers with the concept of a Young's function.  

\begin{definition}
A convex function $\Psi : [0, \infty) \to [0, \infty]$ is called a Young's function if 
\begin{itemize}
\item  $\Psi(0) = 0$ and $\lim_{u \to \infty} \Psi(u) = \infty$;
\item it is neither identically zero nor infinite valued on all of 
$(0, \infty)$,
\item and is left continuous at $b_\Psi = \sup\{u > 0 : \Psi(u) < 
\infty\}$.
\end{itemize}
\end{definition}

Young's functions come in complementary pairs, in that to each Young's function $\Psi$, we may associate the complementary function $\Psi^*$ defined by $\Psi^*(t)=\sup_{s>0}st-\Psi(s)$. By definition of $\Psi^*$, the pair $(\Psi,\Psi^*)$ then satisfy the Hausdorff-Young inequality $st\leq \Psi(s)+\Psi^*(t)$ for all $s,t\geq 0$. To each Young's function $\Psi$, we may associate a corresponding Orlicz space. With $L^0$ denoting the space of all measurable functions on some $\sigma$-finite measure space $(\Omega, \Sigma, m)$, the Orlicz space associated with a given Young's function $\Psi$, may be defined by the prescription:

\begin{definition}\label{orliczelement} $f\in L^0$ belongs to $L^{\Psi} \Leftrightarrow \int\Psi(\lambda |f|)\,d\nu<\infty$ for some $\lambda = \lambda(f) > 0.$
\end{definition}

These spaces admit of several natural norms, but ultimately these norms turn out to be equivalent. (See \cite[\S 4.8]{BS}.) For the sake of completeness we mention the two most important norms.

\begin{itemize}
\item \textbf{Luxemburg-Nakano norm}: $\|f\|_\Psi = \inf\{\lambda > 0 : \|\Psi(|f|/\lambda)\|_1 \leq 1\}.$

\item \textbf{Orlicz norm}: $\|f\|^O_\Psi = \sup\{|\textstyle{\int_\Omega} fg\, dm| : g\in L^{\Psi^*}, \|g\|_{\Psi^*}\leq 1\}.$
\end{itemize}

Given a Young's function, we will follow the convention of writing $L^\Psi$ when the Luxemburg-Nakano norm is in view, and $L_\Psi$ when the Orlicz norm is used. To be able to define Orlicz spaces for type III algebras, we need the concept of a fundamental function, specifically fundamental functions of the Orlicz spaces $\{L^\Psi(0,\infty), L_\Psi(0,\infty)\}$. For each such space, say $X$, the associated \emph{fundamental function} is defined by the prescription
$$\varphi_X(t)=\|\chi_E\|_X \mbox{ where } \nu(E)=t.$$
In the case $X=L^\Psi(0,\infty)$ we will write $\varphi_\Psi$ for the fundamental function, and in the case $X=L_\Psi(0,\infty)$ write $\widetilde{\varphi}_\Psi$. These functions turn out to be so-called ``quasi-concave functions'' on $[0,\infty)$ (see \cite[\S 2.5]{BS} for a detailed discussion of such functions). For these fundamental functions the following facts are known to hold (see \cite[Theorem 2.5.2, Corollary 4.8.15 \& Lemma 4.8.17]{BS}):

\begin{equation}\label{fundeqn} \varphi_\Psi(t)=\frac{1}{\Psi^{-1}(1/t)} \mbox{ and }\varphi_\Psi(t)\widetilde{\varphi}_{\Psi^*}(t)=t\mbox{ for all }t.\end{equation}

Given $\Psi$, the Orlicz space $L^\Psi(\qM)$ associated with $\qM$ may then be defined by means of the following prescription (see 
\cite[Lemma 5.11]{LM2}): \newline
Let $v_s=\varphi_{\Psi}(e^{-s}h)\varphi_{\Psi}(h)^{-1}$. Then  
$$L^\Psi(\qM) =\{a\in \widetilde{\cM}: \theta_s(a)=v_s^{1/2}av_s^{1/2} \mbox{ for all } s\leq 0\}.$$To define $L_\Psi(\qM)$ one simply uses $\widetilde{\varphi}_{\Psi}$ in the definition instead of $\varphi_{\Psi}$. We point out that in the case where $\Psi(t)=t^p$, this prescription yields exactly Haagerup's definition of $L^p$-spaces. At this point it is worth noting that the operator $v_s$ is always bounded with norm between 1 and $e^{-s}$. This can be proven by noting that the facts that $t\to \varphi_{\Psi}(t)$ is increasing and $t\to\frac{\varphi_{\Psi}(t)}{t}$ decreasing (see \cite[Corollary II.5.3]{BS}), ensures that for any $t>0$, $\frac{\varphi_{\Psi}(e^{-s}t)}{\varphi_{\Psi}(t)}$ lies between 1 and $e^{-s}$. The continuous functional calculus does the rest.

Coming back to field operators and their action in the context of $\qM(D)$, the question of precisely which regularity criteria to impose now arises. We remind the reader that the self-adjoint elements of the space $L^{\cosh-1}(\mathfrak{M}(D))$ represents the space of regular observables \cite{ML, LM}. As was shown in \cite{LM} one of the properties of such regular observables, is that they have all moments finite. In some sense the field operators exhibit similar behaviour. We note the following:

\begin{observation}
For any field operator $\phi(f)$ satisfying Wightman's postulates, one has
\begin{equation}\label{fmoment}
(v,\phi^n(f)w) \in \Cn,
\end{equation}
for any $v,w \in \cD\subset\cH$, and any $n \in \Nn$. In other words, the number $(v,\phi^n(f)w)$ is finite for any $n \in \Nn$.
\end{observation}

\begin{remark}
For states $\omega_x(\cdot) \equiv (x, \cdot x)$ with $x \in \cD\subset\cH$, field operators in QFT enjoy the property of having all moments finite.
\end{remark}

The above facts suggest that from a physical point of view, the space $L^{\cosh-1}(\mathfrak{M}(D))$, ought to be a natural home for the field operators. If then the field operators are to be \emph{regular} in the sense of \cite{ML, LM}, a natural restriction to place on them would be to require them to embed into the space  $L^{\cosh-1}(\mathfrak{M}(D))$. This leads us to the following definition

\begin{definition}
A field operator $\phi(f)$ affiliated to $\mathfrak{M}(D)$ is said to satisfy an \emph{$L^{\cosh-1}$ regularity restriction} if the strong product $\varphi_{\cosh-1}(h)^{1/2}\phi(f)_e\varphi_{\cosh-1}(h)^{1/2}$ (where $\phi(f)_e$ is the minimal closed extension of $\phi(f)$) is a closable operator for which the closure is $\tau$-measurable, i.e. the closure is an element of the space $\widetilde{\cM}$. For the sake of simplicity of notation we will hereafter simply write $\varphi_{\cosh-1}(h)^{1/2}\phi(f)\varphi_{\cosh-1}(h)^{1/2}$ instead of $\varphi_{\cosh-1}(h)^{1/2}\phi(f)_e\varphi_{\cosh-1}(h)^{1/2}$.)
\end{definition}

\begin{remark} Given that each of the $\phi(f)$'s are fixed points of the action of $(\theta_s)$, it is a simple matter to 
verify that an $L^{\cosh-1}$ regularity restriction on some $\phi(f)$, ensures that the closure of $\varphi_{\cosh-1}(h)^{1/2}\phi(f)\varphi_{\cosh-1}(h)^{1/2}$ satisfies the membership criteria for $L^{\cosh-1}(\mathfrak{M}(D))$ described above. To see this observe that for $v_s=\varphi_{\cosh-1}(e^{-s}h)\varphi_{\cosh-1}(h)^{-1}$, we clearly have that $\theta_s(\varphi_{\cosh-1}(h)^{1/2}\phi(f)\varphi_{\cosh-1}(h)^{1/2}) = \varphi_{\cosh-1}(e^{-s}h)^{1/2}\phi(f)\varphi_{\cosh-1}(e^{-s}sh)^{1/2}= v_s^{1/2}(\varphi_{\cosh-1}(h)^{1/2}\phi(f)\varphi_{\cosh-1}(h)^{1/2})v_s^{1/2}$ for each $s\in \mathbb{R}$.
\end{remark}

We re-emphasise that the $L^{\cosh-1}$ regularity restriction allows room for unboundedness of the field operators! This restriction clearly has the same flavour as the generalised $H$-boundedness restriction on the field operators $\pi_{\sigma}(\phi_e)$ mentioned earlier, which implied their affiliation to the local algebras $\mathfrak{M}(D)$. The differences here is that we used the Hamiltonian for boosts rather than $H$, that we have a ``symmetrised'' product, and that we ask for this product to be merely $\tau$-measurable rather than bounded. 

\subsubsection{Generalized $H$-bounds and $L^{\cosh-1}$ regularity}
 
We now show that under very natural criteria, a large number of fields satisfying a generalised $H$-bound condition will automatically satisfy an $L^{\cosh-1}$ regularity restriction. The crucial assumption here is that the vacuum vector is an analytic vector for the Hamiltonian $H$. Note that this assumption of analyticity is intimately related to the validity of the Reeh-Schlieder theorem, which is a key ingredient in Quantum Field Theoretic modelling. To see this, the reader may refer to \cite[1.3.1-1.3.3]{hor}.
 
\begin{theorem}\label{Lcosh-criteria}
Let a quantum field be given which satisfies a generalised $H$(-power)-bound condition, in the sense that each $\phi(f)$ is a densely defined hermitian operator with $\pm\phi(f)\leq K(f) (\I+H)^n$ for some positive integer $n$ and $K(f)>0$, where $K(f)$ may depend on $f$. Let $e_\omega$ be the support projection of the state $\omega=\langle \cdot \Omega_\omega,\Omega_\omega\rangle$. Then field operators satisfy the compression $e_\omega\phi(f)e_\omega$ of the field satisfies an $L^{\cosh-1}$ regularity restriction for the compressed algebra $e_\omega\qM e_\omega$ in the sense that $\varphi_{\cosh - 1}(g_\omega)^{1/2}e_\omega\phi(f)e_\omega\varphi_{\cosh - 1}(g_\omega)^{1/2}$ is pre-closed with the closure being a $\tau$-measurable operator affiliated to $e_\omega\qM e_\omega\rtimes_\omega\mathbb{R}$. Here $g_\omega$ is the density of the dual weight $\widetilde{\omega}$ on $e_\omega\qM e_\omega\rtimes_\omega\mathbb{R}$.
\end{theorem}

We point out that in one of the first concrete examples of a field satisfying a generalised $H$-bound, Driessler and Fr\"ohlich \cite{DF} proved that for the so-called Osterwalder-Schrader reconstruction, a generalised $H$-bound pertains in the sense that $\pm \phi(f)\leq\|f\|(H+\I)$ for some continuous norm on the space of test functions. In general such a linear bound is too restrictive with a much larger number of examples satisfying a condition of the form described above. (See the discussion after Definition 5.1 of \cite{DSW}.)

\begin{remark} In Algebraic Quantum Field Theory the net of algebras $\cO \mapsto \qM(\cO)$ constitutes the intrinsic mathematical description of the theory, cf Section 1. Hence for the vacuum state $\omega$, the essential information is provided by
\begin{equation}
\label{net}
\qM(\cO)  \ni a \mapsto \omega(a).
\end{equation}
We note that as $\omega(e_\omega a e_\omega) = \omega(a)$ for $a \in \cup_{\cO} \qM(\cO)$,  the net \ref{net} can be used to great effect in the compressed algebra setting. In other words, the passage to the compression determined by $e_\omega$ does not spoil the basic features of the theory. The advantage of using the compression $e_\omega \cdot e_\omega$ will be explained in Section 
\ref{concludesection2}.
\end{remark}

\begin{proof}[Proof of Theorem \ref{Lcosh-criteria}]
Throughout the proof, we shall write $\qM$ for $\qM(\mathbb{M})$. We firstly note that by \cite[Theorem III.4.2]{Tak}, we have that $e_\omega\in\qM$. The further fact that the unitary group of which $H$ is the generator, is contained in $\qM$, ensures that its generator, namely $H$, is affiliated to $\qM$, and hence also to $\cM$. We shall denote this group by $U(a)$ ($a\in\mathbb{M}$) since only translation is in view. 

Select a faithful normal semifinite weight $\delta$ on $(\I-e_\omega)\qM(\mathbb{M})(\I-e_\omega)$. The weight $\delta$ extends to a normal semifinite weight on $\qM(\mathbb{M})$ in the obvious way. We now define a weight $\nu$ on $\qM(\mathbb{M})$ by the prescription 
$$a\to \omega(e_\omega ae_\omega)+\delta((\I-e_\omega)a(\I-e_\omega))\qquad a\in \qM(\mathbb{M}).$$It is an exercise to see that $\nu$ 
is a faithful normal semifinite weight on $\qM(\mathbb{M})$. (To see the faithfulness, notice that for $a\geq 0$, we have 
that $\nu(a)=0\Rightarrow e_\omega ae_\omega=(\I-e_\omega)a(\I-e_\omega)=0\Rightarrow a^{1/2}e_\omega=a^{1/2}(\I-e_\omega)=0 \Rightarrow a^{1/2}=0\Rightarrow a=0$.) Now pass to the crossed product $\qM(\mathbb{M})\rtimes_\nu\mathbb{R}$. 
Given a normal semifinite weight $\gamma$ on $\qM(\mathbb{M})$, we will write $h_\gamma$ for the density of the dual weight. From the notes of Terp we know that $h_\nu=h_\omega \dotplus h_\delta$ and that $\mathrm{supp}(h_\omega)=\mathrm{supp}(\omega)=e_\omega$ and 
$\mathrm{supp}(h_\delta)=\mathrm{supp}(\delta)=\I-e_\omega$. Thus $h_\omega h_\delta=0$ which ensures that 
$h_\nu^{it}=h_\omega^{it} + h_\delta^{it}$ for all $t\in \mathbb{R}$. But then $\sigma^\nu_t(e_\omega)=h_\nu^{it}e_\omega h_\nu^{-it} =(h_\omega^{it} + h_\delta^{it})e_\omega(h_\omega^{-it} + h_\delta^{-it})=h_\omega^{it}e_\omega h_\omega^{-it}=e_\omega$ for all 
$t\in \mathbb{R}$. 

Now recall that the canonical normal state $\omega$ corresponds to the vacuum state. But the vacuum state is invariant under the action of the automorphisms $\alpha_a$ implemented by translation at the level of $\mathbb{M}$ \cite[Theorem 4.5]{araki}. This means 
that for any $f\in\qM$ and any $a\in\mathbb{M}$, we must have that $\omega(f)=\omega(\alpha_a(f))=\omega(U(a)fU(a)^*)$. Next observe 
that in the language of Haagerup $L^p$-spaces, this fact corresponds to the claim that $tr(h_\omega f)=tr(h_\omega U(a)fU(a)^*)= tr(U(a)^*h_\omega U(a)f)$ for all $f\in \qM$, where $h_\omega$ is the density of the dual weight $\widetilde{\omega}$ on 
$\qM\rtimes_\nu\mathbb{R}$. (See the discussion following Definition \ref{measure}.) The well-developed duality theory of Haagerup 
$L^p$-spaces, combined with the fact that both $h_\omega$ and $U(a)^*h_\omega U(a)$ belong to $L^1(\qM)$, now ensures that $h_\omega=
U(a)^*h_\omega U(a)$, or equivalently that $U(a)h_\omega=h_\omega U(a)$ for each $a\in\mathbb{M}$. When expressed at the level of the 
generator $H$ of the group $U(a)$ ($a\in\mathbb{M}$), this corresponds to the claim that $H$ and $h_\omega$ are commuting affiliated operators 
of $\cM$.

By postulate (F4), the vacuum vector is in the kernel of the energy operator $H$, and is hence automatically analytic for all powers $H^k$ of $H$. We shall need this fact shortly. We also remind the reader that in the context of Haagerup $L^p$-spaces, the standard form of a von Neumann algebra $\qM$, may be realised by representing $\qM$ as left-multiplication operators on $L^2(\qM)$ - see \cite[Theorem II.36]{terp}. 

Using this theorem, the vector $\Omega$ implementing the canonical faithful normal state $\omega$, may be identified with $h_\omega^{1/2}$. Saying that $h_\omega^{1/2}$ is in the domain of say $H^k$, in this setting means that $H^kh_\omega^{1/2}$ is $\tau$-pre-measurable and that the extension (for which we use the same notation) is in $L^2(\qM)$. The analyticity assumption regarding the vacuum vector ensures that $\sum_{m=1}^\infty\frac{1}{m!}t^mH^{mk}\Omega$ converges absolutely, which in turn can only be true if each $H^{mk}h_\omega^{1/2}$ extends uniquely to an element of $L^2(\qM)$, with the series $\sum_{m=0}^\infty\frac{1}{m!}t^mH^{mk}h_\omega^{1/2}$ of such extensions, converging absolutely to an element in $L^2(\qM)$. We may then use the comparison test for series to conclude that $\sum_{m=0}^\infty\frac{1}{(2m)!}\|t^{2m}H^{2mk}h_\omega^{1/2}\|_2$ converges. The absolute convergence of the series  $\sum_{m=0}^\infty\frac{1}{(2m)!}t^{2m}H^{2mk}h_\omega^{1/2}$ in $L^2(\qM)$, ensures that it must correspond to a well defined element of 
$L^2(\qM)$. But this series is just a Maclaurin series representation of $(\cosh-1)(tH^k)h_\omega^{1/2}$. So $(\cosh-1)(tH^k)h_\omega^{1/2}\in L^2(\qM)$. Now recall that the fact that $\omega$ is a state, ensures that $h_\omega\in L^1(\qM)$, or equivalently that $h_\omega^{1/2}\in L^2(\qM)$. We therefore have that $h_\omega^{1/2}(\cosh-1)(tH^k)h_\omega^{1/2}\in L^1(\qM)$ for every $1\leq k\leq n$.

The fact that $H$ and $h_\omega$ are commuting affiliated operators now comes into play. On arguing as in the proof of \cite[Theorem 2.2]{L}, we may show that $$\tau(\chi_{(1,\infty)}(h_\omega^{1/2}(\cosh-1)(tH^k)h_\omega^{1/2})) = \tau(\chi_{(1,\infty)}(\varphi_{\cosh-1}(h_\omega)^{1/2}(tH^k)\varphi_{\cosh-1}(h_\omega)^{1/2})).$$(Even though $\qM$ is not semifinite, the argument in the proof of \cite[Theorem 2.2]{L} will go through if we apply it to the pair $(H,h_\omega)$ instead of the pair $(a\otimes 1, \I\otimes e^t)$.) By \cite[Lemma 1.7]{FK} we have that $\tau(\chi_{(1,\infty)}(h_\omega^{1/2}(\cosh-1)(tH^k)h_\omega^{1/2}))=\|h_\omega^{1/2}(\cosh-1)(tH^k)h_\omega^{1/2}\|_1<\infty$, and hence that $$\tau(\chi_{(1,\infty)}(\varphi_{\cosh-1}(h_\omega)^{1/2}(tH^k)\varphi_{\cosh-1}(h_\omega)^{1/2}))<\infty.$$Thus by definition, $\varphi_{\cosh-1}(h_\omega)^{1/2}(tH^k)\varphi_{\cosh-1}(h_\omega)^{1/2}$ is $\tau$-measurable. Since this holds for every $1\leq k \leq n$, it follows that $\varphi_{\cosh-1}(h_\omega)^{1/2}(\I+H)^n\varphi_{\cosh-1}(h_\omega)^{1/2}$ is $\tau$-measurable.

Recall that for any test function $f$ we have that $\pm\phi(f)_e \leq K(f) (\I+H)^n$. It is therefore clear that
\begin{eqnarray*}
0 &\leq& \varphi_{\cosh-1}(h_\omega)^{1/2}(K(f)(\I+H)^n+\phi(f)_e)\varphi_{\cosh-1}(h_\omega)^{1/2}\\
&\leq& 2K(f)\varphi_{\cosh-1}(h_\omega)^{1/2}(\I+H)^n\varphi_{\cosh-1}(h_\omega)^{1/2}.
\end{eqnarray*}
We may now use this inequality and the fact that $\mathrm{supp}(h_\omega)=\mathrm{supp}(\omega)=e_\omega$, to conclude that each $\varphi_{\cosh-1}(h_\omega)^{1/2}(K(f)(\I+H)^n+\phi(f)_e)\varphi_{\cosh-1}(h_\omega)^{1/2}$ is $\tau$-measurable, and hence also each \newline $\varphi_{\cosh-1}(h_\omega)^{1/2}\phi(f)_e\varphi_{\cosh-1}(h_\omega)^{1/2}=\varphi_{\cosh-1}(h_\omega)^{1/2}e_\omega\phi(f)_ee_\omega\varphi_{\cosh-1}(h_\omega)^{1/2}$.

If now we are able to show that in all respects the subalgebra of $\qM \rtimes_\nu\mathbb{R}$ generated by $e_\omega\qM e_\omega$ and the compressed left-translation operators $e_\omega\lambda(t)e_\omega$ is an exact copy of $e_\omega\qM e_\omega\rtimes_\omega\mathbb{R}$, it will follow from the above fact that $\varphi_{\cosh - 1}(g_\omega)^{1/2}e_\omega\phi(f)_ee_\omega\varphi_{\cosh - 1}(g_\omega)^{1/2}$ is a $\tau$-measurable operator affiliated to $e_\omega\qM e_\omega\rtimes_\omega\mathbb{R}$. 

We proceed to investigate the algebra generated by $e_\omega\qM e_\omega$ and the compressed left-translation operators $e_\omega\lambda(t)e_\omega$. For the sake of clarity we will write $\lambda_\nu(t)$ and $\lambda_\omega(t)$ for the shift operators corresponding to respectively $\qM \rtimes_\nu\mathbb{R}$ and $e_\omega\qM e_\omega\rtimes_\omega\mathbb{R}$. This part of the proof requires a more delicate treatment of the crossed product. We have up to now been identifying the algebra $\qM$ with the copy thereof $\pi_\nu(\qM)$ inside $\qM \rtimes_\nu\mathbb{R}$. We remind the reader that the crossed product $\qM\rtimes_\nu \mathbb{R}$ is properly a subalgebra of the von Neumann algebra tensor product $\qM\otimes B(L^2(\mathbb{R}))$ generated by the copy $\pi_\nu(\qM)$ of $\qM$, and the shift operators $\lambda_\nu(t)$. In particular each $\lambda_\nu(t)$ ($t\in \mathbb{R}$) is of the form $\I\otimes \ell_t$, where $\ell_t$ is the left-translation operator defined on $L^2(\mathbb{R})$ by $\ell_t(f)(s)=f(s-t)$. Similarly $(e_\omega\qM e_\omega)\rtimes_\nu \mathbb{R}$ is a subalgebra of the von Neumann algebra tensor product $(e_\omega\qM e_\omega)\otimes B(L^2(\mathbb{R}))$. 

It is useful to note that the subalgebra of $\qM \rtimes_\nu\mathbb{R}$ generated by $e_\omega\qM e_\omega$, and the compressed shift operators $e_\omega\lambda_\nu(t)e_\omega$ corresponds exactly to the compression $e_\omega(\qM \rtimes_\nu\mathbb{R})e_\omega$. To see this we note that for any $a\in\qM$ and any $t\in\mathbb{R}$, we have that $a\lambda_\nu(t)=\lambda_\nu(t)\lambda_\nu(-t)a\lambda_\nu(t)=\lambda_\nu(t)\sigma_{-t}^\nu(a)$. Using this fact it is clear that any finite algebraic combination of elements of $\qM$ and shift operators $\lambda_\nu(t)$, may be written as a linear combination of terms of the form $\lambda_\nu(t)a$. Thus $\mathrm{span}\{\lambda_\nu(t)a:a\in\qM, t\in\mathbb{R}\}$ is weak* dense in $\qM \rtimes_\nu\mathbb{R}$. We have already seen that $\lambda_\nu(t)=h_\nu^{it}$ commutes with $e_\omega$. Using that fact, it now follows that $e_\omega(\lambda_\nu(t)a)e_\omega=(e_\omega\lambda_\nu(t)e_\omega)(e_\omega ae_\omega)$. If we consider this alongside the weak* density noted above, the claim now clearly follows 

The fact that as von Neumann algebras the subalgebra of $\qM \rtimes_\nu\mathbb{R}$ generated by $e_\omega\qM e_\omega$, and the compressed left-translation operators $e_\omega\lambda_\nu(t)e_\omega$, and $(e_\omega\qM e_\omega)\rtimes_\nu \mathbb{R}$ are copies of each other, follows from the fact that the compression $(e_\omega\otimes\I)[\qM\otimes B(L^2(\mathbb{R}))](e_\omega\otimes\I)$ is an exact copy of $(e_\omega\qM e_\omega)\otimes B(L^2(\mathbb{R}))$. (This can be seen by using the relation between crossed products and  tensor products given above.) We proceed verify the finer details that will identify these two objects in a crossed product sense, rather than just von Neumann algebras. 

\begin{itemize}
\item Firstly note that the prescription $(e_\omega\xi)(s)= e_\omega(\xi(s))$ defines a map which trivially maps $L^2(\mathbb{R},\mathcal{H})$ onto $L^2(\mathbb{R},e_\omega\mathcal{H})$. More properly recall that $L^2(\mathbb{R},\mathcal{H})\equiv \mathcal{H}\otimes L^2(\mathbb{R})$ and $L^2(\mathbb{R},e_\omega\mathcal{H})\equiv e_\omega\mathcal{H}\otimes L^2(\mathbb{R})$. So in this tensor product notation, this map is nothing but $e_\omega\otimes\I$.
\item We next claim that $\pi_\nu(e_\omega ae_\omega)\xi=\pi_\omega(e_\omega ae_\omega)(e_\omega\xi)$ for each $\xi\in L^2(\mathbb{R},\mathcal{H})$ and each $a\in \qM$: To see this note that 
\begin{eqnarray*}
\pi_\nu(e_\omega ae_\omega)\xi(s) &=& \sigma_{-s}^\nu(e_\omega ae_\omega)\xi(s)\\ 
&=& h_\nu^{-is}(e_\omega ae_\omega)h_\nu^{is}\xi(s)\\
&=& h_\omega^{-is}(e_\omega ae_\omega)h_\omega^{is}\xi(s)\\ 
&=& h_\omega^{-is}(e_\omega ae_\omega)h_\omega^{is}e_\omega\xi(s)\\
&=& \sigma_{-s}^\omega(e_\omega ae_\omega)e_\omega\xi(s)\\
&=& \pi_\omega(e_\omega ae_\omega)(e_\omega\xi)(s)
\end{eqnarray*}
\item We next claim that $[e_\omega,\lambda_\nu(t)]=0$ with $e_\omega\lambda_\nu(t)$ corresponding to $\lambda_\omega(t)$ on $L^2(\mathbb{R},e_\omega\mathcal{H})$: The fact that $[e_\omega,\lambda_\nu(t)]=0$, follows from the fact that $e_\omega$ commutes with $h_\nu$ combined with the fact that $\lambda_\nu(t)=h_\nu^{it}$. The second claim follows from the easily verifiable fact that the action of $\lambda_\nu(t)$ on $L^2(\mathbb{R},e_\omega\mathcal{H})$ agrees with the action of $\lambda_\omega(t)$ on $L^2(\mathbb{R},e_\omega\mathcal{H})$.
\item We finally claim that for any $a\in\qM$ and any $s,t\in\mathbb{R}$, we have that $\theta_s^{(\nu)}(e_\omega ae_\omega)=\theta_s^{(\omega)}(e_\omega ae_\omega)$, and that $\theta_s^{(\nu )}(e_\omega \lambda_\nu(t))=\theta_s^{(\omega )}(\lambda_\omega(t))$: Since $e_\omega\in\qM$ we trivially have that $\theta_s^{(\nu )}(e_\omega)=e_\omega$. This then in turn ensures that $\theta_s^{(\nu)}(e_\omega ae_\omega)=e_\omega ae_\omega=\theta_s^{(\omega)}(e_\omega ae_\omega)$. The second claim follows on noticing that $\theta_s^{(\nu )}(e_\omega \lambda_\nu(t))=e_\omega\theta_s^{(\nu )}(\lambda_\nu(t))=e^{-ist}e_\omega\lambda_\nu(t)=e^{-ist}\lambda_\omega(t)=\theta_s^{(\omega)}(\lambda_\omega(t))$.
\end{itemize}
The above observations suffice to ensure that the subalgebra of $\qM \rtimes_\nu\mathbb{R}$ generated by $e_\omega\qM e_\omega$ and the compressed left-translation operators $e_\omega\lambda(t)e_\omega$ is as a ``crossed product'', an exact copy of $e_\omega\qM e_\omega\rtimes_\omega\mathbb{R}$, which then establishes the theorem.
\end{proof}

\subsubsection{$L^{\cosh-1}$ regularity and subtheories}
We note that the $L^{\cosh-1}$ regularity condition is inherited by certain ``translates'' of open bounded regions $\cO$ for which this condition is known to hold. Specifically given $g= (a,\Lambda) \in P^{\uparrow}_+$ and $\cO\subset \mathbb{M}$, it follows from (L2) that there is an automorphism $\alpha_g$ in $Aut(\qM)$ such that $\alpha_g(\qM(\cO)) = \qM(g\cO)$. If $g$ is chosen so that $\alpha_g$ and $\sigma_t$ commute, then we may once again use \cite[Theorem 4.1]{HJX} and \cite[Corollary 4.5]{LM2} to show that the action of the automorphism $\alpha_g$ extends to a map $\widehat{\alpha}_g$ which maps $\mathfrak{M}(\cO) \rtimes_{\sigma} \Rn$ onto $\mathfrak{M}(g\cO) \rtimes_{\sigma} \Rn$, and $h_{\cO}= \frac{d\hat{\omega}_{\cO}}{d\tau_{\cO}}$ onto $h_{g\cO}=\frac{d\hat{\omega}_{g\cO}}{d\tau_{g\cO}}$. So $\phi(f)$ is a field operator affiliated to $\mathfrak{M}(\cO)$, an $L^{\cosh-1}$ regularity restriction on $\phi(f)$ in terms of $\qM(\cO)$, translates to an $L^{\cosh-1}$ regularity restriction on $\widehat{\alpha}_g(\phi(f))$ in terms of $\qM(g\cO)$.

The concept of $L^{\cosh-1}$-regularity may of course be considered in terms of any of the local algebras $\qM(\cO)$. However in some sense local algebras $\qM(W)$ corresponding to a wedge in Minkowski space occupy a special place in this regard. (See for example the result of Bisognano and Wichman (Theorem \ref{BW}).) As we shall see below, such algebras serve as the context for proving that the concept of $L^{\cosh-1}$ regularity, behaves well with regard to ``subtheories''.

Before we are able to prove such good behaviour of $L^{\cosh-1}$-regularity with respect to subtheories, some background is necessary. In particular we need Borchers' concept of ``modular covariant subalgebras'', and a description of how that concept relates to the construction of subtheories. (See sections VI.1 and VI.3 of \cite{Bor2}.)

In the language of Borchers, a modular covariant von Neumann subalgebra $\qN$ of a von Neumann algebra $\qM$, is an algebra which satisfies the requirement that $\Delta_{\qM}^{it}\qN\Delta^{-it}_{\qM}=\qN$ for all $t\in \mathbb{R}$ \cite[Def VI.1.1]{Bor2}. But as Borchers notes 
\cite[Theorem VI.1.3]{Bor2}, it follows from a result of Takesaki \cite[Theorem IX.4.2]{Tak}, that this requirement is equivalent to the existence of a faithful normal conditional expectation $\mathcal{E}$ from $\qM$ onto $\qN$, which leaves the canonical faithful normal weight $\omega$ on $\qM$, invariant in the sense that $\omega =\omega \circ \mathcal{E}$.

To apply these concepts to local algebras, one needs the notion of ``coherent'' subalgebras. Specifically, following Borchers \cite[Def VI.3.1]{Bor2}, we make the following definition:

\begin{definition}
Let $\qM(W)$ be a von Neumann algebra associated with a wedge $W$, and admitting a cyclic and separating vector. We say that a collection of modular covariant subalgebras $\qN(W)$ of $\qM(W)$ (where $W$ ranges over all wedges in Minkowski space), is coherent if all of the projections $E_W$ from $\mathcal{H}$ onto $[\qN(W)\Omega]$ coincide. 
\end{definition}

In \cite{Bor2} Borchers defines the concept of coherence for subalgebras corresponding to double cones as well. However for the purpose of recovering his main theorem regarding subtheories (stated below), we only need information regarding the coherence of subalgebras corresponding to wedges. Following Borchers, we assume that the considered quantum field fulfills the Bisognano-Wichmann property, i.e. that \emph{for every wedge the modular groups acts locally like the associated group of Lorentz boosts on the underlying space} \cite[Definition III.1.4]{Bor2}. Furthermore again following Borchers, we set $\qM(D)=\cap\{\qM(gW): D\subset gW, g\in P^{\uparrow}_+\}$.

\begin{theorem}[{\cite[Theorem VI.3.5]{Bor2}}]\label{B-subtheory} 
Let $\{\qM(D), U(\Lambda,x), \mathcal{H}, \Omega\}$ be a family of local observables fulfilling the conditions (L1) -- (L6) described in 
the introduction, and assume that this family is in the vacuum sector. (In other words that each $\qM(D)$ is a von Neumann algebra and 
that $\Omega$ is cyclic and separating.) Given any coherent family $\qN(W)$ of modular covariant subalgebras of $\qM(W)$ (where $W$ ranges over all wedges 
in Minkowski space), there exists a family of local observables $\{\qN(D), U(\Lambda,x), E\mathcal{H}, \Omega\}$ fulfilling those same 
conditions. In other words $\{\qN(D), U(\Lambda,x), E\mathcal{H}, \Omega\}$ determines a subtheory. In particular for every wedge we have that 
$\qN(W)=\bigvee\{\qN(D): D\subset W\}$. \newline (Here $\qN(D)$ is defined to be $\cap\{\qN(W): D\subset W\}$, and $E$ is the projector onto $[\qN(W)\Omega]$.)
\end{theorem}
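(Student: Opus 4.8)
The plan is to adopt the stated prescription $\qN(D)=\cap\{\qN(W):D\subset W\}$ as the definition of the double-cone net and then to verify (L1)--(L6) for $\{\qN(D),U(\Lambda,x),E\cH,\Omega\}$, concentrating the real difficulty in a single generating identity. Two of the axioms come almost for free. Isotony (L1) is order-theoretic: if $D_1\subset D_2$ then $\{W:D_2\subset W\}\subseteq\{W:D_1\subset W\}$, so $\qN(D_1)$ is an intersection over a larger index set and therefore $\qN(D_1)\subseteq\qN(D_2)$. Locality (L3) is inherited, since $\qN(D)\subseteq\qM(D)$ for every $D$ and subalgebras of commuting algebras commute; the causal-complement bookkeeping needs no new input beyond the locality already available for $\qM$.

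The next task is to descend correctly to the reduced space $E\cH$. Modular covariance of each $\qN(W)$ together with the Takesaki criterion recalled above furnishes, for every wedge, a faithful normal $\omega$-preserving conditional expectation $\mathcal{E}_W:\qM(W)\to\qN(W)$, and the projection $E_W$ onto $[\qN(W)\Omega]$ implements it through the Jones relation $E_W\,a\,E_W=\mathcal{E}_W(a)E_W$. Coherence forces $E_W=E$ for all $W$, so a single projection governs the whole family; since $\jed\in\qN(W)$ we get $E\Omega=\Omega$, which secures the vacuum vector (L5). The same single $E$ yields a compatibility lemma that I would isolate next: for $a\in\qM(D)$ with $D\subset W\cap W'$ the two relations $EaE=\mathcal{E}_W(a)E=\mathcal{E}_{W'}(a)E$ give $(\mathcal{E}_W(a)-\mathcal{E}_{W'}(a))\Omega=0$, and since this difference lies in $\qM$, for which $\Omega$ is separating, we conclude $\mathcal{E}_W(a)=\mathcal{E}_{W'}(a)$. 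Ranging over all wedges $W'\supseteq D$ then shows $\mathcal{E}_W(\qM(D))\subseteq\qN(D)$.

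The crux of the argument --- and in my view the main obstacle --- is the generating identity $\qN(W)=\bigvee\{\qN(D):D\subset W\}$, which is simultaneously what makes the double-cone net large enough for $\Omega$ to be cyclic on $E\cH$. The inclusion $\bigvee\{\qN(D):D\subset W\}\subseteq\qN(W)$ is immediate, because $D\subset W$ places $W$ among the wedges intersected to form $\qN(D)$. For the reverse inclusion I would start from weak additivity (L4) of the ambient theory, in the form $\qM(W)=\bigvee\{\qM(D):D\subset W\}$, and push it through $\mathcal{E}_W$: normality of $\mathcal{E}_W$ together with the compatibility lemma ($\mathcal{E}_W(\qM(D))\subseteq\qN(D)$) and $\qN(W)=\mathcal{E}_W(\qM(W))$ should force $\qN(W)\subseteq\bigvee\{\qN(D):D\subset W\}$. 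The genuinely delicate point here is that $\mathcal{E}_W$ is only linear and normal, not multiplicative, so passing from the algebraic generators of $\qM(W)$ to its full von Neumann generation requires care; this is precisely where the local structure of the expectations and the coherence hypothesis must be used in tandem, and it is the step I expect to absorb most of the work.

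Finally I would assemble covariance (L2) from the geometric input. By Bisognano--Wichmann (Theorem~\ref{BW}) the modular group of each $(\qM(W),\Omega)$ acts as the boost group of $W$, so modular covariance of $\qN(W)$ says exactly that $\qN(W)$ is invariant under the boosts of $W$. Since boosts associated to the various wedges generate the restricted Poincar\'e group, these invariances --- glued together by coherence and the generating identity just established --- upgrade to full covariance $\alpha_g(\qN(W))=\qN(gW)$, hence $\alpha_g(\qN(D))=\qN(gD)$. Covariance in turn gives $U(g)[\qN(W)\Omega]=[\qN(gW)\Omega]=E\cH$, so $U(g)E=EU(g)$ and the representation restricts to $E\cH$; the spectrum condition (L6) then descends automatically, since restricting the unchanged translation generators to the invariant subspace $E\cH$ can only shrink their joint spectrum, keeping it inside the forward light cone. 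With $\Omega$ cyclic (from the generating identity) and separating (inherited from $\qM$) for the reduced net, this completes the verification that $\{\qN(D),U(\Lambda,x),E\cH,\Omega\}$ is a subtheory.
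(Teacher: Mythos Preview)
The paper does not prove this theorem; it is quoted verbatim from Borchers \cite[Theorem VI.3.5]{Bor2} as an external result and used as input for the subsequent theorem on $L^{\cosh-1}$-regularity. There is therefore no proof in the present paper to compare your proposal against.

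That said, your outline is a reasonable sketch of how the Borchers argument runs: the order-theoretic handling of (L1), the inheritance of (L3), the use of Takesaki's conditional-expectation criterion together with coherence to obtain a single projection $E$, and the use of Bisognano--Wichmann to upgrade wedge-boost invariance to full Poincar\'e covariance are all the right ingredients. You correctly identify the generating identity $\qN(W)=\bigvee\{\qN(D):D\subset W\}$ as the substantive step and honestly flag the difficulty with pushing weak additivity through a merely normal (not multiplicative) expectation. If you want to complete this into a self-contained proof rather than a citation, that step is exactly where you would need to consult Borchers' original argument in \cite{Bor2} for the details; your compatibility lemma is the right lever, but the passage from ``$\mathcal{E}_W$ maps generators into $\qN(D)$'' to ``$\mathcal{E}_W$ maps the generated von Neumann algebra into $\bigvee\qN(D)$'' needs more than normality alone.
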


We may now further refine the above theorem to show that such subtheories also behave well with regard to the $L^{\cosh-1}$-regularity condition. 

\begin{theorem} 
Assume that the conditions stated in Theorem \ref{2.3} hold and that $\{\qM(D), U(\Lambda,x), \mathcal{H}, \Omega\}$ is the net of local algebras implied by that theorem. Let $\{\qN(D), U(\Lambda,x), E\mathcal{H}, \Omega\}$ be a subtheory constructed as described in the preceding theorem from some coherent family $\qN(W)$ of modular covariant subalgebras of $\qM(W)$ (where $W$ ranges over all wedges in Minkowski space)

Let $\phi(f)$ denote the extension of the field operators described in Theorem \ref{2.3} which are affiliated to $\qM(W)$.

Then for any wedge $W$ or double cone $D$, a field operator $\phi(f)$ which is affiliated to $\qN(W)$ rather than $\qM(W)$ (respectively $\qN(D)$) will fulfill an $L^{\cosh-1}$-regularity condition with respect to $\qN(W)$ (respectively $\qN(D)$) if and only if it fulfills such a condition with respect to $\qM(W)$ (respectively $\qM(D)$).
\end{theorem}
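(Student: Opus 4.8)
The plan is to show that the modular-covariant inclusion $\qN(W)\subseteq\qM(W)$ lifts to an inclusion $\cM_{\qN}\subseteq\cM_{\qM}$ of the associated crossed products in which the density operator $h$ is literally shared and the canonical trace of the larger algebra restricts to that of the smaller; the symmetrised product defining $L^{\cosh-1}$-regularity is then \emph{the same operator} in both settings, tested for $\tau$-measurability by the same trace. First I would record the modular input: by modular covariance of $\qN(W)$ in $\qM(W)$ and Takesaki's theorem \cite[Theorem IX.4.2]{Tak} (as noted before Theorem \ref{B-subtheory}) there is a faithful normal $\omega$-preserving conditional expectation $\mathcal{E}\colon\qM(W)\to\qN(W)$, so that $\sigma_t$ leaves $\qN(W)$ invariant and restricts there to the modular group of $(\qN(W),\omega|_{\qN(W)})$. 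This compatibility lets me realise $\cM_{\qN}=\qN(W)\rtimes_\sigma\Rn$ as the von Neumann subalgebra of $\cM_{\qM}=\qM(W)\rtimes_\sigma\Rn$ generated by $\pi_{\sigma}(\qN(W))$ and the \emph{same} unitary group $\lambda(\Rn)$, since $(\pi_{\sigma}(a)\xi)(t)=\sigma_{-t}(a)\xi(t)$ only sees the restriction of $\sigma_t$ to $\qN(W)$ when $a\in\qN(W)$.

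The decisive point is that both crossed products carry the same $\lambda(\Rn)$, hence the same Stone generator: the positive operator fixed by $h^{it}=\lambda(t)$ is affiliated to $\cM_{\qN}$ and coincides with the operator $h=\hdelta$ used for $\qM(W)$. As the fundamental function $\varphi_{\cosh-1}$ depends only on the Young's function $\cosh-1$ and not on the ambient algebra, the multiplier $\varphi_{\cosh-1}(h)^{1/2}$ is literally one and the same operator in both settings and is affiliated to $\cM_{\qN}$.

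Next I would prove the trace compatibility $\tau_{\qM}|_{\cM_{\qN}}=\tau_{\qN}$. The dual action $\theta_s$ on $\cM_{\qM}$ restricts to the dual action on $\cM_{\qN}$, so the operator-valued weight $T(x)=\int_{\Rn}\theta_s(x)\,ds$ carries $\cM_{\qN}^+$ into $\hat{\qN}^+$ and agrees on $\cM_{\qN}^+$ with the weight used for $\qM(W)$; combined with $\omega=\omega\circ\mathcal{E}$, which forces Haagerup's extension of $\omega$ to $\hat{\qM}^+$ to restrict on $\hat{\qN}^+$ to the extension of $\omega|_{\qN(W)}$, this gives $\hat{\omega}_{\qM}|_{\cM_{\qN}}=\hat{\omega}_{\qN}$. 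Since $h$ is shared, the traces $\tau(\cdot)=\hat{\omega}(h^{-1/2}\,\cdot\,h^{-1/2})$ then satisfy $\tau_{\qM}|_{\cM_{\qN}}=\tau_{\qN}$. This is exactly the descent of the crossed-product data along a modular-covariant inclusion governed by \cite[Theorem 4.1]{HJX} and \cite[Corollary 4.5]{LM2}, which I would invoke both to make these identifications of $\theta_s$, $h$ and $\tau$ rigorous and to identify the amplified copy of $\cM_{\qN}$ on $L^2(\Rn,\cH)$ with its intrinsic realisation on $L^2(\Rn,E\cH)$ (a spatial isomorphism under which $\tau$-measurability is preserved).

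The conclusion is then immediate. Since $\phi(f)$ is affiliated to $\qN(W)$, the operator $\pi_{\sigma}(\phi(f))$ is affiliated to $\cM_{\qN}$, so
$$\varphi_{\cosh-1}(h)^{1/2}\,\pi_{\sigma}(\phi(f))\,\varphi_{\cosh-1}(h)^{1/2}$$
is a single operator assembled entirely from data affiliated to $\cM_{\qN}\subseteq\cM_{\qM}$. Its closability is one and the same statement in both settings, and when the closure exists it is again affiliated to $\cM_{\qN}$ (each factor is). Its spectral projections $E^{|\cdot|}(\lambda,\infty)$ therefore lie in $\cM_{\qN}$, so $\tau_{\qM}|_{\cM_{\qN}}=\tau_{\qN}$ yields $\tau_{\qN}(E^{|\cdot|}(\lambda,\infty))=\tau_{\qM}(E^{|\cdot|}(\lambda,\infty))$ for every $\lambda$; hence the closure is $\tau_{\qN}$-measurable if and only if it is $\tau_{\qM}$-measurable, which is precisely the asserted equivalence. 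The double-cone case is identical, with $\qN(D)\subseteq\qM(D)$. The hardest part is the trace-compatibility step, since it requires controlling the dual weight, the operator-valued weight $T$ and the density $h$ simultaneously across the inclusion and reconciling the amplified and intrinsic realisations of $\cM_{\qN}$; everything else is then formal.
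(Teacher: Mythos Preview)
Your proposal is correct and follows essentially the same route as the paper: obtain an $\omega$-preserving faithful normal conditional expectation from modular covariance (the paper cites \cite[Lemma 6.3.3]{Bor2} for the double-cone case), then use \cite[Theorem 4.1]{HJX} and \cite[Corollary 4.5]{LM2} (the paper also adds \cite[Lemma 4.8]{Gol}) to identify $\cN_D\subseteq\cM_D$ with matching trace and density $h$, after which the equivalence is immediate. The paper's proof is a terse two sentences that simply cites these references and declares the conclusion ``obvious''; you have unpacked the mechanism (shared $\lambda(\Rn)$, hence shared $h$; restriction of $\theta_s$ and $T$; compatibility of dual weights; equality of spectral-projection traces) that those citations encode, but the underlying argument is the same.
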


\begin{proof} The proofs being entirely analogous, we prove the claim for some double cone $D$. To start off with we note that the hypothesis of the theorem ensures that there exists a faithful normal conditional expectation $\mathcal{E}_D$ from $\qM(D)$ onto 
$\qN(D)$, which leaves the faithful normal state $\omega$ on $\qM(D)$ invariant. (See \cite[Lemma 6.3.3]{Bor2}.)

It now follows from (\cite[Theorem 4.1]{HJX}, \cite[Corollary 5.5]{LM2} \cite[Lemma 4.8]{Gol}), that we may regard $\cN_{D}=\qN(D) \rtimes \Rn$ as a subspace of $\cM_D=\qM(D) \rtimes \Rn$ in such a way that the respective canonical traces satisfy $\tau_{\cN}=\tau_{\cM}|_{\cN_{D}}$, with in addition $h_{\cN_D} = \frac{d\hat{\omega}_{\qN}}{d\tau_{\cN}} = \frac{d\hat{\omega}_{\qM}}{d\tau_{\cM}} = h_{\cM_D}$. Once these identifications have been made, the conclusion of the theorem is obvious.
\end{proof}

\subsection{Consequences of $L^{\cosh-1}$ regularity of field operators}

There is further evidence that this restriction is a physically reasonable one to make for field operators, namely that membership of  
$L^{\cosh-1}(\mathfrak{M}(D))$, will ensure that the ``generalised moments'' of the field operators are all finite. This restriction may therefore be seen as a requirement which complements the requirements noted in for example Equation \ref{fmoment} and Axiom (F1). What we mean by this statement, is contained in the following Proposition. (For the sake of simplicity of notation, we will simply write $\mathfrak{M}$ for $\mathfrak{M}(D)$ in the remainder of this subsection.

\begin{proposition} Let $\mathfrak{M}$ be $\sigma$-finite and let $\cM$ and $h$ be as before. If for each $f$ we have that $\varphi_{\cosh-1}(h)^{1/2}\phi(f)\varphi_{\cosh-1}(h)^{1/2}\in L^{\cosh-1}(\mathfrak{M})$, then for any $1\leq p<\infty$ we will have that $h^{1/(2p)}\phi(f)h^{1/(2p)}\in L^p(\mathfrak{M})$. Hence in their action on $\mathfrak{M}$, the generalised moments of the field operators will all be finite, in the sense that for any $a\in \mathfrak{M}$, $tr(a[h^{1/(2p)}\phi(f)h^{1/(2p)}]^p)$ will always be a finite complex number. (Here $tr$ is the tracial functional on $L^1$, not $\tau$.) 
\end{proposition}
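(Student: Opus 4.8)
The plan is to verify directly the two conditions that characterise membership in the Haagerup space $L^p(\qM)$: that $a_p:=h^{1/(2p)}\phi(f)h^{1/(2p)}$ (more precisely, the closure of this strong product) is $\tau$-measurable, and that it satisfies the dual-action relation $\theta_s(a_p)=e^{-s/p}a_p$ for all $s\in\Rn$. The covariance condition will turn out to be automatic, so the entire content of the first assertion lies in establishing $\tau$-measurability. For that I would exploit the factorisation $a_p=k\,a_\Psi\,k$, where $a_\Psi:=\varphi_{\cosh-1}(h)^{1/2}\phi(f)\varphi_{\cosh-1}(h)^{1/2}$ is the element which by hypothesis lies in $L^{\cosh-1}(\qM)\subseteq\widetilde{\cM}$, and $k:=h^{1/(2p)}\varphi_{\cosh-1}(h)^{-1/2}$ is a positive function of $h$. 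Since $h$ is non-singular, the identity $\varphi_{\cosh-1}(h)^{-1/2}\varphi_{\cosh-1}(h)^{1/2}=\I$ holds in $\widetilde{\cM}$, and a routine argument with cores shows that the closure of the strong product defining $a_p$ coincides with the product $k\,a_\Psi\,k$ computed in the $*$-algebra $\widetilde{\cM}$.

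For the covariance I would use that $\phi(f)$, being affiliated to $\qM$, is a fixed point of $(\theta_s)$, while $\theta_s(h)=e^{-s}h$ forces $\theta_s(h^{1/(2p)})=e^{-s/(2p)}h^{1/(2p)}$ via the Borel functional calculus (here each $\theta_s$ extends to a measure-continuous $*$-automorphism of $\widetilde{\cM}$ commuting with closures and functional calculus). Hence $\theta_s(a_p)=(e^{-s}h)^{1/(2p)}\phi(f)(e^{-s}h)^{1/(2p)}=e^{-s/p}\,h^{1/(2p)}\phi(f)h^{1/(2p)}=e^{-s/p}a_p$, which is exactly the defining relation for $L^p(\qM)$. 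Consequently, once $\tau$-measurability is in hand, the membership $a_p\in L^p(\qM)$ follows immediately.

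The main obstacle is therefore the $\tau$-measurability of the multiplier $k=g(h)$, with $g(t)=t^{1/(2p)}\varphi_{\cosh-1}(t)^{-1/2}$, which is delicate because $g$ is genuinely unbounded. The key is an interplay between the two ends of the spectrum of $h$. Using $\varphi_{\cosh-1}(t)=1/\Psi^{-1}(1/t)$ together with $\cosh u-1\sim\tfrac12 e^{u}$ as $u\to\infty$ and $\cosh u-1\sim\tfrac12 u^2$ as $u\to0$, one checks that $g(t)\to0$ as $t\to0^+$ (the polynomial factor $t^{1/(2p)}$ defeats the sub-polynomial blow-up of $\varphi_{\cosh-1}(t)^{-1/2}$), so $g$ is bounded on every interval $(0,T]$; its only possible unboundedness occurs as $t\to\infty$. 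Since $g$ is continuous on $(0,\infty)$, for large $\lambda$ the set $\{t:g(t)>\lambda\}$ is contained in a tail $(T_\lambda,\infty)$ with $T_\lambda\to\infty$, whence $\tau\big(E^{|k|}(\lambda,\infty)\big)=\tau\big(E^{h}(\{g>\lambda\})\big)\le\tau\big(E^{h}(T_\lambda,\infty)\big)<\infty$, the finiteness being precisely the $\tau$-measurability of $h$ established earlier. Thus $k\in\widetilde{\cM}$, and since $\widetilde{\cM}$ is a $*$-algebra we conclude $a_p=k\,a_\Psi\,k\in\widetilde{\cM}$. I expect this spectral bookkeeping — confirming that the growth of the multiplier is confined to the region where $h$ has finite-trace tails — to be the crux of the argument.

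For the statement on generalised moments I would invoke the standard relationship, within the theory of Haagerup $L^p$-spaces, between the $L^p$-quasinorm and the tracial functional $tr$ on $L^1(\qM)$: membership $a_p\in L^p(\qM)$ yields $|a_p|^p\in L^1(\qM)$ with $\big\|\,|a_p|^p\,\big\|_1=\|a_p\|_p^p<\infty$. Since $\phi(f)$ is a symmetric real scalar field, $a_p$ is self-adjoint, so $[a_p]^p$ shares the modulus $|a_p|^p$ and therefore also lies in $L^1(\qM)$. Finally, for $a\in\qM=L^\infty(\qM)$ the generalised Hölder inequality gives $a[a_p]^p\in L^1(\qM)$, and as $tr$ is a bounded linear functional on $L^1(\qM)$ we obtain $|tr(a[a_p]^p)|\le\|a\|_\infty\,\|a_p\|_p^p<\infty$. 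The $\sigma$-finiteness hypothesis is used throughout to guarantee that $\omega$ may be taken as a faithful normal state, that $h$ is $\tau$-measurable, and that the duality between $L^1(\qM)$ and $\qM$ behaves as in the classical theory.
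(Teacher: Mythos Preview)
Your argument is correct but follows a genuinely different route from the paper's. The paper establishes, for $p\geq 2$, the \emph{stronger} fact that $\gamma_p(t)=t^{1/p}/\varphi_{\cosh-1}(t)$ is globally bounded on $(0,\infty)$: choosing $m$ with $2m\leq p<2(m+1)$, the inequality $t^p\leq t^{2m}+t^{2(m+1)}$ combined with the Maclaurin expansion of $\cosh t-1$ gives $\frac{1}{(2(m+1))!}t^p\leq\cosh t-1$, whence $\left(\frac{t}{(2(m+1))!}\right)^{1/p}\leq\varphi_{\cosh-1}(t)$. Thus $k=\gamma_p(h)^{1/2}$ is a \emph{bounded} operator, and $\tau$-measurability of $a_p=k\,a_\Psi\,k$ is immediate. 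The paper then treats $1\leq p<2$ by an entirely separate device, invoking Kosaki's contractive embedding $L^2(\qM)\hookrightarrow L^p(\qM)$ (which requires $\sigma$-finiteness) to transport the $p=2$ conclusion downward.

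Your approach instead handles all $p\in[1,\infty)$ uniformly: you allow $k=g(h)$ to be unbounded but argue that $g$ is bounded on every $(0,T]$ (since $g(t)\to 0$ as $t\to 0^+$) with any unboundedness confined to $t\to\infty$, so that the superlevel sets $\{g>\lambda\}$ lie in tails $(T_\lambda,\infty)$ where $E^h$ has finite $\tau$-trace by the $\tau$-measurability of $h$. This makes $k$ merely $\tau$-measurable rather than bounded, but that suffices since $\widetilde{\cM}$ is a $*$-algebra. Your argument is more self-contained and avoids the case split and the external reference to Kosaki; the paper's gains the incidental byproduct that the multiplier is actually bounded when $p\geq 2$, and makes no appeal to the $\tau$-measurability of $h$ in that range.
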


\begin{proof}
Given any $2\leq p <\infty$, select $m\in \mathbb{N}$ so that $2m\leq p <2(m+1)$. Then of course $t^p\leq t^{2m}+t^{2(m+1)}$ for all $t\geq 0$. On considering the Maclaurin expansion of $\cosh(t)-1$, it now trivially follows that $\frac{1}{(2(m+1))!}t^p \leq\frac{1}{(2(m+1))!}(t^{2m}+t^{2(m+1)})\leq \cosh(t)-1$ for all $t\geq 0$. It is a straightforward exercise to conclude from this 
fact that $\left(\frac{t}{(2(m+1))!}\right)^{1/p}\leq [\mathrm{arccosh}(t^{-1}+1)]^{-1}=\varphi_{\cosh-1}(t)$. In other words the function
$\gamma_p(t)= \frac{t^{1/p}}{\varphi_{\cosh-1}(t)}$ is a well-defined bounded continuous function on $(0,\infty)$. Hence $\gamma_p(h)$ is a bounded operator. So if $\varphi_{\cosh-1}(h)^{1/2}\phi(f)\varphi_{\cosh-1}(h)^{1/2}$ is $\tau$-measurable, then so is $\gamma_p(h)^{1/2}[\varphi_{\cosh-1}(h)^{1/2}\phi(f)\varphi_{\cosh-1}(h)^{1/2}]\gamma_p(h)^{1/2}= h^{1/(2p)}\phi(f)h^{1/(2p)}$. It is now a simple matter to verify that this $\tau$-measurable operator satisfies all the membership criteria for $L^p(\mathfrak{M})$. 

It remains to consider the case $1\leq p <2$. For $\sigma$-finite algebras it is however known that $L^2(\mathfrak{M})$ contractively embeds into any $L^p(\mathfrak{M})$ (where $1\leq p <2$) in a way which will send $h^{1/4}\phi(f)h^{1/4}$ to $h^{1/(2p)}\phi(f)h^{1/(2p)}$ (see \cite{kos}). This proves the proposition.
\end{proof}

\subsection{A definition of entropy for type III algebras}

Although thus far our focus has been on analysing the link between field operators and the space $L^{\cosh-1}(\mathfrak{M})$, in concluding this section we make some brief comments about the space $L\log(L+1)(\mathfrak{M})$. This is an Orlicz space produced by the Young's function $t\log(t+1)$. To clarify our interest in this Orlicz space, we remind the reader that entropy is a crucial tool in the thermodynamical description of quantum systems. However the standard approach to statistical mechanics leads to serious problems with the definition of entropy (see Wehrl \cite{weh}). The Orlicz space $L\log(L+1)$, being dual to $L^{\cosh-1}$, can be considered to be the natural home for the states acting on regular observables. Importantly in \cite{ML} a strong case was made that the space $L^1\cap L\log(L+1)$ is home for the states with good entropy (see \cite[Proposition 6.8]{ML}).  So in addition to the duality noted above, the space $L\log(L+1)$ can also be considered as the space generated by states with ``good'' entropy. For the sake of presenting a complete physical framework, we therefore pause to indicate how the technology of Orlicz spaces for type III algebras may be used to identify a class of regular states for which continuous entropy is well-defined in even this very general context. This issue will be investigated in more detail in a follow-up paper \cite{ML2}, wherein both relative entropy and continuous entropy for general quantum systems will be considered.  

Let $\Psi_1$ and $\Psi_2$ be two Young's functions. Then $\Psi_1\vee\Psi_2$ defined by $(\Psi_1\vee\Psi_2)(t)=\max(\Psi_1(t),\Psi_2(t))$, 
can be show to also be a Young's function. One may now argue from Definition \ref{orliczelement}, to see that in the classical context a measurable function $f$ belongs to $L^{\Psi_1}\cap L^{\Psi_2}$, if and only if it belongs to $L^{\Psi_1\vee\Psi_2}$. Thus one may regard the intersection $L^{\Psi_1}\cap L^{\Psi_2}$, as another Orlicz space produced by the Young's function $\Psi_1\vee\Psi_2$. On applying this to the intersection $L^1\cap L\log(L+1)$, it follows that classically this intersection may be regarded as an Orlicz space produced by the Young's function
$$\Psi_{ent}(t)=\max(t,t\log(t+1))=\left\{ \begin{array}{ll} t & 0\leq t\leq e-1\\ t\log(t+1) & e-1\leq t\end{array} \right. $$
When regarded as an Orlicz space, we shall write $L^{ent}$ for $L^1\cap L\log(L+1)$. In the setting of general von Neumann algebras, it is the noncommutative analogue of $L^{ent}$ that will form the home for ``states'' with good entropy.
Getting back to the classical setting, we will for simplicity of computation assume that each of $L\log(L+1)(0,\infty)$ and $L^1\cap L\log(L+1)(0,\infty)$ are equipped with the 
Luxemburg norm. It is then an exercise to see that the fundamental function $\varphi_{ent}(t)=\frac{1}{\Psi_{ent}^{-1}(1/t)}$ of $L^{ent}(0,\infty)$ is of the form $\varphi_{ent}(t)=\max(t,\varphi_{\log}(t))$. It is this fundamental function that we use to construct our type III analogue of $L^{ent}$ by means of the prescription given earlier. Let us denote this space by $L^{ent}(\mathfrak{M})$.

From the above computations, it is clear 
that the functions $\zeta_1(t)= \frac{t}{\varphi_{ent}(t)}$, and $\zeta_{\log}(t)= \frac{\varphi_{\log}(t)}{\varphi_{ent}(t)}$ are both continuous and bounded above (by 1) on $(0,\infty)$. Hence the operators $\zeta_1(h)$ and $\zeta_{\log}(h)$ are both contractive 
elements of $\cM$. It is now an exercise to see that the prescriptions $\iota_1:x\to\zeta_1(h)^{1/2}x\zeta_1(h)^{1/2}$ and $\iota_{\log}:x\to\zeta_{\log}(h)^{1/2}x\zeta_{\log}(h)^{1/2}$ respectively yield continuous embeddings of $L^{ent}(\mathfrak{M})$ into 
$L^1(\mathfrak{M})$ and $L\log(L+1)(\mathfrak{M})$. (The embeddings are clearly bounded. One just needs to check that the images satisfy the membership criteria for $L^1(\mathfrak{M})$ and $L\log(L+1)(\mathfrak{M})$.) Let $x\in L^{ent}(\mathfrak{M})^+$ be given. For such an element, the quantity
$$\inf_{\varepsilon >0}[\varepsilon\tau(E^{|\iota_{\log}(x)|}(\varepsilon, \infty)) + \log(\varepsilon)\|\iota_1(x)\|_1]$$will then be well-defined (albeit possibly infinite-valued). To see this observe that the fact that $\iota_{\log}(x)\in L\log(L+1)(\mathfrak{M})$,  ensures that for some $\epsilon>0$, $\tau(E^{|\iota_{\log}(x)|}(\varepsilon, \infty))$ is finite, with $\|\iota_1(x)\|_1$ clearly always finite since $\iota_1(x)\in L^1(\mathfrak{M})$.) \textbf{It is this quantity that we propose as the type III analogue of von Neumann entropy.} 

To see that this does indeed make sense, recall that in the case where $\mathfrak{M}$ is semifinite and $\omega$ a trace, $\cM$ may be identified with $\qM\otimes L^\infty(\mathbb{R})$, and that with respect to this identification, any given Orlicz space $L^\Psi(\mathfrak{M})$ will consist of the simple tensors in $\qM\otimes L^\infty(\mathbb{R})$ of the form 
$f\otimes \varphi_{\Psi}(e^t)$, where $f$ is an element of the ``tracial'' Orlicz space $L^\Psi(\mathfrak{M},\omega)=\{g\in\widetilde{\mathfrak{M}}: \Psi(\lambda|g|)\in \widetilde{\mathfrak{M}}\mbox{ and }\omega(\Psi(\lambda|g|))<\infty\}$. In this setting we also have that $h=\I\otimes e^t$. So given some $x=f\otimes \varphi_{ent}(e^t)$ in $L^{\Psi_{ent}}(\mathfrak{M})$ (equivalently $f\in L^{\Psi_{ent}}(\mathfrak{M},\omega)$), we may use these facts to show that here $\iota_1(f\otimes \varphi_{\Psi}(e^t))=f\otimes e^t$ and $\iota_{\log}(f\otimes \varphi_{\Psi}(e^t))=f\otimes \varphi_{\log}(e^t)$. For such an $x=f\otimes \varphi_{ent}(e^t)$ in $L^{\Psi_{ent}}(\mathfrak{M})$, it follows from \cite[Corollary 2.3]{L} that the quantity \begin{eqnarray*}
&& \inf_{\varepsilon >0}[\varepsilon\tau(E^{|\iota_{\log}(x)|}(\varepsilon, \infty)) + \log(\varepsilon)\|\iota_1(x)\|_1]\\
&& = \inf_{\varepsilon >0}[\varepsilon\tau(E^{|f\otimes \varphi_{\log}(e^t)|}(\varepsilon, \infty)) + \log(\varepsilon)\|f\otimes e^t\|_1]\\
\end{eqnarray*}
may be rewritten as 
$$\inf_{\varepsilon >0}[\varepsilon\omega((f/\varepsilon)\log((f/\epsilon)+\I)) +\log(\varepsilon)\omega(f)]=\inf_{\varepsilon >0}\omega(f\log(f+\varepsilon\I)).$$By \cite[Proposition 6.8]{ML}, this yields exactly $\omega(f\log(f))$. Hence the above prescription for entropy, is a faithful extension of the existing prescriptions for entropy on semifinite von Neumann algebras. In the general setting, the subspace of $L^1(\mathfrak{M})$ corresponding to ``good'' states which have a well-defined entropy, is therefore given by $\iota_1(L^{\Psi_{ent}}(\mathfrak{M}))$.

\subsection{Concluding remarks}\label{concludesection2}

The analysis in the preceding sections, leads to the following conclusions:

We begin with a close examination of the significance of the compression used in Theorem \ref{Lcosh-criteria}. In that theorem, we refered to the Osterwalder-Schrader results. These were obtained within the framework of Euclidean field theory. We remind the reader that in the Schwinger approach, this theory is based on the Wightman axioms and (classical) probability theory - see \cite{Nel}, \cite{Simon}.  Here, among other things, we replaced classical probability theory by quantum probability formulated in terms of noncommutative integration theory.

On the other hand, Haagerup's  approach to noncommutative integration is very well suited to type III factors. Furthermore, in this approach the modular structure (as described by Tomita-Takesaki theory) plays a crucial role. Note that in the context of Theorem \ref{Lcosh-criteria}, $\Omega$ is a common cyclic and separating vector for each of the local algebras $\qM(\cO)$ in the net (\ref{net}).
But this implies that for 
$\cO \neq \cO^{\prime}$ one gets two pairs $(\qM(\cO), \Omega)$ and $(\qM(\cO^{\prime}), \Omega)$ with, in general,  $\qM(\cO) \neq \qM(\cO^{\prime})$. Thus the modular structures associated with $(\qM(\cO), \Omega)$ and $\qM(\cO^{\prime}), \Omega)$ are different - see the warning given prior the proof of Theorem 6.3.28 in \cite{BR}.

It is therefore important point to observe that here the employment of the compression given in Theorem \ref{Lcosh-criteria} radically 
changes the picture. Specifically here the subalgebra of $B(e_{\omega} \cH)$ generated by $e_{\omega} \qM(\cO) e_{\omega}$ is strongly 
dense in $e_{\omega} \qM e_{\omega}$. To see this it is enough to carefully apply Theorem 9.2.36 of \cite{KRi}.

Consequently, in the compressed structures one gets a unique modular structure, and hence a unique basis for noncommutative integration. This explains why the employment of the compression in Theorem \ref{Lcosh-criteria} was both necessary and expected.

The preceding analysis also leads to the following further conclusions:

\begin{corollary}
\label{1}
The $\tau$-measurability regularity conditions on the field operators are strongly related to the principle of relativity, i.e. on the one hand the affiliation of the field operators to the local algebra $\mathfrak{M}(D)$ results from some form of regularity with respect to the generator $H$ of time translations, whilst the embedding of the field operators into the space $L^{\cosh-1}(\mathfrak{M}(D))$, is a consequence of regularity with respect to the generator $K$ of boosts. Consequently, both of the basic transformations employed by the principle of relativity are used!
\end{corollary}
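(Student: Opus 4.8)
The plan is to treat this corollary as a synthesis rather than a fresh result: I would trace each of its two assertions -- affiliation of $\phi(f)$ to $\qM(D)$, and the $L^{\cosh-1}$-embeddability of $\phi(f)$ -- back to the single Poincar\'e generator that governs it, and then observe that these two generators are exactly the elementary building blocks of the relativity transformations. Accordingly I would argue the two halves separately and then combine them.

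For the first half I would simply recall the logical chain already assembled above, with the emphasis shifted onto the role of $H$. Theorem \ref{2.3} yields, under hypotheses (A1) and (A2), a closed extension $\phi_e$ of each $\phi\in\cP(D)$ affiliated with $\qM(D)$; and the physically natural sufficient condition for (A1)--(A2) is the generalised $H$-bound of Definition \ref{regular}, namely that $\phi(f)^{**}e^{-H^\alpha}$ be bounded for some $\alpha<1$, with $H$ the Hamiltonian. Since $H$ generates the time translations, affiliation is at root a high-energy regularity condition measured against the generator of time translations. This half is essentially a citation, restated so as to foreground $H$.

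For the second half I would unwind the $L^{\cosh-1}$-regularity condition and exhibit the boost generator inside it. That condition asks only that $\varphi_{\cosh-1}(h)^{1/2}\phi(f)\varphi_{\cosh-1}(h)^{1/2}$ be $\tau$-measurable, so everything is expressed through the single operator $h$. The decisive input is the identification $h=\hdelta$ together with the Bisognano--Wichmann theorem (Theorem \ref{BW}), which for a wedge gives $\Delta_W=e^{-2\pi K}$ with $K$ the boost generator $U(\Lambda(s))=e^{iKs}$. Reading these together shows that $h$ is a function of $K$ -- this is exactly the content of the earlier remark $h=\hdelta=e^{-K}$ -- so that $\varphi_{\cosh-1}(h)$, and with it the whole restriction, is a condition formulated relative to the boost generator. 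Combining the two halves then gives the statement: affiliation invokes $H$ and embeddability invokes $K$, and time translations and boosts are precisely the two transformations out of which the restricted Poincar\'e group is built.

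The one step that carries genuine mathematical weight, and which I expect to be the main obstacle, is making rigorous the phrase \emph{``$h$ is a function of the boost generator''}. The difficulty is threefold: $h=\hdelta$ is the modular operator of the dual weight $\hat\omega$ on the enlarged space $L^2(\Rn,\cH)$ whereas $\Delta_W=e^{-2\pi K}$ lives on $\cH$, so the two must be bridged through the chain $h^{it}=\hdelta^{it}$, the relation $\widetilde{\sigma}_t(a)=\sigma_t(a)$ for $a\in\qM$, and the geometric statement of Theorem \ref{BW} that $\sigma_t$ acts as the boosts; one must keep track of the factor $2\pi$ relating the modular parameter to the boost parameter; and the clean boost interpretation of the modular flow is exact only for wedges, which is precisely why the wedge algebras $\qM(W)$ played the distinguished role in the preceding construction. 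For a general double cone $D$ one should therefore read $K$ as the modular Hamiltonian, with its boost interpretation supplied by the wedge case.
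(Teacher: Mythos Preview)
Your proposal is correct and matches the paper's approach exactly: the corollary is stated in the paper without a separate proof, as a synthesis of the preceding analysis, and your two-part unpacking --- affiliation via the generalised $H$-bound of Definition \ref{regular} feeding into Theorem \ref{2.3}, and $L^{\cosh-1}$-embeddability via the identification $h=\hdelta$ together with the Bisognano--Wichmann Theorem \ref{BW} --- is precisely that synthesis. Your closing paragraph on the wedge-versus-double-cone distinction and the passage between $\cH$ and $L^2(\Rn,\cH)$ in fact articulates caveats that the paper leaves implicit.
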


and

\begin{corollary}
\label{2}
The axioms of QFT imply that field operators should have all moments finite. This requirement is intimately related to the requirement that these operators embed into the space $L^{\cosh-1}(\mathfrak{M})$. The states which come from the space $L^{ent}(\mathfrak{M})$ (the type III analogue of $L^1\cap L\log(L+1)$ described above) moreover all admit a good definition of entropy. Consequently, the strategy based on the quantum Pistone-Sempi theorem, see \cite{LM}, \cite{ML} leads to the conjecture that the proper formalism for QFT is that based on the pair of quantum Orlicz spaces $\langle L^{\cosh-1}(\mathfrak{M}), L\log(L+1)(\mathfrak{M})\rangle$.
\end{corollary}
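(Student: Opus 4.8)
The plan is to assemble Corollary \ref{2} by combining the three rigorous facts already established in Section \ref{locint} and then indicating how the quantum Pistone-Sempi machinery elevates them into the stated (conjectural) conclusion. Concretely, I would treat the statement as having three provable components --- finiteness of moments, its translation into $L^{\cosh-1}$-membership, and the existence of a good entropy on the dual side --- followed by a synthesising, conjectural fourth component.

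First I would dispose of the claim that the axioms force all moments of the field operators to be finite, together with its relation to $L^{\cosh-1}(\qM)$. The first half is nothing but the Observation of Section \ref{locint} together with Axiom (F1): for every $v,w\in\cD$ and every $n\in\Nn$ the number $(v,\phi^n(f)w)\in\Cn$ is finite, so in any vector state drawn from the common domain $\cD$ every moment exists (this is Equation \ref{fmoment}). The relation to $L^{\cosh-1}$ is then exactly the content of the preceding Proposition: once $\varphi_{\cosh-1}(h)^{1/2}\phi(f)\varphi_{\cosh-1}(h)^{1/2}\in L^{\cosh-1}(\qM)$, the symmetrised operators $h^{1/(2p)}\phi(f)h^{1/(2p)}$ lie in every $L^p(\qM)$ with $1\le p<\infty$, so that $tr(a[h^{1/(2p)}\phi(f)h^{1/(2p)}]^p)$ is finite for each $a\in\qM$. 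Thus the axiomatic moment-finiteness is reproduced, in the trace-theoretic form appropriate to a type $III$ algebra, precisely by the $L^{\cosh-1}$-regularity restriction; this is what ``intimately related'' is intended to mean, and no new argument beyond quoting these two results is required.

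For the entropy assertion I would invoke the construction of $L^{ent}(\qM)$ carried out above. The bounded continuous functions $\zeta_1$ and $\zeta_{\log}$ produce contractive embeddings of $L^{ent}(\qM)$ into $L^1(\qM)$ and into $L\log(L+1)(\qM)$, and for an element of the form $x=f\otimes\varphi_{ent}(e^t)$ the proposed entropy functional collapses to $\inf_{\varepsilon>0}\omega(f\log(f+\varepsilon\I))$, which by \cite[Proposition 6.8]{ML} equals $\omega(f\log f)$. Hence every state arising from $L^{ent}(\qM)$ carries a well-defined, finite von Neumann-type entropy, which establishes the third component verbatim from the preceding subsection.

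The remaining clause is a conjecture rather than a theorem, so the ``proof'' here is genuinely one of synthesis. Having shown that the regular field operators live in $L^{\cosh-1}(\qM)$ and that the states of good entropy live in its K\"othe dual $L\log(L+1)(\qM)$, I would appeal to the quantum Pistone-Sempi theorem of \cite{LM,ML}, which singles out exactly this complementary pair as the carrier of a noncommutative information manifold. The hardest --- and intentionally non-rigorous --- step is precisely this one: the Pistone-Sempi framework was built in the semifinite (statistical-mechanics) setting, whereas here $L^{\cosh-1}(\qM)$ and $L\log(L+1)(\qM)$ are the Haagerup-type Orlicz spaces attached to the density $h$ of a type $III_1$ local algebra. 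The main obstacle is therefore to argue that the relevant duality and smooth-manifold structure survive the passage from a trace to the operator-valued-weight construction of Section \ref{locint}; accordingly I would present the final assertion as a conjecture supported by the three rigorous components above, rather than as a proved equivalence.
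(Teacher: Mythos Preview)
Your proposal is correct and matches the paper's approach: the paper presents Corollary~\ref{2} as a concluding summary with no formal proof, simply prefacing it with ``The analysis in the preceding section leads to the following conclusions,'' and your synthesis draws on exactly the same three ingredients (the Observation/Equation~\ref{fmoment}, the Proposition on $L^p$-embedding, and the $L^{ent}$ entropy construction) that the surrounding text establishes. Your explicit acknowledgment that the final Pistone--Sempi clause is conjectural rather than proved is also faithful to the paper's intent.
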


\section{Tangentially conditioned algebras: an interlude on comparing $\qM(\mathbb{M})$ and $\qM(M)$}

Throughout this section and the next, $M$ will be a (smooth) $d$-dimensional, connected time-oriented globally hyperbolic Lorentzian manifold, $\mathbf{g}$ a Lorentzian metric on $M$, and $\mathbb{M}$ d-dimensional Minkowski space-time. 

The axiomatisation of local algebras for manifolds rather than Minkowski space, was pioneered by Dimock, et al, as early as 1980 \cite{D1, D2}. However it was not until 2003 that Brunetti, Fredenhagen and Verch added the all important principle of locality to the covariance axioms for these algebras. For the sake of background we review some material from \cite{BFV}, \cite{BF}. All these results are formulated in strongly categorical language. However the crucial result for us, is the description of such local algebras given in \cite[Theorem 2.3]{BFV}. For a local algebra fulfilling their criteria, Brunetti, Fredenhagen and Verch obtain the result below. This encapsulates the structural information which will form the starting point of our subsequent modelling. Since in their theory the local algebras can be either $C^*$-algebras or von Neumann algebras, we here follow their convention of denoting the local algebras by $\cA(\cO)$. Here $\cK(M)$ denotes the set of all open subsets in $M$ which are relatively compact and which for each pair of points $x$ and $y$, also contain all $\mathbf{g}$-causal 
curves in $M$ connecting $x$ and $y$. Moreover, as before $\cA(\cO)$ denotes the $C^*$-algebra generated by field operators $\phi(f)$ with test function supported in $\cO$, i.e. $\mathrm{supp} f \subseteq \cO$. We emphasize that $M$ is now a much more general manifold than $\mathbb{M}$, which was the focus in the previous sections.

\begin{theorem}\cite{BFV} \label{localnet}
Let ${\mathscr A}$ be a covariant functor with the properties stated in \cite[Def. 2.1]{BFV}, 
and define a map $\cK(M,\mathbf{g}) \owns O \mapsto \cA(O) \subset
{\mathscr A}(M,\mathbf{g})$ by setting
$$ \cA(O) := \alpha_{M,O}({\mathscr A}(O,\mathbf{g}_O))\,,$$
having abbreviated $\alpha_{M,O} \equiv \alpha_{\iota_{M,O}}$.
The following statements hold:
\begin{itemize}
\item[(a)] Isotony, i.e.
 $$\cO_1 \subset \cO_2 \Rightarrow
\cA(\cO_1) \subset \cA(\cO_2) \qquad \mbox{for all}\qquad  \cO_1,\cO_2 \in
\cK(M)\,.$$ 
\item[(b)] 
If there exists a group $G$ of isometric diffeomorphisms
$\kappa: M \to M$ (so that $\kappa_*{\mathbf{g}} = {\mathbf{g}}$) preserving orientation
and time-orientation, then there is a representation by C$^*$-algebra automorphisms
${\alpha}_{\kappa} : \cA(M) \to \cA(M)$ 
such that
$$
{\alpha}_{\kappa}(\cA(\cO)) = \cA(\kappa(\cO))\,, \quad \cO \in
\cK(M)\,.
$$
\item[(c)] If the algebra belongs to the ``causal'' category of their theory, then it also holds that 
$$ [\cA(\cO_1),\cA(\cO_2)] = \{0\} $$
for all $\cO_1,\cO_2 \in \cK(M)$ with $\cO_1$ causally separated from
$\cO_2$.
\item[(d)] If the theory of which this algebra is part fulfills the time-slice
  axiom, and 
$\Sigma$ is a Cauchy-surface in $M$ with $S
  \subset \Sigma$ open and connected, then for each $\cO \in
  \cK(M)$ with $\cO \supset S$ it holds that
 $$ \cA(\cO) \supset \cA(S^{\perp}{}^{\perp}) $$
where $S^{\perp}{}^\perp$ is the double causal complement of $S$, and
$\cA(S^{\perp}{}^{\perp})$ is defined as the smallest $C^*$-algebra
formed by all $\cA(\cO_1)$, $\cO_1 \subset S^{\perp}{}^{\perp}$, $\cO_1
\in \cK(M)$.
\end{itemize}
\end{theorem}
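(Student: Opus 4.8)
The plan is to deduce all four properties purely from the defining data of the covariant functor ${\mathscr A}$ of \cite[Def.~2.1]{BFV}: functoriality $\alpha_{\psi\circ\phi}=\alpha_\psi\circ\alpha_\phi$ with $\alpha_{\id}=\id$, the injectivity of every $\alpha_\iota$, and the covariance, causality and time-slice axioms. The single recurring device is to factor or intertwine the inclusion morphisms $\iota_{M,\cO}\colon(\cO,\mathbf{g}_\cO)\hookrightarrow(M,\mathbf{g})$ that enter $\cA(\cO)=\alpha_{M,\cO}({\mathscr A}(\cO,\mathbf{g}_\cO))$, and then push the relevant axiom through functoriality. Parts (a)--(c) are essentially formal, each consuming one axiom; part (d) carries the genuine geometric content and is where I expect the real work to lie.

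For (a), when $\cO_1\subset\cO_2$ the inclusion factors as $\iota_{M,\cO_1}=\iota_{M,\cO_2}\circ\iota_{\cO_2,\cO_1}$, so functoriality gives $\alpha_{M,\cO_1}=\alpha_{M,\cO_2}\circ\alpha_{\cO_2,\cO_1}$; since $\alpha_{\cO_2,\cO_1}({\mathscr A}(\cO_1))\subseteq{\mathscr A}(\cO_2)$, applying $\alpha_{M,\cO_2}$ yields $\cA(\cO_1)\subseteq\cA(\cO_2)$. For (b), each $\kappa\in G$ is itself a morphism $M\to M$, hence produces an automorphism $\alpha_\kappa$ of ${\mathscr A}(M)=\cA(M)$; I would exploit the geometric identity $\kappa\circ\iota_{M,\cO}=\iota_{M,\kappa(\cO)}\circ(\kappa|_\cO)$, with $\kappa|_\cO\colon\cO\to\kappa(\cO)$ an isomorphism of spacetimes, which through functoriality gives $\alpha_\kappa\circ\alpha_{M,\cO}=\alpha_{M,\kappa(\cO)}\circ\alpha_{\kappa|_\cO}$, and since $\alpha_{\kappa|_\cO}$ carries ${\mathscr A}(\cO)$ onto ${\mathscr A}(\kappa(\cO))$ one reads off $\alpha_\kappa(\cA(\cO))=\cA(\kappa(\cO))$. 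For (c), causally separated $\cO_1,\cO_2\in\cK(M)$ furnish embeddings $\iota_{M,\cO_1},\iota_{M,\cO_2}$ with causally disjoint images, so the causality axiom applies verbatim to force $[\cA(\cO_1),\cA(\cO_2)]=\{0\}$.

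The substantive step is (d), for which I would use the time-slice axiom in the form that a morphism whose image contains a Cauchy surface of the target induces an \emph{isomorphism} of algebras. Since $\cA(S^{\perp\perp})$ is, by definition, the $C^*$-algebra generated by all $\cA(\cO_1)$ with $\cO_1\in\cK(M)$ and $\cO_1\subset S^{\perp\perp}$, it suffices to place each such generator inside $\cA(\cO)$. The geometric input I would invoke is that $S^{\perp\perp}$ is itself globally hyperbolic with $S$ as a Cauchy surface, so that the causal shadow of the relatively compact $\cO_1$ meets $S$ in a compact set $K$, and $\cO_1$ lies in the Cauchy development $D(V)$ of a causally convex $V\in\cK(M)$ chosen with $K\subset V\subset\cO$ (possible because $\cO$ is open and $\cO\supset S\supset K$), where $V$ contains a Cauchy surface of $D(V)$ and $D(V)\in\cK(M)$. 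The time-slice axiom then makes $\alpha_{D(V),V}\colon{\mathscr A}(V)\to{\mathscr A}(D(V))$ bijective, whence, routing $\cO_1\subset D(V)$ and $V\subset\cO$ through functoriality and part (a), $\cA(\cO_1)=\alpha_{M,D(V)}(\alpha_{D(V),\cO_1}({\mathscr A}(\cO_1)))\subseteq\alpha_{M,D(V)}({\mathscr A}(D(V)))=\alpha_{M,V}({\mathscr A}(V))=\cA(V)\subseteq\cA(\cO)$. Taking the generated $C^*$-algebra gives $\cA(S^{\perp\perp})\subseteq\cA(\cO)$. The main obstacle is entirely Lorentzian-geometric rather than algebraic: verifying that $S^{\perp\perp}$ is globally hyperbolic with $S$ a Cauchy surface, that Cauchy developments of relatively compact causally convex sets are again relatively compact and causally convex (hence lie in $\cK(M)$), and that the interposed neighbourhood $V\subset\cO$ can be arranged to contain a Cauchy surface of $D(V)$; once these standard but non-trivial facts are in hand, the functorial bookkeeping above is immediate.
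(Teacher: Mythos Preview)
The paper does not supply its own proof of this theorem: it is quoted verbatim as \cite[Theorem 2.3]{BFV} and serves purely as background for the subsequent discussion of tangentially conditioned algebras. There is therefore no in-paper argument to compare your proposal against.

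That said, your reconstruction is faithful to the proof in \cite{BFV} itself. Parts (a)--(c) are handled there exactly as you describe, by factoring inclusion morphisms and invoking functoriality together with the relevant axiom. For (d), Brunetti, Fredenhagen and Verch likewise reduce to showing each $\cA(\cO_1)$ with $\cO_1\subset S^{\perp\perp}$ lies in $\cA(\cO)$, and do so by interposing a causally convex neighbourhood of the relevant piece of $S$ inside $\cO$ whose Cauchy development contains $\cO_1$, then applying the time-slice axiom. Your identification of the Lorentzian-geometric facts (global hyperbolicity of $S^{\perp\perp}$, behaviour of Cauchy developments of relatively compact causally convex sets) as the genuine content is accurate; these are established or cited in \cite{BFV} rather than reproved. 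So your proposal is correct and matches the original source, but there is simply nothing in the present paper to set it against.
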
 

Appealing as this paradigm may be, no manifold based theory is complete without a clear concept of vector bundles. With this in mind we wish to investigate the extent to which tangential phenomena may be encoded at the algebra level. This consideration will ultimately lead us to the identification of a large class of algebras $\cA(M)$ which locally ``look like'' algebras of the form $\cA(\mathbb{M})$, and for which the conclusions of the preceding section are therefore applicable. To achieve this objective, we go back to the example that inspired both Dimock, and Brunetti, Fredenhagen and Verch to formulate their approaches to local algebras on Lorentzian manifolds. To formulate the next result some preliminaries are necessary. 
We wish to consider local algebras generated by field operators which are solutions of the Klein-Gordon equation.
As was stated, $M$ stands for a manifold satisfying the conditions given at the beginning of this section. $C^{\infty}_0(M)$ will denote the space of smooth, real valued functions on $M$ which have compact support. 
Following Dimock \cite{D1} (see also \cite{BFV}), we will describe the CCR algebra of bosonic fields on the manifold $M$ given by solutions of the Klein-Gordon equation. 
The crucial point in his construction, is that global hyperbolicity of $M$ entails existence of global fundamental solutions $E$ for the Klein-Gordon equation $(\Box + m^2 + \xi R)\phi = 0$, where $m\geq 0$ , $\xi \geq0$ are constants, and $R$ is the scalar curvature of the metric on $M$. In particular, $E = E^{adv} - E^{ret}$, where $E^{adv/ret}$ (advanced/retarded, 
respectively) are well defined maps such that $E^{adv/ret}: C^{\infty}_0(M) \to C^{\infty}(M)$. 

It was shown by Dimock \cite{D1}, that the property of bosonic field operators being solutions of the Klein-Gordon equation, is characterized by the following relations:
\begin{equation}
e^{i\phi(f)} e^{i \phi(f^{\prime})} = e^{i\phi(f^{\prime})} e^{i \phi(f)} e^{-i\langle f,Ef^{\prime}\rangle}.
\end{equation}

This leads to the following form of Weyl relations for $W(f) \equiv e^{i\phi(f)}$:

\begin{equation}
W(f) W(f^{\prime}) = e^{- \frac{i}{2} \langle f, E f^{\prime}\rangle} W(f+f^{\prime}).
\end{equation}
Denote by $\cR$ the real vector space $E\left(( C^{\infty}_0(M)\right)$. We note that $\sigma(f,g) = \int_M f Eg dV_M$, where $dV_M$ is the volume form on $M$, gives a symplectic form on $\cR$, i.e. $(\cR, \sigma)$ is the symplectic space.
The $C^*$-algebra of canonical commutation relations over a symplectic form $(\cR, \sigma)$, writen as $\mathfrak{M}(\cR, \sigma)$, is by definition the $C^*$-algebra generated by elements
\begin{equation}
W(-f) = W(f)^*, \quad W(f) W(g) = e^{-\frac{i}{2} \sigma(f,g)} W(f+g).
\end{equation}

Let $(M_1, g_1)$  and $(M_2, g_2)$ be two given manifolds satisfying the prescribed conditions. Assume that $\psi: M_1 \to M_2$ is a diffeomorphism satisfying $\psi_*g_1 = g_2|_{\psi(M_1)}$ and which also preserves causality and orientation; see \cite{BFV} for details. Denote by $E^{\psi}$ the global fundamental solution of the Klein-Gordon equation on $\psi(M)$, and write $\cR^{\psi} \equiv E^{\psi}\left(C^{\infty}_0(\psi(M))\right)$. Dimock \cite{D1}, has shown that $E^{\psi} = \psi_* \circ E \circ \psi^{-1}_*$ and $\cR^{\psi} = \psi_* \cR$, where $\psi_* f = f \circ \psi^{-1}$. Moreover
\begin{equation}
\sigma(Ef,Eg) = \sigma^{\psi}(\psi_*Ef,\psi_*Eg),
\end{equation}
where $\sigma^{\psi}$ is a symplectic form on $\cR^{\psi}$ which is defined in an analogous fashion to $\sigma$. Then the prescription
\begin{equation}\label{3.5}
\tilde{\alpha}_{\psi}(W(f)) = W^{\psi}(\psi_*f), \quad f \in \cR,
\end{equation}
for the generators $\{W^{\psi} \}$ of the CCR algebra over $(\cR^{\psi}, \sigma^{\psi})$, leads to $^*$-isomorphisms between the corresponding algebras. To summarize, the CCR algebras described above, yield a version of Theorem \ref{localnet} adapted to solutions of the Klein-Gordon equation. The theorem as stated below is due to Brunetti, Fredenhagen and Verch (see \cite[Theorem 2.2]{BFV}). To be faithful to their comprehension and formulation of this result, we first define the category $\textbf{\textsf{Loc}}$

\begin{description}
\item[$\textbf{\textsf{Loc}}$] The class of objects
$\mathrm{obj}(\textbf{\textsf{Loc}})$ is formed by all (smooth) $d$-dimensional 
($d\ge 2$ is held fixed), globally
hyperbolic Lorentzian spacetimes $M$ which are oriented and time-oriented.
Given any two such objects $M_1$ and $M_2$, the morphisms
$\psi\in\mbox{\rm hom}_{\textbf{\textsf{Loc}}}(M_1,M_2)$ 
are taken to be the isometric embeddings
$\psi: M_1 \to M_2$ of $M_1$ into
$M_2$ but with the following constraints;  
\begin{itemize}
\item[$(i)$] if $\gamma : [a,b]\to M_2$ is any 
causal curve and
$\gamma(a),\gamma(b)\in\psi(M_1)$ then the whole curve must be 
in the image $\psi(M_1)$, i.e., $\gamma(t)\in\psi(M_1)$ for all $t\in ]a,b[$; 
\item[$(ii)$] any morphism preserves orientation and
  time-orientation of the embedded spacetime.
\end{itemize} 
Composition is composition of maps, the unit element in $\mbox{\rm hom}_{\textbf{\textsf{Loc}}}(M,M)$ is given by 
the identical embedding ${\rm id}_M : M\mapsto M$ for
any $M\in\mathrm{obj}(\textbf{\textsf{Loc}})$.
\end{description}

\begin{theorem}\label{ccr}
For each $M \in \mathrm{obj}(\textbf{\textsf{Loc}})$ define the $C^*$-algebra
$\cA(M)$ as the CCR-algebra $\mathfrak{W}({\mathcal
  R}(M),\sigma_{M})$ of the Klein-Gordon equation
\begin{equation}\label{KGeqn}
(\nabla^a\nabla_a + m^2 + \xi R)\varphi = 0
\end{equation}
(with $m,\xi$ fixed for all $M$ and $\mathcal{R}$ the scalar curvature), and for each
$\psi \in \mbox{\rm hom}_{\textbf{\textsf{Loc}}}(M,M')$ define the $C^*$-algebraic
endomorphism by $\alpha_{\psi} = \tilde{\alpha}_{\iota_{\psi}} \circ
\tilde{\alpha}_{\psi}: \cA(M) \to \cA(M')$ 
where $\tilde{\alpha}_{\psi}$ and $\tilde{\alpha}_{\iota_{\psi}}$ are respectively given by 
\begin{equation}
\label{Bog1}
 \tilde{\alpha}_{\psi}(W(\varphi)) = W^{\psi}(\psi_*(\varphi))\,, \quad
 \varphi \in {\mathcal R}\,
\end{equation}
and
\begin{equation} \label{Bog2}
\tilde{\alpha}_{\iota_{\psi}}(W^{\psi}(\phi)) =
W'(T^{\psi}\phi)\,,  \quad \phi \in {\mathcal R}^{\psi}\, .
\end{equation}
(Here $W^{\psi}(\,.\,)$ are as before, $W(.)$ the generators of the Weyl-generators of 
$\mathfrak{W}({\mathcal R},\sigma)$, 
and $W'(\,.\,)$ the Weyl-generators of
$\mathfrak{W}({\mathcal R}',\sigma')$, while $T^\psi$ is the corresponding symplectic map from $({\mathcal R}^\psi,\sigma^\psi)$ into $({\mathcal R}',\sigma')$.) 

\textbf{Then the corresponding local algebra fulfills all the criteria of the preceding theorem including causality and the time-slice axiom.}
\end{theorem}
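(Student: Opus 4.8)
The plan is to verify that the CCR assignment $M\mapsto\cA(M)=\mathfrak{W}(\cR(M),\sigma_M)$ is a covariant functor on $\textbf{\textsf{Loc}}$ satisfying the axioms of \cite[Def.~2.1]{BFV}, so that parts (a) and (b) of Theorem \ref{localnet} become automatic, and then to verify causality and the time-slice axiom directly from the support properties of the fundamental solutions. First I would record that each $\sigma_M$ is a (weakly) non-degenerate symplectic form on $\cR(M)=E(C^{\infty}_0(M))$. This is where global hyperbolicity first enters: $\cR(M)$ is by construction $C^{\infty}_0(M)$ modulo the kernel of $E$, and non-degeneracy of $\sigma_M$ on the quotient follows from the exactness relation $\ker E = P(C^{\infty}_0(M))$, where $P=\nabla^a\nabla_a+m^2+\xi R$. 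By the uniqueness theorem for Weyl algebras over a non-degenerate symplectic space, each $\cA(M)$ is then a well-defined simple $C^*$-algebra.

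Next I would check that each morphism $\psi$ induces a unit-preserving injective $^*$-homomorphism. The point is that both constituents of $\alpha_\psi=\tilde{\alpha}_{\iota_{\psi}}\circ\tilde{\alpha}_{\psi}$ are induced by symplectic maps: the pushforward $\psi_*$ is symplectic by the identity $\sigma(Ef,Eg)=\sigma^{\psi}(\psi_*Ef,\psi_*Eg)$ together with Dimock's relations $E^{\psi}=\psi_*\circ E\circ\psi_*^{-1}$ and $\cR^{\psi}=\psi_*\cR$, while $T^{\psi}$ is symplectic by construction. A symplectic map between symplectic spaces induces a $^*$-homomorphism of the associated Weyl algebras through the defining prescriptions \eqref{Bog1}--\eqref{Bog2}, and this homomorphism is automatically injective because the domain algebra is simple. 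Functoriality ($\alpha_{\mathrm{id}_M}=\mathrm{id}$ and $\alpha_{\psi'\circ\psi}=\alpha_{\psi'}\circ\alpha_\psi$) then reduces to the corresponding composition laws for the pushforwards, which are again Dimock's. Once this is in hand, isotony (a) and covariance (b) follow from Theorem \ref{localnet}, since the inclusion of a relatively compact causally convex region is a $\textbf{\textsf{Loc}}$-morphism (causal convexity being built into the definition of $\cK(M)$), and an orientation- and time-orientation-preserving isometry is a $\textbf{\textsf{Loc}}$-automorphism of $M$.

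For causality (c) it suffices to treat the Weyl generators. If $\cO_1$ and $\cO_2$ are causally separated with $\mathrm{supp}\,f\subset\cO_1$ and $\mathrm{supp}\,g\subset\cO_2$, then $W(f)W(g)=e^{-i\sigma(f,g)}W(g)W(f)$, so the two generators commute as soon as $\sigma(f,g)=\int_M f\,(Eg)\,dV_M=0$. This integral vanishes because $Eg=(E^{adv}-E^{ret})g$ is supported in the causal shadow $J^+(\mathrm{supp}\,g)\cup J^-(\mathrm{supp}\,g)$, which is disjoint from $\mathrm{supp}\,f$ by causal separation, whence $f\cdot Eg\equiv 0$. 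Since such generators generate the local algebras, $\cA(\cO_1)$ and $\cA(\cO_2)$ commute; this places the functor in the causal category and is precisely the conclusion of Theorem \ref{localnet}(c).

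The time-slice axiom (d) rests on well-posedness of the Cauchy problem, the key being that every class in $\cR(M)$ has a representative supported in an arbitrarily thin neighbourhood $\cO$ of a Cauchy surface. Concretely, writing $u=Ef$ (so $Pu=0$) and choosing $\chi\in C^{\infty}(M)$ equal to $0$ to the past of an earlier Cauchy surface in $\cO$ and $1$ to the future of a later one, one sets $j=-P(\chi u)$; then $\mathrm{supp}\,j$ lies in the slab between the two surfaces, hence in $\cO$, while uniqueness of past/future-compact solutions gives $E^{ret}j=-\chi u$ and $E^{adv}j=(1-\chi)u$, so that $Ej=u=Ef$. Thus $W(f)=W(j)\in\cA(\cO)$, giving $\cA(\cO)=\cA(M)\supset\cA(S^{\perp\perp})$; this is Dimock's argument \cite{D1}, which I would simply cite. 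The main obstacle throughout is not the functorial bookkeeping, which is formal once the symplectic identities are invoked, but rather the causal-support analysis of $E^{adv/ret}$ underlying (c) and (d); these steps use in an essential way that $M$ is globally hyperbolic, and it is there that the geometric content of the theorem is concentrated.
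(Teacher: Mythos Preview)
Your argument is correct and is precisely the standard proof. Note, however, that the paper does not itself supply a proof of this theorem: it is stated as a quotation of \cite[Theorem 2.2]{BFV} (with the underlying propagator analysis attributed to Dimock \cite{D1}), and no independent argument is given in the text. What you have written is essentially a condensed version of the BFV/Dimock proof --- functoriality from the Bogoliubov-type symplectic identities, causality from the support properties of $E=E^{adv}-E^{ret}$, and the time-slice axiom from the $\chi$-cutoff construction --- so your approach coincides with that of the cited source rather than differing from it.
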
 

We pause to point out that we do not in any way wish to add to or analyse the mathematical physical modelling inherent in the above result. Our only objective in stating the above result, is to emphasise the importance of the class of CCR-algebras $\mathfrak{W}({\mathcal R}(M),\sigma_{M})$. Having noted their significance, we show that these algebras are also \emph{tangentially conditioned}, where the (mathematical) concept of tangential conditioning of local algebras is defined as below.  

\begin{definition}\label{tancon}
We say that a local algebra $\cA(M)$ is \emph{tangentially conditioned} if $\cA(M)$ behaves well with respect to the atlas on $M$ in the following sense: Given any point $p\in M$, there exists a triple $(\iota_p, \cO_p, \tilde{\cO}_0)$ from this atlas, where $\cO_p$ is a neighbourhood of the point $p\in M$, $\tilde{\cO}_0$ a corresponding diffeomorphic neighbourhood of $0\in \mathbb{M}$, and $\iota_p$ the diffeomorphism on $\tilde{\cO}_0$ identifying these neighbourhoods, such that the algebras $\cA(\cO_p)$ and $\cA(\tilde{\cO}_0)$ are $*$-isomorphic by means of a $*$-isomorphism $\beta_p$ implemented by the diffeomorphism in the sense that for any open subset $\cO_1$ of 
$\cO_p$, the restriction of $\beta_p$ to $\cA(\cO_1)$ yields a $*$-isomorphism from $\cA(\cO_1)$ onto $\cA(\iota_p(\cO_1))$.
\end{definition}

It is clear from the definition that tangentially conditioned algebras $\cA(M)$, are those which at a local level ``look like'' the algebra $\cA(\mathbb{M})$. Thus for this class of algebras, the conclusions of the preceding section are at a local level immediately applicable. The important fact to note, is that the CCR-algebras considered earlier in 
Theorem \ref{localnet} provide concrete examples of tangentially conditioned algebras!
Theorem \ref{ccr} also provides concrete examples of tangentially conditioned algebras, on condition that the Klein-Gordon equation is locally solvable, i.e. there is an atlas such that for each chart $(\cO_p, \iota_p)$ there is a (local) fundamental solutions 
$E_{\cO_p}$. In other words, $\cO_p$ considered as a manifold should enjoy all manifold properties assumed in Theorem \ref{ccr}.

\begin{theorem}
\label{1.4}
The CCR algebras
\begin{enumerate}
\item $\cA(M)=\mathfrak{W}((M),\sigma_{M})$  generated on the symplectic space \newline
$\left( \mathrm{span}_{\bC}\left(C^{\infty}_0(M)\right), \sigma(f,g) = Im \int_M \overline{f} g dV_M \right)$; $\mathrm{span}_{\bC}$ stands for the complex span of $C^{\infty}_0(M)$,
\item $\cA_0(M)=\mathfrak{W}({\mathcal R}(M),\sigma^E_{M})$ generated on the symplectic space \newline 
$\left( C^{\infty}_0(M), \sigma(f,g) = \int_M f Eg dV_M\right)$ for the locally solvable Klein-Gordon equation,
\end{enumerate}
are tangentially conditioned.
\end{theorem}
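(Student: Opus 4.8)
The plan is to realise each $\beta_p$ as the $*$-isomorphism of Weyl algebras induced by a suitable symplectomorphism of the underlying test-function spaces, and then to read off the net-compatibility from the support-preserving nature of that symplectomorphism. The guiding principle, already implicit in equation \eqref{3.5} and Theorem \ref{ccr}, is that the CCR algebra $\mathfrak{W}(\cR,\sigma)$ is determined up to a unique $*$-isomorphism by its symplectic space, so that any linear symplectomorphism $T\colon(\cR_1,\sigma_1)\to(\cR_2,\sigma_2)$ lifts to a $*$-isomorphism $W(f)\mapsto W(Tf)$ of the associated Weyl algebras. Both claims are then instances of this single mechanism, differing only in how one checks that the chart pushforward is a symplectomorphism.

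First I would fix a chart quadruple $(p,\iota_p,\cO_p,\tilde{\cO}_0)$ with $\iota_p\colon\cO_p\to\tilde{\cO}_0$, and (for algebra (2)) choose the atlas so that $\cO_p$ is itself globally hyperbolic; such charts exist since every point admits a geodesically convex, globally hyperbolic neighbourhood, and this is precisely the local-solvability hypothesis guaranteeing a local fundamental solution $E_{\cO_p}$. I would equip $\tilde{\cO}_0$ with the transported metric $\iota_{p*}\mathbf{g}$, so that $\iota_p$ becomes an isometric diffeomorphism; being onto, it automatically satisfies the causal-convexity and orientation constraints and is thus an isomorphism in $\textbf{\textsf{Loc}}$. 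The candidate $\beta_p$ is then defined on generators by $\beta_p(W(f))=W(\iota_{p*}f)$.

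The substance of the proof is the verification that the pushforward is a symplectomorphism in each case. For algebra (2) this is immediate from Dimock's transformation laws quoted earlier, namely $E^{\iota_p}=\iota_{p*}\circ E\circ\iota_{p*}^{-1}$ and $\sigma(Ef,Eg)=\sigma^{\iota_p}(\iota_{p*}Ef,\iota_{p*}Eg)$, which say exactly that $\iota_{p*}$ intertwines the fundamental solutions and preserves the Klein--Gordon symplectic form. For algebra (1) the form $\sigma(f,g)=\mathrm{Im}\int\overline{f}g\,dV_M$ depends on the geometry only through the volume density, and a bare pushforward distorts it by the smooth positive density $w$ relating $\iota_{p*}dV_M$ to the Minkowski volume on $\tilde{\cO}_0$; I would absorb this by twisting to $Tf=w^{1/2}\cdot\iota_{p*}f$, which restores $\sigma(Tf,Tg)=\sigma(f,g)$ while, being a strictly positive multiplier, leaving all supports unchanged. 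In both cases the map carries $\{f:\mathrm{supp}\,f\subseteq\cO_1\}$ onto $\{f:\mathrm{supp}\,f\subseteq\iota_p(\cO_1)\}$, because the pushforward sends $\mathrm{supp}\,f$ to $\iota_p(\mathrm{supp}\,f)$.

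Finally I would harvest the net-compatibility: since $\cA(\cO_1)$ is by definition the Weyl algebra over the sub-symplectic-space of test functions supported in $\cO_1$, and $\beta_p$ maps this sub-symplectic-space onto the one indexed by functions supported in $\iota_p(\cO_1)$, the restriction of $\beta_p$ to $\cA(\cO_1)$ is a $*$-isomorphism onto $\cA(\iota_p(\cO_1))$, which is exactly the tangential-conditioning requirement. I expect the main obstacle to be conceptual rather than computational: it lies in recognising that $\tilde{\cO}_0$ must be read as a neighbourhood of $0\in\mathbb{M}$ carrying the transported (generally curved) structure rather than the flat Minkowski one, for only then is an \emph{exact} $*$-isomorphism available for the curved Klein--Gordon algebra (2), while the flat picture survives merely in the tangential sense that $\iota_{p*}\mathbf{g}$ agrees with the Minkowski metric at $0$. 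The secondary technical point is the density twist in case (1), where one must confirm that $w^{1/2}$ is smooth and bounded away from $0$ on $\tilde{\cO}_0$, so that $T$ is a genuine isomorphism of test-function spaces with bounded inverse.
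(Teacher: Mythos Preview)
Your approach is essentially the paper's: build $\beta_p$ as the Bogoliubov lift of a symplectomorphism of test-function spaces, verifying the symplectic property by a direct inner-product computation in case (1) and by Dimock's transformation law in case (2), then invoking the standard fact (the paper cites \cite[Theorem 5.2.8]{BR}) that symplectomorphisms lift to $*$-isomorphisms of Weyl algebras.

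The one substantive difference is your density twist in case (1). The paper simply sets $(Tf)(x)=f(\iota_p^{-1}x)$ and computes $\langle Tf,Tg\rangle_{\cH_0}=\langle f,g\rangle_{\cH_p}$ by change of variables, tacitly using that the pullback of the volume element on $\tilde{\cO}_0$ agrees with $dV_M$ on $\cO_p$; in other words it works with the transported volume form, precisely the reading you yourself identify in your final paragraph as the intended one. Under that reading the Jacobian is $1$, your factor $w$ is identically $1$, and the two arguments coincide. Your twist is therefore a harmless piece of extra caution that makes the computation robust against the alternative (flat-Minkowski-volume) reading. The paper also does not spell out the net-compatibility step $\beta_p(\cA(\cO_1))=\cA(\iota_p(\cO_1))$; your explicit support-preservation argument fills that in.
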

\begin{proof}
To prove the first claim, let us consider a quadruple $(p, \iota_p, \cO_p, \tilde{\cO}_0)$ where, as in Definition \ref{tancon},  $\cO_p$ is a neighborhood of some $p \in M$, $\tilde{\cO}_0$ a corresponding diffeomorphic neighborhood of $0 \in \mathbb{M}$, and $\iota_p :\cO_p \to \tilde{\cO}_0$ the diffeomorphism on $\cO_0$ 
identifying these neighborhoods. The subspace formed by $\{f\in C^{\infty}_0(M); \mathrm{supp} f \subseteq \cO_p \}$ ($\{f\in C^{\infty}_0(\mathbb{M}); \mathrm{supp} f \subseteq \tilde{\cO}_0 \}$) will be denoted by $\cH_p$ (by $\cH_0$ respectively).

Let $g$ be the Lorentzian metric for $M$. Then the volume element for $M$ is of the form $$dV_M=\sqrt{|\det(g_{ij}(x))|} (\iota_p)_*dV_{\mathbb{M}}(\iota_p^{-1} (x))$$where $(\iota_p)_*$ is the pull-back of $\iota_p$. (See page 433 of \cite{Wal}.) By the definition of the metric tensor, $\det(g_{ij}(x))$ is a smooth function on $\tilde{\cO}_0$. But from the discussion on page 433 of \cite{Wal} it is clear that $\det(g_{ij})$ is non-zero on $\tilde{\cO}_0$. In the case of a Riemannian metric this fact follows from the fact that the matrix $[g_{ij}(x)]$ is positive-definite -- see the discussion following Definition 13.2 on page 403 of Gallier and Quaintance's comprehensive online book \cite{GQ}. So $\frac{1}{\det(g_{ij})}$ is also a smooth function on $\tilde{\cO}_0$. By passing to a smaller pair of neighbourhoods (if necessary) for which the pair of closures are inside the original pair $(\cO_p, \tilde{\cO}_0)$,
we may assume that both $\cO_p$ and $\tilde{\cO}_0$ have compact closures, that $\iota_p$ extends to a homeomorphism identifying these 
two closures, and that both $\det(g_{ij})$ and $\frac{1}{\det(g_{ij})}$ are continuous on the compact closure of $\tilde{\cO}_0$. We may then use the Stone-Weierstrass theorem to select sequences of polynomials $\{p_n\}$ and $\{\tilde{p}_n\}$ such that $\{p_n(\det(g_{ij})\}$ 
uniformly converges to $|\det(g_{ij})|^{1/4}$ on the closure of $\tilde{\cO}_0$, and $\{\tilde{p}_n(1/\det(g_{ij}))\}$ uniformly to 
$|1/\det(g_{ij})|^{1/4}$. 

For every smooth function $f$ with compact support inside $\tilde{\cO}_0$, $p_n(\det(g_{ij}))f$ will be another such function. These functions are of course dense in $\cH_0$. It is now easy to check that for any $n,m$ we have that $\|p_n(\det(g_{ij}))f -p_m(\det(g_{ij}))f\|_2\leq \|p_n(\det(g_{ij})) -p_m(\det(g_{ij}))\|_\infty\|f\|_2$. Thus $\{p_n(\det(g_{ij}))f\}$ is a Cauchy sequence with pointwise limit $|\det(g_{ij})|^{1/4}f$. Hence $|\det(g_{ij})|^{1/4}f$ must be the limit in $\cH_0$ of the sequence $\{p_n(\det(g_{ij}))f\}$. For a general element $f$ of $\cH_0$, select a sequence $\{f_n\}$ converging to $f$ in $\cH_0$. The fact that for any $n,m$ we have that $\|\,|\det(g_{ij})|^{1/4}f_n-|\det(g_{ij})|^{1/4}f_m\|_2\leq \|\,|\det(g_{ij})|^{1/4}\|_\infty\|f_n-f_m\|_2$, ensures that $\{|\det(g_{ij})|^{1/4}f_n\}$ is a Cauchy sequence. Once again one can use classic measure theoretic results to conclude that this sequence must converge pointwise almost everywhere to its limit in $\cH_0$, which can then only be $|\det(g_{ij})|^{1/4}f$. Thus $|\det(g_{ij})|^{1/4}f\in\cH_0$ whenever $f\in \cH_0$. But the same type of argument shows that multiplication by $\frac{1}{|\det(g_{ij})|^{1/4}}$ will map $\cH_0$ back into itself. Thus both of these multiplication maps are bijections.

Writing $w(s)$ for $(|\det(g_{ij}(s))|)^{1/4}$, we now define the map $T:\cH_p \to \cH_0$ by
\begin{equation}
(Tf)(x) = w(x)f(\iota_p^{-1} x), \quad x \in \tilde{\cO}_0.
\end{equation}
It now follows from the preceding discussion that $T$ is a linear bijection from $\cH_p$ onto $\cH_0$. We will show that $T$ is in fact an isometry. To this end we note that by equation (B.2.17) in appendix B of \cite{Wal}, 
\begin{eqnarray*}
\langle Tf,Tg\rangle_{\cH_0} &=& \int_{\tilde{\cO}_0} \overline{(Tf)}(x)(Tg)(x)dV_{\mathbb{M}}(x)\\
&=& \int_{\tilde{\cO}_0}\overline{f}\circ \iota_p^{-1}(x) g\circ\iota_p^{-1}(x)w(x)^2 dV_{\mathbb{M}}(x)\\
&=& \int_{\tilde{\cO}_0}\overline{f}\circ \iota_p^{-1}(x) g\circ\iota_p^{-1}(x)|\det(g_{ij})|^{1/2} dV_{\mathbb{M}}\left(\iota_p \circ \iota_p^{-1} (x)\right)\\
&=& \int_{\tilde{\cO}_0}\overline{f}\circ \iota_p^{-1}(x) g\circ\iota_p^{-1}(x)|\det(g_{ij})|^{1/2} (\iota_p)_*dV_{\mathbb{M}}(\iota_p^{-1} (x))\\
&=& \int_{\cO_p} \overline{f} g dV_{{M}} = \langle f,g\rangle_{\cH_p},
\end{eqnarray*}
where $(\iota_p)_*$ is the pull-back of $\iota_p$.

But, then $T$ preserves the symplectic form $\sigma$. Hence, by \cite[Theorem 5.2.8]{BR}, the prescription
\begin{equation}
\alpha_T(\mathfrak{W}(f)) = \mathfrak{W}(Tf), \quad f \in C^{\infty}_0(M)
\end{equation}
yields a *-isomorphism between from $\cA(\cO_p)$ onto $\cA(\tilde{\cO}_0)$.

The second claim follows by arguments given prior to Theorem \ref{ccr}.
\end{proof}

\section{Local flows and graded algebras of differential forms for local algebras}

Having considered the application of integrable structures to local algebras, we now turn our attention to differential structures. 
Such structures are indispensable tools for the description of time evolution of quantum systems. In other words, integrable structures provide a rather static setting for an analysis of systems, whilst differential structures are employed for an 
examination of time evolution of these systems. Throughout $M$ will be a (smooth) $d$-dimensional, connected time-oriented globally hyperbolic Lorentzian manifold, $\mathbf{g}$ a Lorentzian metric on $M$, and $\mathbb{M}$ $d$-dimensional Minkowski space-time. The key tool we shall exploit in this endeavour, is that tangentially conditioned algebras allow for a local action of the Poincar\'e group.

\subsection{Tangentially conditioned algebras and local flows along contours}

We now turn to the question of dynamics on the algebras $\cA(M)$. When studying dynamics, it is important to identify the appropriate mode of continuity with which to describe such a dynamical flow. We are particularly interested in the appropriate mode of continuity that may be assigned to the translation automorphisms in the representation of the Poincar\'e group. Property 1 of the G\"arding-Wightman postulates for field operators (see \cite[\S IX.8]{RS2}), as well as the behavior of local algebras which fulfil the spectrum condition (see \cite[p. 33]{Sak}), both suggest that it is a physically reasonable assumption to make, that these translation automorphisms are implemented by a strongly continuous unitary group acting on the underlying Hilbert space. That translates to strong operator continuity of the translation automorphisms. This mode of continuity is however more suited to a von Neumann algebraic rather than a $C^*$-algebraic framework. \textbf{Hence in the remainder of the paper we will restrict ourselves to von Neumann local algebras}, for 
which the group of translation automorphisms on $\qM(\mathbb{M})$ is strong operator continuous. The only comment we will make about $C^*$-framework, is that each of the subsequent results will under appropriate restrictions admit of a $C^*$-version.

With the framework in which we will work now clear, that leads us to the question of how one may realise a dynamical flow on the manifold 
$M$, at the algebra level. Part (b) of Theorem \ref{localnet} assures us that in the case where one is fortunate enough to have a globally defined group $G$ of isometric diffeomorphisms on $M$ preserving orientation and time-orientation, the dynamics described by that group canonically lifts to the algebra setting. But what is equally clear from this theorem, is that not all algebras $\qM(M)$ have this property. If however one is content to settle for fairly strong locality as far as dynamics is concerned, the situation improves. Specifically for any tangentially conditioned local algebra $\qM(M)$, the ``local'' dynamics on $M$ does indeed lift to the algebra level. We point out that this behaviour is in line with the classical setting. See the discussion on page 35 of \cite{Thi}. 

In order to be able to deliver on our promise, some background is necessary. For general $d$-dimensional $C^\infty$-manifolds $M$ the $C^\infty$ derivations of $C^\infty(M)$ correspond to local flows on $M$. (To see this combine the comment preceding Theorem 2.2.24 of \cite{Thi} with \cite[Remark 2.3.11(1)]{Thi}.) It is moreover known that on any chart $U$ of $M$, these derivations are up to a diffeomorphism of the form $\sum_{i=1}^df_i\frac{\partial}{\partial x_i}$, where for each $i$ we have that $f_i\in C^\infty(U)$ \cite[2.2.27(8) \& 2..4.3(1)]{Thi}. If therefore in the context of $\qM(\mathbb{M})$ we are able to identify the appropriate analogues of $\frac{\partial}{\partial x_i}$ and $C^\infty(U)$, we will at a formal level be able to give a ``chart-wise'' description of the quantum smooth local flows associated with a tangentially conditioned algebra $\qM(M)$. However the manifolds in view here are Lorentzian. We therefore briefly pause to describe how the above idea may be refined to this context.

By \cite[Chapter 2, Theorem 1]{BaF} any (smooth) $d$-dimensional, connected time-oriented globally hyperbolic Lorentzian manifold $M$, may be written in the form $M=\mathbb{R}\times S$ where each ${t}\times S$ is a Cauchy hypersurface. (Here $\mathbb{R}$ models the time variable.) A careful consideration of part (3) of this result, shows that $S$ is in its own right a $(d-1)$--dimensional Riemannian Manifold. So at a local level $S$ ``looks like'' $\mathbb{R}^{d-1}$. Using this fact, we may select our charts for $M$ in such a way that the local diffeomorphisms which compare $\mathbb{M}$ to $M$, maps points of the form $(t, x_1, x_2,\dots, x_{d-1})$ in say $\tilde{\cO}_0\subset \mathbb{M}$, onto points 
$(r(t),s)$ in ${\cO}_p\subset M$, where $(r(t),s)\in \mathbb{R}\times S$, with $t$ going to $r(t)$ and 
$(x_1, x_2,\dots, x_{d-1})$ to $s$. Suppose that this is the case and let $\cO_p$ be a neighbourhood of some $p\in M$ which is in the above sense diffeomorphic to a neighbourhood $\tilde{\cO}$ of $0\in \mathbb{M}$ by means of some diffeomorphism $\iota_p$ of the above type. 

Based on the above discussion and the assumptions made therein, on the chart $\cO_p$ a continuous local dynamical flow along some contour on $M$ passing through $p=(t_p,s_p)$, may in principle be regarded as the image under $\iota_p$ of a continuous local dynamical flow along a contour flowing through $0\in \mathbb{M}$, where the dynamical 
flow on $\mathbb{M}$ corresponds to a set of points $(t, x_1(t), x_2(t),\dots, x_{d-1}(t))\in \tilde{\cO}_0$ which varies continuously as $t$ varies over the interval $(-\varepsilon, +\varepsilon)$, with \newline $(0, x_1(0), x_2(0),\dots, x_{d-1}(0))=0$, and with $\iota_p(0)=p$. For the sake of simplicity let us write $g_t$ for $(t, x_1(t), x_2(t),\dots, x_{d-1}(t))$ and $\beta_p$ for the *-isomorphism from $\qM(\cO_p)$ to $\qM(\tilde{\cO}_0)$. 

In the context of $\qM(\mathbb{M})$, this local dynamics may then formally be lifted to the algebra level by using the ``translation 
automorphisms'' in the representation of the Poincar\'e group on $\qM(\mathbb{M})$. Specifically if on $\mathbb{M}$ the contour is described by the set of points $g_t$ (indexed by the time variable), we may pass to the set $\alpha_{g_t}$ where each $\alpha_{g_t}$ satisfies $\alpha_{g_t}(\qM(\tilde{\cO}))= \qM(g_t+\tilde{\cO})$ ($\tilde{\cO}$ an 
open subset of $\mathbb{M}$). The natural domain of the ``restriction'' of the action of $\alpha_{g_t}$ to the $\qM(\tilde{\cO}_0)$ context, is then $[\qM(\tilde{\cO}_0)\cap\alpha_{g_t}^{-1}(\qM(\tilde{\cO}_0))]$. For any $g_t$ this natural domain will include all 
subalgebras $\qM(\tilde{\cO}_t)$ of $\qM(\tilde{\cO}_0)$, for which $\tilde{\cO}_t$ is a subset of $\tilde{\cO}_0$ small enough to ensure that  $g_t+\tilde{\cO}_t\subset \tilde{\cO}_0$. As $g_t$ gets ``closer'' to 0, we expect the size of the sets $\tilde{\cO}_t$ we are able to select, to increase. We pause to explain how one can make these ideas exact.

By passing to a subset if necessary, we may assume that the set of $\tilde{\cO}_0$ is an open double cone $K=\tilde{\cO}_0$ centred at 0. It is then an exercise to see that $K=\cup_{n=1}^\infty\frac{n}{n+1}K$. We may further find a decreasing sequence $\varepsilon_n>0$ such that $(\frac{n}{n+1}K +g) \subset K$ whenever $g=(t,x_1, \dots, x_{d-1})$ is an element of $\mathbb{M}$ for which 
$\|g\|_2^2=|t|^2 +\sum_{k=1}^{d-1}|x_k|^2<\varepsilon_n$. As far as our local dynamical flow along the given contour is concerned, for each $\varepsilon_n>0$ we may by assumption find some $\delta_n>0$ such that $\|g(t)\|_2^2=|t|^2 +\sum_{k=1}^{d-1}|x_k(t)|^2<\varepsilon_n$ whenever $|t|< \delta_n$. At the operator level, each $\alpha_{g_t}$ for which $|t|< \delta_n$, will then yield a well defined operator from $\qM(\frac{n}{n+1}K)$ into $\qM(\frac{n}{n+1}K+g(t))\subset \qM(K)$. So the subalgebra $\cup_{n=1}^\infty\qM(\frac{n}{n+1}K)$ of $\qM(K)$ then represents a space of observables inside $\qM(K)$, for which the restriction of the operators $\alpha_{g(t)}$ to the context of $\qM(K)$ yield a well defined dynamics for short times along this contour (where the ``shortness'' of the time depends on the specific observable). If now the algebra $\qM$ was 
additive in the sense of \cite[Definition 4.13]{araki}, we would have that $[\cup_{n=1}^\infty\qM(\frac{n}{n+1}K)]''=\qM(K)$. That is for additive systems, the subalgebra of $\qM(K)$ for which we obtain dynamics for short times, is weak* dense in $\qM(K)$. 

We summarise the conclusions of the above discussion in the following theorem:

\begin{theorem}Let $\qM(M)$ be a local algebra tangentially conditioned to $\qM(\mathbb{M})$. If $\qM(\mathbb{M})$ is additive in the sense of \cite[Definition 4.13]{araki}, then for any smooth contour $C$ through $p\in M$, the dynamics along this contour will for a small enough neighbourhood $\cO_p\subset M$ of $p$, lift to dynamics for short times on a weak* dense subalgebra of $\qM(\cO_p)$. The dynamics at the algebra level is determined by the local action of the translation automorphisms on $\qM(\mathbb{M})$.
\end{theorem}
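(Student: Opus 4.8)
The plan is to assemble the ingredients developed in the discussion preceding the statement, so the proof is essentially a matter of organising those observations into a single argument. First I would invoke the splitting $M=\mathbb{R}\times S$ afforded by \cite[Chapter 2, Theorem 1]{BaF}, together with the tangential conditioning hypothesis, to fix a quadruple $(p,\iota_p,\cO_p,\tilde{\cO}_0)$ of the type described in Definition \ref{tancon} for which the diffeomorphism $\iota_p$ respects the time--space decomposition, i.e. sends $(t,x_1,\dots,x_{d-1})\in\tilde{\cO}_0\subset\mathbb{M}$ to a point $(r(t),s)\in\mathbb{R}\times S$. Tangential conditioning then supplies the $*$-isomorphism $\beta_p:\qM(\cO_p)\to\qM(\tilde{\cO}_0)$ which restricts compatibly to every subregion. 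Transporting the smooth contour $C$ through $p$ back to $\mathbb{M}$ via $\iota_p^{-1}$ yields a smooth contour through $0$, parametrised as $g_t=(t,x_1(t),\dots,x_{d-1}(t))$ with $g_0=0$; via $\beta_p$ it therefore suffices to produce the claimed short-time dynamics within the relevant subalgebras of $\qM(\tilde{\cO}_0)$ in the $\qM(\mathbb{M})$ context.

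Next I would carry out the geometric reduction. By shrinking $\tilde{\cO}_0$ if necessary, assume it is an open double cone $K$ centred at $0$, so that $K=\bigcup_{n=1}^\infty\frac{n}{n+1}K$. A continuity/compactness argument produces a decreasing sequence $\varepsilon_n>0$ with $\frac{n}{n+1}K+g\subset K$ whenever $\|g\|_2<\varepsilon_n$, and smoothness of the contour then yields $\delta_n>0$ with $\|g_t\|_2<\varepsilon_n$ for $|t|<\delta_n$. For such $t$ the translation automorphism $\alpha_{g_t}$, characterised by $\alpha_{g_t}(\qM(\tilde{\cO}))=\qM(g_t+\tilde{\cO})$, maps $\qM(\frac{n}{n+1}K)$ into $\qM(\frac{n}{n+1}K+g_t)\subset\qM(K)$, giving a well-defined partial flow. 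This is exactly the statement that on the increasing union $\qM_0:=\bigcup_{n=1}^\infty\qM(\frac{n}{n+1}K)$ the operators $\alpha_{g_t}$ yield dynamics for short times, with the admissible interval $(-\delta_n,\delta_n)$ depending on the level $n$, hence on the observable.

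Finally I would invoke additivity. Since $K=\bigcup_n\frac{n}{n+1}K$, additivity of $\qM(\mathbb{M})$ in the sense of \cite[Definition 4.13]{araki} gives $\qM(K)=\left(\bigcup_n\qM(\tfrac{n}{n+1}K)\right)''=\qM_0''$, so $\qM_0$ is weak$^*$ dense in $\qM(K)$. Transferring this back through $\beta_p$ carries the weak$^*$ dense subalgebra and its short-time flow onto $\qM(\cO_p)$, and identifies the algebra-level dynamics as that induced by the translation automorphisms on $\qM(\mathbb{M})$, which is the assertion.

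The main obstacle, and the only place genuine care is needed, is the bookkeeping that couples the geometric short-time estimates to the algebraic exhaustion: one must check that the level-dependent intervals $(-\delta_n,\delta_n)$ are compatible with partial composition of the maps $\alpha_{g_t}$, so that $\qM_0$ is genuinely a $*$-subalgebra carrying a consistently defined flow \emph{before} passing to the weak$^*$ closure. One must also be explicit that $\alpha_{g_t}$ is not an automorphism of $\qM(K)$ but only a well-defined map on those observables whose localisation region survives the translation, so the ``dynamics'' here is intrinsically local and partially defined. The additivity hypothesis is precisely what upgrades this observable-dependent flow on $\qM_0$ to a statement about a weak$^*$ dense part of all of $\qM(\cO_p)$; without it one is left only with the flow on $\qM_0$.
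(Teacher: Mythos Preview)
Your proposal is correct and mirrors the paper's own argument essentially step for step: the paper presents this theorem explicitly as a summary of the preceding discussion, which carries out exactly the chart/contour transport via tangential conditioning, the double-cone exhaustion $K=\bigcup_n\frac{n}{n+1}K$ with the $\varepsilon_n$--$\delta_n$ estimates, and the appeal to additivity for weak$^*$ density. If anything, your closing remarks about the level-dependent time intervals and the merely partially defined nature of the flow on $\qM(K)$ make explicit a point the paper leaves implicit.
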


(For these same ideas to work in the $C^*$-algebra context, we need our local algebra to satisfy what we might call ``strong'' additivity in the sense that for any open double cone $K$ and any collection of open subsets $\{\cO_\lambda\}$ of $K$ covering $K$, we will need $\cup_{n=1}^\infty\cA(\cO_\lambda)$ to be norm-dense in $\cA(K)$.)

\subsection{The space of generators of local flows}

Each locus of points of either the form $(t, 0, 0, \dots)$ or the form $(0, 0, \dots, 0, x_k, 0, \dots, 0)$ is a copy of $\mathbb{R}$. We may denote these loci by $\mathbb{R}_t$ and $\mathbb{R}_k$ ($2\leq k\leq d$) respectively. The groups $\alpha_x$ corresponding to translation by $x$, where $x$ belongs to either $\mathbb{R}_t$ or $\mathbb{R}_k$, are then one-parameter groups on $\qM(\mathbb{M})$.

Hence the derivatives at 0, namely $\delta_t$ and $\delta_k$, are densely defined closed *-derivations. For the sake of simplicity, we will in the discussion hereafter write $\delta_0$ for $\delta_t$. Our first result in this subsection, shows that these derivations are the appropriate noncommutative analogues of the partial differential operators $\frac{\partial}{\partial x_i}$. This result also shows that in a very real sense, the space  $\mathrm{span}\{\delta_k: 0\leq k\leq (d-1)\}$ acts as a space of infinitesimal generators of the action of the translation automorphisms on $\qM(\mathbb{M})$.

\begin{theorem}\label{dspace}
Let $C$ be a smooth contour through $0\in \mathbb{M}$ parametrised by say $x(t)$ where $-1\leq t\leq 1$, and $x(0)=0$. Let $\cO$ be a neighbourhood of $0\in \mathbb{M}$. Then for any $f\in\qM(\cO)\cap[\cap_{k=0}^{(d-1)}\mathrm{dom}(\delta_k)]$, the derivative at 0 of the set $\alpha_{x(t)}(f)$ ($t\in [-1,1]$) exists in the weak* topology on $\qM(\mathbb{M})$, and corresponds to $\sum_{k=0}a_k\delta_k(f)$ where $x'(0)=(a_0, a_1, \dots, a_{(d-1)})$.
\end{theorem}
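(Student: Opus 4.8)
The plan is to reduce the multi-dimensional statement to the one-parameter groups generated by the individual $\delta_k$, exploiting the fact that the translation automorphisms form an \emph{abelian} representation. Since translations in $\mathbb{M}$ commute and the representation is a group homomorphism, I would first factorise $\alpha_{x(t)}=\alpha^{(0)}_{x_0(t)}\circ\alpha^{(1)}_{x_1(t)}\circ\cdots\circ\alpha^{(d-1)}_{x_{d-1}(t)}$, where $x(t)=(x_0(t),\dots,x_{d-1}(t))$ and $\alpha^{(k)}_s$ is the one-parameter group of translations in the $k$-th coordinate direction, whose generator is $\delta_k$. Writing $A_k(t)=\alpha^{(k)}_{x_k(t)}$ and using the telescoping identity $B_0\cdots B_{d-1}-I=\sum_k B_0\cdots B_{k-1}(B_k-I)$ (with the convention that the empty product is $I$), the difference quotient becomes
$$\frac{\alpha_{x(t)}(f)-f}{t}=\sum_{k=0}^{d-1}A_0(t)\cdots A_{k-1}(t)\left[\frac{\alpha^{(k)}_{x_k(t)}(f)-f}{t}\right].$$
The theorem then amounts to showing that each summand converges weak* to $a_k\delta_k(f)$.

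The crucial technical ingredient is the fundamental-theorem-of-calculus representation for the generator of a weak*-continuous group: for $f\in\mathrm{dom}(\delta_k)$ one has $\alpha^{(k)}_s(f)-f=\int_0^s\alpha^{(k)}_u(\delta_k(f))\,du$ as a weak* integral. Since each $\alpha^{(k)}_u$ is isometric, this at once yields the uniform bound $\|\frac{\alpha^{(k)}_{x_k(t)}(f)-f}{t}\|\leq \frac{|x_k(t)|}{|t|}\|\delta_k(f)\|$, which stays bounded because smoothness of $x$ with $x(0)=0$ gives $x_k(t)/t\to a_k$. Writing $g_k(t)$ for the bracketed difference quotient and pairing the $k$-th summand with a normal functional $\rho$, I set $\rho_k(t)=\rho\circ A_0(t)\cdots A_{k-1}(t)$ and split $\rho_k(t)(g_k(t))=\rho(g_k(t))+(\rho_k(t)-\rho)(g_k(t))$. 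The weak*-continuity of the (SOT-continuous) translation automorphisms forces the predual action $t\mapsto \rho\circ\alpha_{x(t)}$ to be norm-continuous, so $\|\rho_k(t)-\rho\|\to0$; combined with the uniform norm bound on $g_k(t)$ this kills the second term. For the first term the integral representation gives $\rho(g_k(t))=\frac{1}{t}\int_0^{x_k(t)}\rho(\alpha^{(k)}_u(\delta_k(f)))\,du$, and since $u\mapsto \rho(\alpha^{(k)}_u(\delta_k(f)))$ is continuous at $0$ with value $\rho(\delta_k(f))$, the average over the shrinking interval tends to $\rho(\delta_k(f))$ while $x_k(t)/t\to a_k$; hence $\rho(g_k(t))\to a_k\rho(\delta_k(f))$. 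Summing over $k$ and over all $\rho\in\qM(\mathbb{M})_*$ yields the claim.

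I expect the main obstacle to be precisely the joint limit in the first display: an a priori weak*-only convergent difference quotient multiplied by a string of automorphisms converging to the identity. The resolution is to recognise that the integral representation upgrades these difference quotients to a \emph{norm-bounded} family (bounded by $\|\delta_k(f)\|$ up to the factor $|x_k(t)/t|$), so that the norm-convergence $\rho_k(t)\to\rho$ in the predual can be paired against it. The remaining points---well-definedness of $\alpha_{x(t)}(f)$ for $f\in\qM(\cO)$ (the translation automorphisms act globally on $\qM(\mathbb{M})$), the commutativity used in the factorisation, and the handling of directions with $a_k=0$ (for which $x_k(t)=o(t)$ makes the corresponding average vanish in the limit)---are routine once this mechanism is in place.
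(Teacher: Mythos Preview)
Your proposal is correct and follows the same telescoping strategy as the paper, but your key technical input differs. The paper first reduces to $f=f^*$, establishes strong convergence of each one-variable difference quotient $\frac{1}{t}(\alpha_{\hat{x}_k(t)}(f)-f)$ to $a_k\delta_k(f)$ by a case split on whether $a_k=0$ (the $a_k=0$ case handled by contradiction), then invokes Banach--Steinhaus to obtain a uniform norm bound, upgrades strong to $\sigma$-strong$^*$ convergence, and finally applies \cite[Proposition~III.5.5]{Tak} to control the automorphism prefactors. You bypass all of this by using the integral representation $\alpha^{(k)}_s(f)-f=\int_0^s\alpha^{(k)}_u(\delta_k(f))\,du$, which delivers the uniform norm bound directly and reduces the limiting argument to elementary averaging, together with the standard fact that weak$^*$-continuity of the automorphism group is equivalent to norm continuity of the predual action $\rho\mapsto\rho\circ\alpha_x$. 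Your route is shorter, avoids the self-adjoint reduction and the contradiction subcase, and does not need the Takesaki proposition; the paper's route stays closer to the spatial (Hilbert-space) implementation and is perhaps more explicit about where the SOT-continuity hypothesis enters.
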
 

\begin{proof}
Let $f$ be as in the hypothesis. Since each $\delta_k$ is adjoint-preserving, we may clearly assume that $f=f^*$. As in the previous proof we will write $t\langle k \rangle$ for the vector with $t\in \mathbb{R}$ in the $k$-th coordinate and 0's elsewhere. The fact that the translation automorphisms are implemented by a strongly continuous unitary group acting on the underlying Hilbert space, ensures that for each $k$, we have that $\frac{1}{t}[\alpha_{t\langle k\rangle}(f)-f]$ converges strongly to $\delta_k(f)$ as $t\to 0$.
 
Let $x(t)$ be of the form $x(t)=(x_0(t), x_1(t), \dots, x_{(d-1)}(t))$. We denote the vector $(0,\dots, 0, x_k(t), 0, \dots, 0)$ by $\hat{x}_k(t)$. We first prove that for any $\xi$ in the underlying Hilbert space and any $0\leq k\leq (d-1)$, we have that 
$\frac{1}{t}(\alpha_{\hat{x}_k(t)}(f)-f)\xi$ converges in norm to $x'_k(0)\delta_k(f)\xi=a_k\delta_k(f)\xi$. 

If $x'_k(0)=a_k\neq 0$, there must exist some $\varepsilon>0$ such that $\frac{x_k(t)}{t}\neq 0$ for every $0<|t|<\varepsilon$. Since $x_k(t)\to 0$ as $t\to 0$, the claim will in this case follow by rewriting $\frac{1}{t}(\alpha_{\hat{x}_k(t)}(f)-f)\xi$ as 
$\frac{x_k(t)}{t}[\frac{1}{x_k(t)}(\alpha_{\hat{x}(t)}(f)-f)\xi]$ for all $0<|t|<\varepsilon$, and then letting $t\to 0$.

If $a_k=0$, then for the claim to be true, we must have that $\frac{1}{t}(\alpha_{\hat{x}_k(t)}(f)-f)\xi \to 0$ as $t \to 0$. Suppose this is not the case. In that case there must exist some $\varepsilon > 0$ and a sequence $\{t_n\}$ tending 0 such that $\|\frac{1}{t_n}(\alpha_{\hat{x}_k(t_n)}(f)-f)\xi\|\geq \varepsilon$ for all $n$. Since $(\alpha_{0}(f)-f)\xi=0$, we must then also have that $x_k(t_n)\neq 0$ for all $n$. But then we may write 
$\frac{1}{t_n}(\alpha_{\hat{x}_k(t_n)}(f)-f)\xi$ as $\frac{x_k(t_n)}{t_n}[\frac{1}{x_k(t_n)}(\alpha_{\hat{x}(t_n)}(f)-f)\xi]$. But as $n\to \infty$, this expression must converge in norm to $a_k\delta_k(f)\xi=0$, which is a clear contradiction. Hence our assumption that $\frac{1}{t}(\alpha_{\hat{x}_k(t)}(f)-f)\xi$ does not converge to 0 as $t\to 0$, must be false.

We claim that for each $k$, the terms $\Pi_{i=k+1}^{(d-1)}\alpha_{\hat{x}_i(t)}([\frac{1}{t}(\alpha_{\hat{x}_k(t)}(f)-f)]-a_k\delta_k(f))$ converge to 0 in the weak* topology as $t\to 0$. 

Suppose that for some $k$ this is not the case. Then there must exist a normal state $\omega$ of $\qM(\mathbb{M})$ such that $\omega(\Pi_{i=k+1}^{(d-1)}\alpha_{\hat{x}_i(t)}([\frac{1}{t}(\alpha_{\hat{x}_k(t)}(f)-f)]-a_k\delta_k(f)))\not\to 0$, or equivalently there exists a normal state $\omega$ and a sequence $\{t_n\}$ tending to 0 such that for some $\widetilde{\varepsilon}>0$ we have that
$$|\omega(\Pi_{i=k+1}^{(d-1)}\alpha_{\hat{x}_i(t_n)}([\frac{1}{t_n}(\alpha_{\hat{x}_k(t_n)}(f)-f)]-a_k\delta_k(f)))|\geq \widetilde{\varepsilon}$$for 
each $n\in \mathbb{N}$. Note that with $w(k,t)$ denoting the vector $(0,\dots, 0, x_{k+1}(t),\dots, x_{(d-1)}(t)$, we have that  $\Pi_{i=k+1}^{(d-1)}\alpha_{\hat{x}_i(t)}=\alpha_{w(k,t)}$ for each $k$. For the sake of simplicity we will in the ensuing argument make 
these substitutions. Since for each $\xi$ in the underlying Hilbert space we have that 
$[\frac{1}{t_n}(\alpha_{\hat{x}_k(t_n)}(f)-f)-a_k\delta_k(f)]\xi\to 0$ as $n\to \infty$, we know from the Banach-Steinhaus theorem that 
$\sup_n\|\frac{1}{t_n}(\alpha_{\hat{x}_k(t_n)}(f)-f)-a_k\delta_k(f)\|<\infty$. But if that is the case then by 
\cite[Proposition 2.4.1]{BR}, $[\frac{1}{t_n}(\alpha_{\hat{x}_k(t_n)}(f)-f)-a_k\delta_k(f)]$ must in fact converge to 0 in the 
$\sigma$-strong topology, and not just strongly. 
Since by assumption $f=f^*$, the convergence is actually in the $\sigma$-strong* topology. Now recall that $\alpha_{w(k,t_n)}$ converges strongly to the identity map on $\qM(\cO)$ as $n\to \infty$. By \cite[III.3.2.2]{Bla}, this convergence also takes place in the weak* 
($\sigma$-weak) topology. So for any normal state $\nu$ on $\qM(\mathbb{M})$, each $\nu\circ \alpha_{w(k,t_n)}$ is again a normal state with $\lim_{n\to\infty}\nu(\alpha_{w(k,t_n)}(a))=\nu(a)$ for every $a\in \qM(\mathbb{M})$. If now we apply \cite[Proposition III.5.5]{Tak} we will have that
$$\lim_{n\to\infty}\omega(\alpha_{w(k,t_m)}([\frac{1}{t_n}(\alpha_{\hat{x}_k(t_n)}(f)-f)]-a_k\delta_k(f)))=0$$uniformly in $m\in\mathbb{N}$. But this contradicts our earlier assumption that 
$$|\omega(\alpha_{w(k,t_n)}([\frac{1}{t_n}(\alpha_{\hat{x}_k(t_n)}(f)-f)]-a_k\delta_k(f)))|\geq\widetilde{\varepsilon}$$for each $n\in\mathbb{N}$. Hence the assumption that for some $k$ the net 
\newline $\alpha_{w(k,t)}([\frac{1}{t}(\alpha_{\hat{x}_k(t)}(f)-f)]-a_k\delta_k(f))$ is not weak* convergent to 0, must be false. Therefore each $\alpha_{w(k,t)}([\frac{1}{t}(\alpha_{\hat{x}_k(t)}(f)-f)]-a_k\delta_k(f))$ is weak* convergent to 0 as $t\to 0$. Moreover by \cite[III.3.2.2]{Bla}, we also have that each $a_k\alpha_{w(k,t)}(\delta_k(f))$ is weak* convergent to $a_k\delta_k(f)$ as $t\to 0$. 

We are now ready to prove the primary claim of the theorem. To do this we simply note that 
$$\frac{1}{t}(\alpha_{x(t)}(f)-f)-\sum_{k=0}^{(d-1)}a_k\delta_k(f)$$may be rewritten as $$\sum_{k=0}^{(d-1)} 
\Pi_{i=k+1}^{(d-1)}\alpha_{\hat{x}_i(t)}[\frac{1}{t}(\alpha_{\hat{x}_k(t)}(f)-f)-a_k\delta_k(f)] 
+\sum_{k=0}^{(d-1)}a_k[\Pi_{i=k+1}^{(d-1)}\alpha_{\hat{x}_i(t)}(\delta_k(f))-\delta_k(f)]$$and then apply the foregoing conclusions to see that $\frac{1}{t}(\alpha_{x(t)}(f)-f)$ converges to $\sum_{k=0}a_k\delta_k(f)$ in the weak* topology as $t\to 0$.
\end{proof}

Having identified the objects that serve as the quantum analogues of the partial differential operators $\frac{\partial}{\partial x_i}$, it is natural to then use these objects to identify a subalgebra of $\qM(\mathbb{M})$, which is the quantum analogue of $C^\infty(M)$. This is done in the next theorem, which also shows that this subalgebra of ``smooth'' elements of $\qM(\mathbb{M})$, is in fact a weak* dense subalgebra.

\begin{theorem}\label{thm4.3}
For any local algebra $\qM(\mathbb{M})$ satisfying the strong operator continuity assumption regarding the translation automorphisms, the algebra $\qM^\infty(\mathbb{M})=\{a\in \qM(\mathbb{M}): a\in \mathrm{dom}(\delta_{\pi(1)}\dots\delta_{\pi(k)}),\ k\in \mathbb{N},\ 0\leq \pi(i)\leq (d-1)\}$ is weak* dense in $\qM(\mathbb{M})$.
\end{theorem}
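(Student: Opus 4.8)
The plan is to use a standard mollification (G\aa rding-type) argument adapted to the von Neumann algebraic setting. The translation automorphisms furnish a strongly continuous action of the translation group $\mathbb{R}^d$ on $\qM(\mathbb{M})$ by $*$-automorphisms $\alpha_x$, whose one-parameter subgroups $\alpha_{t\langle k\rangle}$ have the $\delta_k$ as their generators. The idea is to smear an arbitrary $a\in\qM(\mathbb{M})$ against a smooth compactly supported weight so as to produce a genuinely smooth element, and then to let the weight run through an approximate identity so as to recover $a$ in the weak* topology.

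First I would define, for each $g\in C^\infty_0(\mathbb{R}^d)$ and each $a\in\qM(\mathbb{M})$, the smeared element $\alpha_g(a)=\int_{\mathbb{R}^d} g(x)\,\alpha_x(a)\,dx$, the integral being taken in the weak* sense. This is legitimate because the SOT-continuity assumption guarantees that $x\mapsto\alpha_x(a)$ is weak*-continuous and norm-bounded (by $\|a\|$, since each $\alpha_x$ is isometric), so that for every $\omega\in\qM_*$ the prescription $\omega\mapsto\int g(x)\,\omega(\alpha_x(a))\,dx$ defines a bounded normal functional; duality with the predual then produces a unique $\alpha_g(a)\in\qM(\mathbb{M})$ representing it, with $\|\alpha_g(a)\|\le\|g\|_1\|a\|$.

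Next I would show $\alpha_g(a)\in\qM^\infty(\mathbb{M})$. A change of variables gives $\alpha_{t\langle k\rangle}(\alpha_g(a))=\alpha_{g_{t,k}}(a)$, where $g_{t,k}(y)=g(y-t\langle k\rangle)$, so the difference quotient equals $\alpha_{(g_{t,k}-g)/t}(a)$. Since $(g_{t,k}-g)/t\to-\partial_k g$ in $L^1$-norm (the supports staying in a fixed compact set for $|t|\le 1$) and $\|\alpha_h(a)\|\le\|h\|_1\|a\|$, this converges even in operator norm, whence $\alpha_g(a)\in\mathrm{dom}(\delta_k)$ with $\delta_k(\alpha_g(a))=\alpha_{-\partial_k g}(a)$. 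As $-\partial_k g$ is again in $C^\infty_0(\mathbb{R}^d)$, an induction over multi-indices shows $\alpha_g(a)$ lies in the domain of every iterated composition $\delta_{\pi(1)}\cdots\delta_{\pi(k)}$, i.e. $\alpha_g(a)\in\qM^\infty(\mathbb{M})$. That $\qM^\infty(\mathbb{M})$ is in fact a $*$-subalgebra then follows from the Leibniz rule $\delta_k(ab)=\delta_k(a)b+a\delta_k(b)$, verified using joint SOT-continuity of multiplication on bounded sets, together with the fact that each $\delta_k$ is adjoint-preserving.

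Finally I would take a mollifier sequence $g_n\ge0$ with $\int g_n=1$, supported in balls shrinking to $0$. For every $\omega\in\qM_*$ the continuity of $x\mapsto\omega(\alpha_x(a))$ at $0$ gives $\omega(\alpha_{g_n}(a))=\int g_n(x)\,\omega(\alpha_x(a))\,dx\to\omega(a)$, so that $\alpha_{g_n}(a)\to a$ weak*. Hence every $a$ lies in the weak* closure of $\qM^\infty(\mathbb{M})$, which proves density. I expect the main obstacle to be the second step: one must set up the $\qM$-valued weak* (Pettis-type) integral carefully in the dual-space setting and justify differentiating under it, since $\alpha_g(a)$ is only implicitly defined through its pairing with the predual. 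Once the identity $\delta_k(\alpha_g(a))=\alpha_{-\partial_k g}(a)$ is secured, the remaining steps are routine.
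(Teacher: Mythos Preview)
Your proof is correct and follows essentially the same G\aa rding-type smoothing argument as the paper (which explicitly cites Pazy, Theorem~2.7, as its template): smear $a$ against compactly supported $C^\infty$ functions, differentiate under the integral via $\delta_k(\alpha_g(a))=\alpha_{-\partial_k g}(a)$ to obtain membership of $\qM^\infty(\mathbb{M})$, and then establish weak*-density. The only differences are cosmetic: the paper integrates over $(0,\infty)^d$ rather than $\mathbb{R}^d$, takes the integral in the SOT rather than weak* sense, and proves density by a Hahn--Banach separation argument rather than your direct mollifier approximation---your route is if anything slightly cleaner, since the $L^1$-estimate $\|\alpha_h(a)\|\le\|h\|_1\|a\|$ even yields norm convergence of the difference quotients.
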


We pause to point out that $\qM^\infty(\mathbb{M})$ can in a very natural way be described as a Fr\'echet space (see the discussion preceding \cite[Theorem 2.2.3]{Br}). It is also an exercise to see that each element of the space $\mathrm{span}\{\delta_k: 0\leq k\leq (d-1)\}$ will map $\qM^\infty(\mathbb{M})$ back into itself.

\begin{proof}
The proof is based on a modification of \cite[Theorem 2.7]{pazy}. Hence at some points we will not give full details, but instead refer the reader to the corresponding argument in \cite{pazy}. 

Let $\mathfrak{C}$ be the space of all complex-valued $C^\infty$ functions on the open cell $(0,\infty)^d$ which are compactly supported. Given any $x\in \mathbb{M}$, we will throughout write $\alpha_x$ for the automorphism corresponding to translation by $x$ in the sense that $\alpha_x(\qM(\cO))=\qM(x+\cO)$. We will further write $t\langle k \rangle$ for the vector with $t\in \mathbb{R}$ in the $k$-th coordinate and 0's elsewhere, and $\varphi_k$ for the partial derivative of $\varphi$ with respect to the $k$-th coordinate.
Given any $a\in \qM(\mathbb{M})$ and $\varphi\in \mathfrak{C}$, then with $\mathbf{s}$ denoting $(s_0,\dots,s_{(d-1)})$, we set
$$a(\varphi)=\int_0^\infty\dots\int_0^\infty\varphi(\mathbf{s})\alpha_{\mathbf{s}}(a)\,ds_0\,ds_1\dots ds_{(d-1)}.$$In view of the assumption regarding the strong operator continuity of the representation $x\to \alpha_x$, this integral converges in the strong operator topology, which in turn ensures that $a(\varphi)\in \qM(\mathbb{M})$. Assuming that $h>0$ it is then an exercise to see that for example
\begin{eqnarray*}
&&\frac{1}{h}(\alpha_{h\langle 0\rangle}(a(\varphi))-a(\varphi))\\ &=& \int_0^\infty\dots\int_0^\infty\tfrac{1}{h}[\varphi(s_0-h,s_1\dots,s_{(d-1)})-\varphi(s_0,s_1\dots,s_{(d-1)})]\alpha_{\mathbf{s}}(a)\,ds_0\,ds_1\dots ds_{(d-1)}\\
&\to& (-1)\int_0^\infty\dots\int_0^\infty\varphi_{s_0}(\mathbf{s})\alpha_{\mathbf{s}}(a)\,ds_0\,ds_1\dots ds_{(d-1)}.
\end{eqnarray*}
(Again convergence is in the strong operator topology.) Hence $a(\varphi)\in \mathrm{dom}(\delta_0)$ with $\delta_0(a(\varphi))= (-1)a(\varphi_{s_0})$. Here we chose the coordinate $k=0$ for no other reason than simplicity of notation. Hence for any $k$ we have that $a(\varphi)\in \mathrm{dom}(\delta_k)$ with $\delta_k(a(\varphi))=(-1)a(\varphi_k)$. This in particular ensures that the space $Y=\mathrm{span}\{a(\varphi): a\in \qM(\mathbb{M}), \varphi\in \mathfrak{C}\}$ is contained in $\qM^\infty(\mathbb{M})$. To conclude the proof, we therefore need to show that $Y$ is weak* dense in $\qM(\mathbb{M})$. 

Suppose that this is not the case. Then by the Hahn-Banach theorem there must exists some non-zero element $\rho$ of 
$(\qM(\mathbb{M}))_*$, which vanishes on $Y$. But each such $\rho$ is strong operator continuous \cite[Theorem III.2.1.4]{Bla}. Hence for each $a\in \qM(\mathbb{M})$ and each $\varphi\in\mathfrak{C}$ we have that 
$$0=\rho(a(\varphi))= \int_0^\infty\dots\int_0^\infty\varphi({\mathbf{s}})\rho(\alpha_{\mathbf{s}}(a))\,ds_0\,ds_1\dots ds_{(d-1)}.$$This in turn ensures that for each $a\in \qM(\mathbb{M})$, the strong operator continuous map $\mathbf{s}\to \rho(\alpha_{\mathbf{s}}(a))$ is identically 0 on the open cell $(0,\infty)^d$. Letting $\mathbf{s}\to (0, \dots,0)$, it follows that $\rho(a)=0$ for all $a\in \qM(\mathbb{M})$. But this contradicts the assumption that $\rho\neq 0$. We must therefore have that $Y$ is weak*-dense in $\qM(\mathbb{M})$. This concludes the proof.
\end{proof} 

Having thus identified the appropriate noncommutative analogues of the partial differential operators $\frac{\partial}{\partial x_i}$ and of $C^\infty(\mathbb{M})$, we are now in a position to identify the objects that may be regarded as quantum ``local flows'' on 
$\mathbb{M}$. We have already noted that for a general manifold $M$, $C^\infty$ local flows are in a 1-1 correspondence with the derivations on $C^\infty(M)$. (Combine \cite[Remark 2.3.11(1)]{Thi} with the comment preceding \cite[Theorem 2.2.24]{Thi}.) Taking 
the insightful work of Bratteli \cite{Br} as a point of reference for further development, one may on this basis propose the space of all weak*-closable weak*-densely defined derivations on $\qM(\mathbb{M})$ that map $\qM^\infty(\mathbb{M})$ back into itself, as the 
quantum analogue of local flows on $\mathbb{M}$. (See for example the introduction to \cite[Chapter 2]{Br}.) We shall denote this space of derivations by $\Delta_{\mathbb{M}}$. Since the space $\qM^\infty(\mathbb{M})$ was constructed using a finite-dimensional space of 
derivations, results like \cite[Theorem 2.3.6]{Br} (due to Batty) suggest that the elements of $\Delta_{\mathbb{M}}$ are not far from being generators of groups of transformations on $\qM(\mathbb{M})$, which ties in well with the classical theory. The local 
correspondence of tangentially conditioned algebras to $\qM(\mathbb{M})$, allows one to then at least on chart-wise basis attempt to lift these ideas from $\qM(\mathbb{M})$ to $\qM(M)$.

\subsection{Quantum graded algebras for $\qM(\mathbb{M})$}
We shall here discuss the potential utility of the pair $(\qM^\infty(\mathbb{M}), \Delta_{\mathbb{M}})$ for the construction of graded algebras, where $\qM^\infty(\mathbb{M})$ is as defined in Theorem \ref{thm4.3}, and $\Delta_{\mathbb{M}}$ as defined in the final paragraph of the previous section. Note that one may also view $\qM^\infty(\mathbb{M})$ as the subalgebra of elements of $\qM(\mathbb{M})$ which are ``smooth'' with respect to the action of the translation group. The discussion at the end of the previous subsection, shows that the space of derivations $\Delta_{\mathbb{M}}$ in some sense provides the technology for giving a chart-wise description of the ``quantum local flows'' of a given tangentially conditioned algebra $\qM(M)$. We have in addition seen that under mild restrictions, the space $\qM^\infty(\mathbb{M})$ is weak* dense in $\qM(\mathbb{M})$. Hence a quantum graded algebra of differential forms for $\mathbb{M}$ constructed using these objects, should in principle be ``chart-wise'' relevant for tangentially conditioned algebras $\qM(M)$. 

Using the work of Michel du Bois-Violette as a template (see the excellent review in \cite{dje}), one may now construct a graded algebra of differential forms from the pair $(\qM^\infty(\mathbb{M}), \Delta_{\mathbb{M}})$. The actual construction of such a quantum graded algebra, can be done exactly as in section 2.5 of \cite{dje}, with the only difference being that we replace the pair $(\qM, \mathrm{Der}_{\qM})$ used by Djemai, et al, by the pair $(\qM^\infty(\mathbb{M}), \Delta_{\mathbb{M}})$. All other ingredients remain exactly the same. We pause to justify this replacement before going on to explain why this framework is sufficient for the construction to go through.

Recall that by assumption the translation automorphisms are induced by a strongly continuous unitary group acting on the underlying Hilbert space. Using this fact, one is able to conclude that each of the derivations $\delta_k$ is of the form $\delta_k(a)= i[H_k,a]$. For the time variable $H_0$ is just the Hamiltonian, with $H_k$ being a momentum operator for $k= 1,\dots, (d-1)$. These operators are all necessarily unbounded, and hence so are each of the $\delta_k$'s. These are therefore clearly not defined on all of 
$\qM(\mathbb{M})$, and hence if we want a model incorporating the information encoded in the $\delta_k$'s, we cannot a priori insist on 
using $\qM(\mathbb{M})$ in the construction, as this will exclude these operators. However each $\delta_k$ is defined everywhere on a smooth part of the algebra, namely $\qM^\infty(\mathbb{M})$. The replacement of $\qM(\mathbb{M})$ with $\qM^\infty(\mathbb{M})$, therefore allows one to incorporate the $\delta_k$'s into the picture. In fact Djemai himself reveals an implicit concern for ``smoothness'', when at the start of \cite[Subsection 2.5.2]{dje} he points out that the constructs described at that point may be applied to $C^\infty(M)$.

For the readers who are concerned about the validity of the claim that the construction in \cite[\S 2.5]{dje} carries over to the pair $(\qM^\infty(\mathbb{M}), \Delta_{\mathbb{M}})$, we hasten to point out that the construction in \cite[\S 2.5]{dje} is entirely algebraic, and that all we in principle need is a $*$-algebra, and a space of derivations on that algebra which admits a left-module action of that algebra, and that we do have. We pause to further justify the claim that a $*$-algebra rather than a $C^*$-algebra will suffice. Note for example that although the author invokes the ``topological'' tensor product at the start of section 2.2 of \cite{dje}, the algebraic tensor product will do just as well for this part of the construction. Note further that in the construction described in sections 2.1-2.5 of \cite{dje}, there are three crucial ingredients. These are Property 1 on page 808, Property 2 on page 808, and Proposition 3 on page 809. Although full details are not given in the actual text of \cite{dje}, it can be seen from \cite[Chapter III, \S X]{bour} that Property 1 is a purely algebraic property. For the other two aspects Djemai cites \cite{Con} as a reference. In that paper Connes announces a 7 step programme (on p 264), with the bulk of \cite{Con} devoted to step II. This is the part required by the construction in \cite{dje}. But as can be seen from the declaration at the top of page 262 of \cite{Con}, the content of \cite[Part II]{Con} is purely algebraic! Readers that have some concern that at some point Part II of \cite{Con} has a hidden reliance on the more topological Part I, will be reassured by the discussion on page 310, where Alain Connes describes the relationship between Parts I and II. 

The above discussion leads us to the following conclusion:

\begin{theorem} The algebra $\qM(\mathbb{M})$ admits the construction of a smooth quantum graded algebra of differential forms which on a chart-wise basis models the action of smooth quantum local flows on a tangentially conditioned algebra $\qM(M)$.
\end{theorem}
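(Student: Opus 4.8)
The strategy is to recognise the statement as the endpoint of the preceding analysis: once one has isolated a genuinely algebraic pair $(\qM^\infty(\mathbb{M}),\Delta_{\mathbb{M}})$ of the type consumed by the derivation-based calculus of du Bois-Violette, the construction of \cite[\S 2.5]{dje} may be transcribed verbatim with $(\qM,\mathrm{Der}_{\qM})$ replaced by this pair. Accordingly, the first step is to confirm that $\qM^\infty(\mathbb{M})$ supplies an admissible base $*$-algebra. By the density theorem proved above, $\qM^\infty(\mathbb{M})$ is a weak* dense unital $*$-subalgebra of $\qM(\mathbb{M})$, and by its very definition it is contained in $\mathrm{dom}(\delta_{\pi(1)}\cdots\delta_{\pi(k)})$ for every finite multi-index; in particular each $\delta_k$, and hence every element of $\mathrm{span}\{\delta_k:0\le k\le d-1\}$, maps $\qM^\infty(\mathbb{M})$ back into itself. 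Thus $\qM^\infty(\mathbb{M})$ furnishes a common invariant core on which all the derivations in view act everywhere and without leaving the algebra.

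Second, I would equip $\Delta_{\mathbb{M}}$ with the Lie-and-module structure that the construction requires. On the invariant core $\qM^\infty(\mathbb{M})$ one checks that for $X,Y\in\Delta_{\mathbb{M}}$ the commutator $[X,Y]=XY-YX$ is again a weak*-closable, weak*-densely defined $*$-derivation mapping $\qM^\infty(\mathbb{M})$ into itself, so that $\Delta_{\mathbb{M}}$ is a Lie algebra, and that $\Delta_{\mathbb{M}}$ carries the left-module action over $\qM^\infty(\mathbb{M})$ used by \cite[\S 2.5]{dje}. \textbf{This is where the one genuine difficulty resides.} Since the translation automorphisms are implemented by a strongly continuous unitary group, each $\delta_k$ is of the form $\delta_k(a)=i[H_k,a]$ with $H_k$ (the Hamiltonian for $k=0$, a momentum operator otherwise) unbounded; consequently the $\delta_k$, their composites and their brackets are merely densely defined, and the naive manipulations of the algebraic calculus would run into domain pathologies on $\qM(\mathbb{M})$ itself. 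The point of passing to $\qM^\infty(\mathbb{M})$ is precisely to defuse this: because $\qM^\infty(\mathbb{M})$ lies in the domain of every finite composite of the $\delta_k$'s, all compositions, brackets and Leibniz expansions demanded by the construction are everywhere defined and stay inside $\qM^\infty(\mathbb{M})$.

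Third, with these ingredients secured I would run the du Bois-Violette construction exactly as in \cite[\S 2.5]{dje}. One forms the graded algebra $\Omega^{\bullet}=\bigoplus_{n\ge 0}\Omega^{n}$ whose degree-$n$ component consists of the antisymmetric multilinear maps $\Delta_{\mathbb{M}}^{\times n}\to\qM^\infty(\mathbb{M})$ taken in the sense of \cite[\S 2.5]{dje} (with $\Omega^{0}=\qM^\infty(\mathbb{M})$), equips it with the product inherited from the algebra, and defines the differential by the Koszul formula
\[
(d\omega)(X_0,\dots,X_n)=\sum_{i=0}^{n}(-1)^{i}X_i\,\omega(X_0,\dots,\widehat{X_i},\dots,X_n)+\sum_{i<j}(-1)^{i+j}\omega([X_i,X_j],\dots,\widehat{X_i},\dots,\widehat{X_j},\dots).
\]
That $d^2=0$ and that $d$ obeys the graded Leibniz rule are exactly the assertions encoded in Property 1 and Property 2 (p.\ 808) and Proposition 3 (p.\ 809) of \cite{dje}; as noted in the discussion preceding the theorem these are purely algebraic, being established in \cite[Chapter III, \S X]{bour} and in the algebraic \cite[Part II]{Con}, so they transfer unchanged to our $*$-algebra with its module of derivations. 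No topological (i.e.\ $C^*$) input is needed.

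Finally I would establish the chart-wise modelling clause. Theorem \ref{dspace} shows that $\mathrm{span}\{\delta_k:0\le k\le d-1\}$ realises the infinitesimal generators of the translation automorphisms, so $\Delta_{\mathbb{M}}$ genuinely encodes the quantum local flows on $\mathbb{M}$, while the remarks taking Bratteli's \cite{Br} as a template justify $\Delta_{\mathbb{M}}$ as the proposed analogue of the space of local flows. For a tangentially conditioned $\qM(M)$ (Definition \ref{tancon}, Theorem \ref{1.4}) each chart carries a $*$-isomorphism $\beta_p$ from $\qM(\cO_p)$ onto $\qM(\tilde{\cO}_0)\subset\qM(\mathbb{M})$ intertwining the local dynamics with the translation dynamics; transporting $\Omega^{\bullet}$ through $\beta_p$ therefore produces, on each chart, a smooth quantum graded algebra of differential forms modelling the smooth quantum local flows on $\qM(M)$. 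Assembling these chart-wise models yields the asserted construction, completing the proof.
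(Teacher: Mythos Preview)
Your proposal is correct and follows essentially the same approach as the paper: the theorem in the paper is not given a separate proof but is stated as the conclusion of the discussion in \S 3.4, and you have faithfully recapitulated that discussion --- identifying the pair $(\qM^\infty(\mathbb{M}),\Delta_{\mathbb{M}})$, justifying the passage from $\qM(\mathbb{M})$ to $\qM^\infty(\mathbb{M})$ via the unboundedness of the $\delta_k=i[H_k,\cdot]$, invoking the purely algebraic nature of the du Bois-Violette construction (Property~1, Property~2, Proposition~3 of \cite{dje}, with references to \cite{bour} and \cite[Part II]{Con}), and then appealing to tangential conditioning for the chart-wise transfer. If anything you are slightly more explicit than the paper (writing out the Koszul formula and flagging the Lie/module structure on $\Delta_{\mathbb{M}}$ as the delicate point), but the route is the same.
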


In closing we wish to point out that there is a very comprehensive theory of the derivational approach to noncommutative differential geometry. Our goal in this paper was to demonstrate how one aspect of this theory may be incorporated into the theory of local algebras. There is surely more that can be done in this regard, but that is not the concern of the present paper.

\section{Conclusions.}
In the present paper we have continued our study on the new approach based on Orlicz spaces to the analysis of large systems, i.e. systems having an infinite number of degrees of freedom, see \cite{LM}, \cite{LM2}, \cite{ML}. Having shifted our focus to quantum field theory, we here show that this new strategy initially developed for quantum statistical mechanics, can potentially in a very natural and elegant way be applied to quantum field theory, obviously with suitable modifications. It is important to note that the modifications necessary for the application of this strategy to quantum field theory, are drawn from the very basic ingredients of the principle of relativity; see Corollary \ref{1}.

Having thus established the static setting of quantum fields, a full analysis of physical laws and the mathematical equations describing them, also demands differential structures. By way of example one may note that even Maxwell equations fit naturally into differential geometry; specifically the calculus on manifolds. In addressing this issue, we have shown that using a modified du Bois-Violette approach to non-commutative differential geometry, the action of the Poincar\'e group on tangentially conditioned local algebras, allows for the construction of graded algebras of differential forms for these algebras. However, as was hinted at at the end of the previous section, the theory of quantum local flows on local algebras is incomplete.  
\newpage


\begin{thebibliography}{99}

\bibitem{AIT} J-P Antoine, A. Inoue, C. Trapani \textit{Partial $^*$-Algebras and Their Operator Realizations} Mathematics and Its Applications, vol. \textbf{553}, Kluwer, Dordrecht, NL, 2002

\bibitem{araki} H. Araki, \textit{Mathematical Theory of Quantum Fields}, Oxford University Press

\bibitem{Bag} F. Bagarello, Algebras of unbounded operators and physical applications: A survey. \textit{Rev. Math. Phys.} \textbf{19} 231 (2007).

\bibitem{BaF} C B\"ar \& K Fredenhagen (Editors), \textit{Quantum Field Theory for curved spacetimes :Concepts and mathematical foundations}, Lecture Notes in Physics Vol 178, Springer-Verlag, Berlin-Heidelberg, 2009.

\bibitem{BS} G Bennet and R Sharpley, \textit{Interpolation of Operators}, Academic Press, London, 1988.

\bibitem{BW1} J. J. Bisognano, E. H. Wichmann, On the duality condition for Hermitean scalar fields, \textit{J. Math. Phys.} \textbf{16}, 985 (1975)

\bibitem{BW2} J. J. Bisognano, E. H. Wichmann, On the duality condition for quantum fields, \textit{J. Math. Phys.} \textbf{17}, 303 (1976).

\bibitem{Bla} B Blackadar, \textit{Operator Algebras}, Springer, 2006.

\bibitem{Bor1} H. J. Borchers, Algebraic aspects of Wightman quantum field theory in \textit{International Symposium on Mathematical Methods in Theoretical Physics}, ed. H. Araki, Lecture Notes in Mathematics, vol. \textbf{39}, Springer, (1975) pp 283-292.

\bibitem{Bor3} H. J. Borchers, Energy and momentum as observables in quantum field theory, \textit{Commun. Math. Phys.} \textbf{2} 49-54 
(1966)

\bibitem{Bor2} H. J. Borchers, On revolutionizing quantum field theory with Tomita's modular theory, \textit{J. Math. Phys.} \textbf{41}, no. 6, 3604-3673 (2000).

\bibitem{BY} H. J. Borchers, J. Yngvason, Positivity of Wightman functionals and the existence of local nets, \textit{Commun. Math. Phys.} \textbf{127} 607-615 (1990)

\bibitem{bour} N Bourbaki, \textit{Alg\`ebre I}, Hermann, Paris, 1970

\bibitem{Br} O Bratteli, \textit{Derivations, Dissipations and Group Actions on $C^\ast$-algebras}, Springer-Verlag, Berlin-Heidelberg-New York, 1986.

\bibitem{BR} O. Bratteli, D. Robinson, \textit{Operator algebras and Quantum Statistical Mechanics }, Texts and Monographs in Physics, Springer Verlag; vol. I, 1979; vol. II, 1981

\bibitem{BF} R Brunetti, K Fredenhagen, Algebraic Approach to Quantum Field Theories. in: \textit{Encyclopedia of Mathematical Physics}, Amsterdam: Elsevier, 2007. - ISBN: 9780125126601 

\bibitem{BFV}  R Brunetti, K Fredenhagen, R Verch, The generally covariant locality principle-a new paradigm for local quantum field theory, \textit{Comm. Math. Phys.} \textbf{237} (2003) no. 1-2, 31--68.

\bibitem{buch1} D. Buchholz, On quantum fields that generate local algebras, \textit{J. Math. Phys.} \text{31} 1839-1846 (1990)

\bibitem{BH} D. Buchholz, R. Haag, The quest for understanding in relativistic quantum physics, \textit{J. Math. Phys.} \textbf{41}, 3674-3697 (2000).

\bibitem{Con} A Connes, Noncommutative differential geometry, \textit{Inst. Hautes Etudes Sci. Publ. Math.}, No. 62 (1985), 257--360.

\bibitem{D1} J. Dimock, Algebras of local observables on a manifold, \textit{Comm. Math. Phys.} \textbf{77} (1980), 219--228.

\bibitem{D2} J. Dimock, Dirac quantum fields on a manifold, \textit{Trans. Amer. Math. Soc.} \textbf{269} (1982) no. 1, 133--147.

\bibitem{dje} A E F Djemai, Introduction to Dubois-Violette's noncommutative differential geometry, \textit{International Journal of Theoretical Physics} \textbf{34} (1995) No. 6, 801--887.

\bibitem{DSW} W. Driessler, S. J. Summers, E. H. Wichmann, On the connection between quantum fields and von Neumann algebras of local operators, \textit{Commun. Math. Phys.} \textbf{105}, 49-84 (1986)

\bibitem{DF} W. Driessler and J. Fr\"ohlich, The reconstruction of local observable from the euclidean Green's functions of relativistic quantum field theory, \textit{Annales de l'I.H.P., section A} \textbf{27}(3)(1977), 221-236.

\bibitem{FK} T. Fack, H. Kosaki, Generalized $s$-numbers of $\tau$-measurable operators, \textit{Pac. J. Math.} \textbf{123} (1986), 269-300

\bibitem{GQ} J. Gallier and J. Quaintance, \textit{Notes on Differential Geometry and Lie Groups, I \& II}, in progress.(See \verb+ http://www.cis.upenn.edu/~jean/gbooks/manif.html +)

\bibitem{GW1} L. G\"arding, A. S. Wightman, Representation of the anticommutation relations, \textit{Proc. Nat. Acad. Sci.} \textbf{40}, 617 (1954)

\bibitem{GW2} L. G\"arding, A. S. Wightman, Representation of the commutation relations, \textit{Proc. Nat. Acad. Sci.} \textbf{40}, 622 (1954)

\bibitem{Gol} S Goldstein, Conditional expectation and stochastic integrals in non-commutative $L^p$ spaces', \textit{Math Proc Camb Phil Soc} \textbf{110}(1991), 365--383

\bibitem{haag} R. Haag, \textit{Local Quantum Physics. Fields, Particles, Algebras}, Springer, 2nd edition, 1996

\bibitem{uffe1} U. Haagerup, Normal weights on W*-Algebras, \textit{J. Funct. Anal.} \textbf{19} 302-317 (1975)

\bibitem{uffe-DW1} U. Haagerup, On the dual weights for crossed products of von Neumann algebras I, \textit{Math. Scandinavica} 
\textbf{43} (1979), pp. 99 - 118

\bibitem{uffe2} U. Haagerup, Operator valued weights in von Neumann algebras: I, \textit{J. Funct. Anal.} \textbf{32} 175-206 (1979)

\bibitem{uffe3} U. Haagerup, Operator valued weights in von Neumann algebras: II, \textit{J. Funct. Anal.} \textbf{33} 339-361 (1979)

\bibitem{HJX} U. Haagerup, M. Junge, Q. Xu, A reduction method for noncommutative $L_p$-spaces and applications, \textit{TAMS}, \textbf{362} 2125-2165 (2010)

\bibitem{hor} S.S. Horuzhy, \textit{Introduction to Algebraic Quantum Field Theory}, Springer, 1990.

\bibitem[KRi] {KRi} RV Kadison and JR Ringrose, \textit{Fundamentals of the Theory of Operator Algebras: Vol 2}, Academic Press, New York, 1986.

\bibitem{kos} H. Kosaki, Applications of complex interpolation method to a von Neumann algebra (Non-commutative $L^p$-spaces) \textit{J. Funct. Anal.} \textbf{56} 29-78 (1984)

\bibitem{L} LE Labuschagne, A crossed product approach to Orlicz spaces, \textit{Proc LMS} \textbf{107} (3) (2013), 965-1003.

\bibitem{LM} LE Labuschagne, WA Majewski, Maps on non-commutative Orlicz spaces, \textit{Illinois J. Math}. {\bf 55}, 1053-1081, (2011)

\bibitem{LM2} LE Labuschagne and WA Majewski, Quantum dynamics on Orlicz spaces, arXiv:1605.01210 [math-ph].

\bibitem{ML} WA Majewski, LE Labuschagne, On applications of Orlicz spaces to Statistical Physics, \textit{ Ann. H. Poincare.}, {\bf 15}, 1197-1221, (2014)

\bibitem{ML2} W. A. Majewski, L. E. Labuschagne, On Entropy for general quantum systems (arXiv:1804.05579 [math-ph]).

\bibitem{Maj1} W. A. Majewski, L. E. Labuschagne, Why are Orlicz spaces useful for Statistical Physics? in \textit{Noncommutative Analysis, Operator Theory and Applications;} Eds. D. Alpay et al. Birkhauser-Basel, Series: Linear Operators and Linear Systems, vol 252, 271-283 (2016)

\bibitem{studyguide} W. A. Majewski, On quantum statistical mechanics; A study guide,  \textit{Adv. Math. Phys} , Article ID 9343717 (2017);  arXiv 1608.06766v2 [math-ph]

\bibitem{nelson} E. Nelson, Notes on non-commutative integration, \textit{J. Funct. Anal.} \textbf{15} (1974), 103.

\bibitem{Nel} E. Nelson, Construction of Quantrum Fields from Markov  Fields, \textit{J. Funct. Anal.} \textbf{12}(1973), 211.

\bibitem{pazy} A Pazy, Semigroups of linear operators and applications to partial differential equations, Springer-Verlag, New York, 1983

\bibitem{PT} G. Pedersen and M. Takesaki, The Radon-Nikodym theorem for von Neumann algebras, \textit{Acta. Math.}\textbf{130}, 53-87, (1973)

\bibitem{RSI} M. Reed, B. Simon, \textit{Methods of Modern Mathematical Physics. I. Functional Analysis} Academic Press, New York and London, 1972.

\bibitem{RS2}  M Reed \& B Simon, \textit{Fourier Analysis, Self-Adjointness : Methods of Modern Mathematical Physics, Vol. 2 (1st Ed)}, Academic Press, 1975.

\bibitem{ReSch} H. Reeh, S. Schlieder, Bemerkungen zur Unit\"ar\"aquivalenz von Lorentzinvarianten Feldern, \textit{Nuovo Cim.} \textbf{19} 787-793 (1961)

\bibitem{Sak} S. Sakai, \textit{Operator Algebras in Dynamical Systems}, Cambridge University Press, Cambridge, 1991.

\bibitem{Schmud} K. Schm{\"u}dgen, \textit{Unbounded Operator Algebras and Representation Theory} Akademie-Verlag, Berlin, 1990

\bibitem{Segal} I. E. Segal, Postulates for General Quantum Mechanics, \textit{Ann. Math.} \textbf{48} 930-948 (1947).

\bibitem{Simon} B. Simon, \textit{The $P(\phi)_2$ Euclidean (Quantum) Field Theory}, Princeton University Press, Princeton, New Jersey, 1974

\bibitem{Tak} M Takesaki, \textit{Theory of Operator Algebras, Vol I, II, III}, Springer, New York, 2003.

\bibitem{terp} M. Terp, {\it $L^p$ spaces associated with von Neumann algebras}. K{\o}benhavs Universitet, Mathematisk Institut, Rapport No 3a (1981).

\bibitem{Thi} W. Thirring, {\it A Course in Mathematical Physics I: Classical Dynamical Systems (1st edition)}, Springer-Verlag, 1978.

\bibitem{Wal} R. M. Wald, {\it General Relativity}, University of Chicago Press, 1984.

\bibitem{weh} A. Wehrl, General properties of entropy, {\em Rev. Mod. Phys.}, {\bf 50}, 221-260, 1978
 
\bibitem{W} A. S. Wightman, \textit{Quelque probl\`emes math\'ematique de la th\'eorie quantique relativiste}. In: Lecture Notes, 
Facult\'e de Sciences. Univ. de Paris, 1957.

\bibitem{Y} J. Yngvason, The role of type III factors in Quantum Field Theory, \textit{Rep. Math. Phys.} \textbf{55} 135-147 (2005)
\end{thebibliography}
\end{document}